\documentclass{IEEEtran}
\usepackage{graphicx}
\usepackage{comment}
\usepackage{enumitem}
\usepackage{amsmath,amsthm, amsfonts,amssymb,epsfig,epstopdf}
\usepackage[linesnumbered, vlined, ruled]{algorithm2e}
\usepackage{url}
\usepackage{soul}
\usepackage[usenames,dvipsnames]{color}
\newtheorem{observation}{Observation}[section]
\newtheorem{lemma}{Lemma}[section]
\newtheorem{corollary}{Corollary}[section]
\newtheorem{fact}{Fact}[section]
\newtheorem{prop}{Proposition}[section]

\newtheorem{definition}{Definition}[section]
\newtheorem{theorem}{Theorem}[section]
\graphicspath{{./results/}{./images/}}

\usepackage{subfigure,epsfig,epstopdf}
\usepackage{xcolor}
\usepackage{xfrac}
\usepackage{wrapfig,graphicx}
\usepackage{enumitem}
\usepackage{algorithmic}
\usepackage{booktabs}

\setlength{\marginparwidth}{1.3cm}

\usepackage{etoolbox}
\let\bbordermatrix\bordermatrix
\patchcmd{\bbordermatrix}{8.75}{4.75}{}{}
\patchcmd{\bbordermatrix}{\left(}{\left[}{}{}
\patchcmd{\bbordermatrix}{\right)}{\right]}{}{}

 %
\def\psiAPR{\psi^{\mbox{\tiny AR}}}
\def\psiCPR{\psi^{\mbox{\tiny CR}}}
\def\psiTREE{\psi^{\mbox{\tiny SS}}}
\def\psiFOREST{\psi^{\mbox{\tiny MS}}}


\usepackage{tikz}
\newcommand{\LD}{\langle}
\newcommand{\RD}{\rangle}

\begin{document}
\title{On Fundamental Bounds of Failure Identifiability by Boolean Network Tomography}
\author{ 
Novella Bartolini, {\em Senior Member IEEE},
Ting He, {\em Senior Member IEEE}, \\
Viviana Arrigoni,
Annalisa Massini, and
Hana Khamfroush
       \IEEEcompsocitemizethanks{
\IEEEcompsocthanksitem 
N. Bartolini, V. Arrigoni and A. Massini are with the Department
of Computer Science, Sapienza University of Rome, Italy. 
E-mail: \{bartolini,arrigoni,massini\}@di.uniroma1.it
\IEEEcompsocthanksitem T. He is with the Department of Computer Science and Engineering, Pennsylvania State University, USA.  
E-mail: tzh58@cse.psu.edu.
 \IEEEcompsocthanksitem H. Khamfroush is with the Department of Computer Science, University of Kentucky, USA.
 E-mail: khamfroush@cs.uky.edu 
%
\IEEEcompsocthanksitem This work was supported by the Defense Threat Reduction Agency under the
grant HDTRA1-10-1-0085, and by NATO
 under
the SPS grant G4936 SONiCS.
}}

\maketitle

\begin{abstract}
Boolean network tomography is a powerful tool to infer
the state (working/failed) of individual nodes from path-level measurements obtained by egde-nodes. 
We consider the problem of optimizing the capability of identifying network failures 
through the design of monitoring schemes. Finding an optimal solution is NP-hard  and a large body of work has been devoted to heuristic approaches providing lower bounds. 
Unlike previous works, we provide upper bounds on the maximum number of identifiable nodes, given the number of monitoring paths and different constraints on the network topology, the routing scheme, and the maximum path length.
These upper bounds
represent a fundamental limit on  identifiability of failures via Boolean network tomography.
Our analysis provides insights 
on how to design topologies and related monitoring schemes to achieve the maximum identifiability under various network settings. Through analysis and experiments we demonstrate the tightness of the bounds and efficacy of the design insights for engineered as well as real networks.
 \end{abstract}

\section{Introduction and motivation}
\label{sec:intro}
The capability to assess the states of network nodes in the presence of failures is fundamental for many functions in network management, including performance analysis, route selection, and network recovery. In modern networks, the traditional approach of relying on built-in mechanisms to detect node failures is no longer sufficient, as bugs and configuration errors in various customer software and network functions often induce ``silent failures'' that are only detectable from end-to-end connection states
\cite{Kompella07infocom}.
\emph{Boolean network tomography} \cite{Duffield03} is a powerful tool to infer the states of individual nodes of a network from binary measurements taken along selected paths. 
We consider the problem of Boolean network tomography in the framework of graph-constrained group testing~\cite{Cheraghchi:TOI2012}.
Classic group testing \cite{Dorfman:annals,Atia:TOIT2012}   studies  the  problem of identifying defective items in a large set $S$ by means of binary measurements taken on subsets $S_i \subseteq S$ ($i=1, \ldots, m$).
Close to the problem of group testing, Boolean network tomography  aims at identifying defective network items, i.e. nodes or links,  in a large set $S$ including all the network components,
by performing binary measurements over subsets $S_i$, i.e., monitoring paths. As in graph-based group testing, the composition of the testing sets  conforms to the structure of the network. 
In this regard, Cheraghchi et al. \cite{Cheraghchi:TOI2012}   studied graph-constrained group testing with the goal of minimizing the number of monitoring paths needed to identify the state (defective or normal) of all network nodes, under the assumption that the maximum number of defective nodes is given.
In their work, paths are defined by random walks in the graph, and the authors give upper bounds on the number of paths needed.


In our work, we tackle the problem of maximizing the number of nodes whose states can be uniquely determined from binary measurements on a given number of monitoring paths.
Unlike \cite{Cheraghchi:TOI2012}, we consider that 
monitoring paths are constrained not only by the network topology, but also by the routing scheme adopted in the network, and by additional requirements in case of passive monitoring, i.e. monitoring paths coinciding with some service related paths.

Due to the inherent hardness in computing the exact maximum value, we focus on deriving easily computable upper bounds which allow us to: (i) evaluate the room of improvement for a given monitoring scheme in a specific network setting, and (ii) extract rules for network design to maximize the number of identifiable nodes in a general setting.

The main contributions of this work are the following:
\begin{itemize}
\item
We upper-bound the maximum number of identifiable nodes with  a given number of monitoring paths, in the following scenarios: (1)
paths between arbitrary nodes under arbitrary routing (Theorem~\ref{th:exp_max});
(2)
paths between arbitrary nodes under consistent routing (Theorem \ref{th:bound_consistent_routing});
{\color{black}{(3)
paths between arbitrary nodes under partially consistent routing (Theorem \ref{th:bound_hcr});}}
(4) paths from a single server to multiple clients under consistent routing (Theorem \ref{th:bound_with_tree});
(5) paths from multiple servers to multiple clients with fixed/flexible assignment under consistent routing (Theorems \ref{th:bound_with_services} and  \ref{th:undistinguished_clients_rev}).
\item We give insights on the design of topologies and  monitoring schemes to approximate the bounds, grounded upon the bound analysis.
\item We demonstrate the tightness of the  upper bounds by providing constructive approaches and comparisons with the results of known heuristics \cite{usICDCS16} on engineered as well as real network topologies.
\item We compare the bounds in different scenarios to evaluate the impact of the routing scheme, the number of monitoring paths, and the maximum path length on the number of identifiable nodes.
\end{itemize}

\section{Related work}
\vspace{-.cm}
Pioneered by Duffield  \cite{Duffield03}, Boolean network tomography has direct applications in network failure localization. The early works focused on best-effort inference. For example, Duffield et al. \cite{Duffield03,Duffield06TInfo} and Kompella et al. \cite{Kompella07infocom} aimed at finding the minimum set of failures that can explain the observed measurements, and Nguyen et al. \cite{Nguyen07infocom} aimed at finding the most likely failure set that explains the observations.
Later, the identifiability problem attracted attention. Ma et al. characterized in \cite{Ma&etal:14IMC} the maximum number of simultaneous failures that can be uniquely localized, and then extended the results in \cite{Ma16TON} to characterize the maximum number of failures under which the states of specified nodes can be uniquely identified as well as the number of nodes whose states can be identified under a given number of failures.
{\color{black}{Galesi et al. \cite{Galesi}
study upper and lower bounds on the maximum identifiability index of a topology, i.e. the maximum number of simultaneous failures under which the monitoring system is still capable of identifying the state of all network nodes.
These studies are orthogonal to ours, as we aim at bounding the number of identifiable nodes, within a given identifiability index.
The related optimization problems have also been studied.
The problem of optimally placing monitors to detect failed nodes via round-trip probing was introduced and proven to be NP-hard  by Bejerano et al. in \cite{Bejerano03INFOCOM}.
The work by Cheraghchi et al. \cite{Cheraghchi:TOI2012} aimed at determining the minimum number of monitoring paths to uniquely localize a given number of failures, under the assumption that any path can be monitored. For monitoring paths that start/end at monitors, Ma et al. \cite{Ma15PE} proposed polynomial time heuristics to deploy a minimum number of monitors to uniquely localize a given number of failures under various routing constraints.
When monitoring is performed at the service layer, He et al. \cite{usICDCS16} proposed service placement algorithms to maximize the number of identifiable nodes by monitoring the paths connecting clients and servers.

Boolean network tomography is not to be confused with robust network tomography, which aims at inferring fine-grained performance metrics (e.g., delays) of non-failed links under failures. For robust network tomography, Tati et al. \cite{Tati14ICDCS} proposed a path selection algorithm to maximize the expected rank of successful measurements subject to random link failures, and Ren et al. \cite{Ren:Infocom16}  proposed algorithms to determine which link metrics can be identified and where to place monitors to maximize the number of identifiable links, subject to a bounded number of link failures.
Robust network tomography has also been studied under settings not limited to failures  \cite{ting1,ting2} to study the identifiability of additive link metrics under topology changes. 
%
%


Our work  addresses the problem of maximizing the number of identifiable nodes under failures. It extends a 
 previous work \cite{BartINFOCOM2017} with improved bounds, new design techniques and characterization of monitoring topologies. 
 
\section{Problem formulation}\label{sec:Problem Formulation}
Throughout the paper we use the definitions given in Table~\ref{tab:notation}, and we use the short forms {\em wrt} for "with respect to" and {\em iff} for "if and only if".
We model the  network as an undirected graph $\mathcal{G} = (V,\: E)$, where $V$ is a set of  nodes,
and $E$ is the set of 
links.
According to the needs of the discussion, a path $p$ defined on $G$ is represented as either a \emph{set} of nodes $p$, or as an ordered \emph{sequence}
 of nodes $\hat{p}$, from one endpoint to the other.
 Each node may be in working or failed state.
  The state of a path is working if and only if all traversed nodes (including endpoints) are in working state.
Without loss of generality, we assume that links do not fail and model  network links through logical nodes so that a link failure corresponds to the failure of a logical node.
The set of \emph{all} failed nodes,  denoted by $F \subseteq V$, defines the state of a network, and is  called \emph{failure set}.

{\footnotesize{
\begin{table} [h]
\begin{center}
\begin{tabular}{| c | l |}
\hline
 \bfseries Notation & \bfseries Description \\ 
\hline 
$T$ & Testing matrix, $T \in \{ 0,1\}^{m \times n}$, for $m$ paths and $n$ nodes \\  
$P$ & Set of $m$ monitoring paths $P=\{ p_1, \ldots, p_m\}$\\
$p$, $\hat{p} \in P$ & monitoring path as a set or a list of nodes, respectively \\
 $b(v)$ & Boolean encoding of node $v$  wrt $P$\\
 $b(v)|_i$ & $i$-th element of $b(v)$ (equal to 1 iff  $v \in p_i$, to 0 otherwise)\\ 
 $\chi(v)$ & Crossing number of node $v$  wrt a set of paths $P$\\  
 $P_{F}$ & Incident set of paths  of a failure set $F$\\
 $\mathcal{I}(p)$ & Set of identifiable nodes traversed by path $p$\\
 $M(\hat{p})$ & Path matrix of path $\hat{p}$ \\
 $\mathcal{B}$ & Set of all the node encodings in $\{0,1 \}^m$, with $m$ paths\\
$\mathcal{B}|_i \subset \mathcal{B}$ & $\{ b \in \mathcal{B},\ s.t.\ b|_i=1 \}$, $i=1, \ldots, m$\\
$\mathcal{B}(k) \subset \mathcal{B}$ & $\{ b \in \mathcal{B},\ s.t.\ \sum_{i=1}^m b|_i=k \}$, $k=1, \ldots, m$\\
$\ell_i(B)$ & $\ell_i(B)= |B\cap\mathcal{B}|_i|$, where $B \subseteq \mathcal{B}$\\
 \hline
\end{tabular}
\end{center}
 \caption{Notation table} \label{tab:notation}
\end{table}
}}
We assume that node states cannot be measured directly, but only indirectly via \emph{monitoring paths}.
Let $P=\{p_1, p_2, \ldots ,p_m \}$ be a given set of $m$ monitoring paths. 
We call the {\em incident set} of $v_i$ the set of paths affected by the failure of node $v_i$ and denote it with $P_{v_i}$. 
{\color{black}We  define with $\chi(v_i) \triangleq |P_{v_i}|$, the {\em crossing number} of node $v_i$, which is the number of monitoring paths traversing $v_i$, i.e., the cardinality of its incident set.}
We also 
denote the incident set of paths of a failure set $F$ with \begin{math}P_F\triangleq \cup_{v_i \in F} P_{v_i}\end{math}.

The {\em testing matrix} $T$ is an $m \times n$ matrix, whose element $T|_{i,j}=1$ if 
node $v_j$ is traversed by path $p_i$, i.e., 
$v_j \in p_i$, and zero otherwise.
The $j$-th column of the test matrix $ T|_{*,j}$ is the characteristic vector\footnote{A {\em characteristic vector} of a subset $S$ of an ordered set of $n$ elements $V=\{v_1, v_2, \ldots, v_n\}$ is a binary vector with `1' only in the positions of the elements of $V$ that are included in $S$.} of $P_{v_j}$, hereby denoted with $b(v_j) \triangleq T|_{*,j}$ and called the  {\em binary encoding} of~$v_j$. Note that multiple nodes may have the same binary encoding.
{\color{black}
\begin{observation}
\label{obs:norm_of_b}
Consider a node $v$, and a set $P=\{p_1, \ldots, p_m \}$ of monitoring paths.
It holds that $v \in p_i$ iff the $i$-th element of its binary encoding is equal to 1, i.e., $b(v)|_i=1$;  consequently, the crossing number $\chi(v)$  is equal to the number of ones in the binary  encoding of $v$, namely $\chi(v)=
\sum_{i=1}^m b(v)|_i$.
\end{observation}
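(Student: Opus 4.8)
The plan is to derive the statement directly from the definitions, since both claims are essentially a reformulation of how the testing matrix and the binary encoding were introduced. First I would unfold the definition of $b(v)$: by construction $b(v_j)\triangleq T|_{*,j}$ is the $j$-th column of the testing matrix, so its $i$-th component satisfies $b(v_j)|_i = T|_{i,j}$. Recalling that $T|_{i,j}=1$ exactly when $v_j$ is traversed by $p_i$, i.e. $v_j\in p_i$, and $T|_{i,j}=0$ otherwise, we immediately obtain the first part: $v\in p_i$ if and only if $b(v)|_i=1$.

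For the second part I would start from the definition of the crossing number, $\chi(v)\triangleq|P_v|$, where $P_v$ is the incident set of $v$, that is, the set of monitoring paths traversing $v$. By the equivalence just established, a path $p_i$ belongs to $P_v$ precisely when $b(v)|_i=1$, hence $P_v=\{\,p_i : b(v)|_i=1,\ i=1,\dots,m\,\}$. Therefore $|P_v|$ equals the number of indices $i\in\{1,\dots,m\}$ for which $b(v)|_i=1$, and since every entry $b(v)|_i$ is either $0$ or $1$, counting these indices is the same as summing the entries, giving $\chi(v)=\sum_{i=1}^m b(v)|_i$.

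There is no real obstacle here: the argument is purely definitional, and the only point requiring a line of care is the observation that $b(v)$ is a binary vector, which is what lets us replace the cardinality $|P_v|$ by the arithmetic sum of its coordinates. I would keep the write-up to these two short steps and not belabor it further.
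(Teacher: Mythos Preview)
Your proposal is correct and is exactly the definitional unpacking that the paper relies on; the paper states this as an Observation without proof, treating both claims as immediate consequences of the definitions of $T$, $b(v)$, $P_v$, and $\chi(v)$, which is precisely what you spell out.
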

}

\subsection{Identifiability}
The concept of identifiability refers to the capability of inferring the states of individual nodes from the states of the monitoring paths. Informally, we say that a node $v$ is 1-identifiable with respect to a set of paths $P$, if its failure and the failure of any other node $w$ cause the failure of different sets of monitoring paths in $P$, i.e. $v$ and $w$ have different incident sets.
This concept can be extended to the case of  concurrent failures of at most $k$ nodes, where a node is $k$-identifiable in $P$ if any two sets of failures $F_1$ and $F_2$ of size at most $k$, which differ at least in  $v$ (i.e., one contains $v$ and the other does not), cause the failures of different monitoring paths in $P$, i.e. $F_1$ and $F_2$ have different incident sets.

He et al. in \cite{usICDCS16} formalized the concept of $k$-identifiability that we reformulate as follows:

\begin{definition}
\label{def:node_k_identifiability}
Given a set of monitoring paths $P$ and a node $v_i \in V$, $v_i$ is called {\em $k$-identifiable}  wrt $P$ when for any failure sets $F_1$ and $F_2$ such that
$F_1 \cap \{v_i \} \neq F_2 \cap \{v_i\}$, and $|F_j| \leq k$ ($j \in \{1,2\}$),
 {\color{black}the incident sets $P_{F_1}$ and $P_{F_2}$ are different. Equivalently, it holds that:}
\noindent
\begin{center}
\begin{math}
\bigvee_{v_s \in F_1} b(v_s) \neq \bigvee_{v_z\in F_2} b(v_z)
\end{math}
\end{center}
\noindent
where with "$\bigvee$" we refer to the element-wise logical ${\texttt{OR}}$.
\end{definition}


The following Lemma considers the special case of $k=1$.
\begin{lemma}\label{def:identifiability}
A node $v_i$ is 1-identifiable wrt $P$ iff 
$b(v_i)\neq \mathbf{0}$, and $\forall v_j \neq v_i$, $b(v_j)\neq b(v_i)$, i.e., its binary encoding is not null and not identical with that of any other node.
\end{lemma}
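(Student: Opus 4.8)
The plan is to read off the Lemma directly from Definition~\ref{def:node_k_identifiability} specialized to $k=1$, since for that value the only admissible failure sets are the empty set and singletons, which makes the quantifier over $F_1, F_2$ collapse to a handful of cases. First I would record the convention that the incident set of $\emptyset$ is empty, i.e., $\bigvee_{v_s \in \emptyset} b(v_s) = \mathbf{0}$, and note that the hypothesis $F_1 \cap \{v_i\} \neq F_2 \cap \{v_i\}$ forces exactly one of $F_1, F_2$ to contain $v_i$; because its size is at most one, that set must be precisely $\{v_i\}$, while the other is either $\emptyset$ or $\{v_j\}$ with $v_j \neq v_i$.

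For the forward implication I would instantiate Definition~\ref{def:node_k_identifiability} twice: with $(F_1, F_2) = (\{v_i\}, \emptyset)$ to obtain $b(v_i) \neq \mathbf{0}$, and with $(F_1, F_2) = (\{v_i\}, \{v_j\})$ for arbitrary $v_j \neq v_i$ to obtain $b(v_i) \neq b(v_j)$. For the converse, assuming $b(v_i)$ is non-null and distinct from every other encoding, I would take any pair $F_1, F_2$ of size at most one differing in $v_i$, reduce it by the case analysis above to $F_1 = \{v_i\}$ and $F_2 \in \{\emptyset\} \cup \{\{v_j\} : v_j \neq v_i\}$, and check in each of the two sub-cases that $\bigvee_{v_s \in F_1} b(v_s) = b(v_i)$ differs from $\bigvee_{v_z \in F_2} b(v_z)$ (which equals $\mathbf{0}$ or $b(v_j)$, respectively), so that $v_i$ is $1$-identifiable.

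The proof is elementary and I do not foresee a genuine obstacle; the only thing to be careful about is the bookkeeping of the empty failure set and the remark that the cardinality bound $|F_j| \le 1$ is exactly what turns the $v_i$-containing failure set into the fixed singleton $\{v_i\}$, thereby reducing Definition~\ref{def:node_k_identifiability} to the two stated conditions.
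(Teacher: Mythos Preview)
Your proposal is correct and follows essentially the same approach as the paper's proof: both specialize Definition~\ref{def:node_k_identifiability} to $k=1$, observe that the failure set containing $v_i$ must be $\{v_i\}$ while the other is either $\emptyset$ or a singleton $\{v_j\}$ with $v_j\neq v_i$, and read off the two conditions from these two cases in each direction. Your write-up is slightly more explicit about the empty-set convention and the case analysis in the converse, but there is no substantive difference.
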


\begin{proof}
{\color{black}
Let us assume that $v_i$ is 1-identifiable, and consider Definition  \ref{def:node_k_identifiability}, for any  two sets $F_1$ and $F_2$, each with cardinality at most 1.
Without loss of generality, we consider $v_i \in F_1$, then $F_2$ is either empty or contains only one node $v_j$, such that $v_j \neq v_i$.
Therefore, Definition \ref{def:node_k_identifiability}
 implies that $b(v_i)\neq \mathbf{0}$ (if we choose $F_2 = \emptyset$) and  $b(v_j)\neq b(v_i)$, $\forall v_j \neq v_i$ (if we choose $F_2 =\{ v_j \}$).
 
Let us now assume that  node $v_i$ is such that $b(v_i)\neq \mathbf{0}$, and $\forall v_j \neq v_i$, $b(v_j)\neq b(v_i)$.
The assumption implies that $P_{v_i} \neq \emptyset$ and for any other node $v_j \neq v_i$, $P_{v_j}\neq  P_{v_i}$, i.e.,  $v_i$ is $1$-identifiable 
according to Definition \ref{def:identifiability}.
}
\end{proof}
{\color{black}We clarify that 
by Lemma \ref{def:identifiability}, a node with null encoding is not $1$-identifiable, even if its encoding were unique, which happens when it is the only non-monitored node. This is because, for a node to be considered identifiable, we must be able to assess its status, working or failed, based only on the status of the  monitoring paths, which requires the node to be traversed by at least a path.}

\subsection{Bounding identifiability}
The set of monitoring paths $P$ is usually 
the result of  design choices related to topology, monitoring endpoints, routing scheme, etc. Given a collection of candidate path sets $\mathcal{P}$ under all possible designs\footnote{For example,  $\mathcal{P}$ may be the class of path sets of given cardinality, or  paths of a given length between given sources and each of multiple candidate destinations.}, the question is: how well can we monitor the network using path measurements in $\mathcal{P}$ and which design is the best? Using the notion of $k$-identifiability, we can measure the monitoring performance by the number of nodes that are $k$-identifiable wrt $P \in \mathcal{P}$, denoted by $\phi_k(P)$, and formulate this question as an optimization: $\psi_k(\mathcal{P}) \triangleq \max_{P\in\mathcal{P}} \phi_k(P)$.

Although extensively studied 
\cite{Bejerano03INFOCOM,Cheraghchi:TOI2012,Ma15PE,usICDCS16}, the optimal solution {\color{black}is} hard to obtain due to the (exponentially) large size of $\mathcal{P}$, 
and heuristics are used to provide lower bounds. There is, however, a lack of general upper bounds. 
In this work we establish upper bounds on $\psi_k(\mathcal{P})$ in representative  scenarios.
Knowledge of these upper bounds is  key to understanding the fundamental limits of Boolean network tomography, and gives insights on the optimal network design to facilitate network monitoring.

Note that if $v_i$ is $k$-identifiable wrt $P$ for any $k\geq 1$, then $v_i$ is also $1$-identifiable wrt $P$.
\begin{lemma} \label{le:bound_on_k_or_1}
For any $k\geq 1$ and any collection $\mathcal{P}$ of 
 candidate path sets, $\psi_1(\mathcal{P})\geq \psi_k(\mathcal{P})$.
\end{lemma}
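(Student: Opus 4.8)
The plan is to reduce the statement about the optima $\psi_1$ and $\psi_k$ to a pointwise statement about the counting functions $\phi_1$ and $\phi_k$, via the following monotonicity: for every fixed path set $P$ we have $\phi_k(P) \le \phi_1(P)$. Granting this, the lemma follows immediately, since for the maximizer $P^\star \in \argmax_{P\in\mathcal{P}}\phi_k(P)$ we get $\psi_k(\mathcal{P}) = \phi_k(P^\star) \le \phi_1(P^\star) \le \max_{P\in\mathcal{P}}\phi_1(P) = \psi_1(\mathcal{P})$. So the real content is the inequality $\phi_k(P)\le\phi_1(P)$, which in turn reduces to: every node that is $k$-identifiable wrt $P$ is also $1$-identifiable wrt $P$. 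This is exactly the remark made just before the lemma statement, so the core task is to justify that remark from Definition~\ref{def:node_k_identifiability}.

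First I would argue the implication ``$k$-identifiable $\Rightarrow$ $1$-identifiable'' directly. Fix a node $v_i$ that is $k$-identifiable wrt $P$ for some $k\ge 1$. I want to invoke Lemma~\ref{def:identifiability}, so I must check $b(v_i)\neq\mathbf 0$ and $b(v_j)\neq b(v_i)$ for all $v_j\neq v_i$. For the first, take $F_1=\{v_i\}$ and $F_2=\emptyset$ in Definition~\ref{def:node_k_identifiability}; these have cardinality $\le 1 \le k$ and differ on $v_i$, so $\bigvee_{v_s\in F_1}b(v_s)=b(v_i)$ must differ from the empty OR, which is $\mathbf 0$; hence $b(v_i)\neq\mathbf 0$. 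For the second, fix any $v_j\neq v_i$, take $F_1=\{v_i\}$ and $F_2=\{v_j\}$; again both have size $1\le k$ and $F_1\cap\{v_i\}=\{v_i\}\neq\emptyset=F_2\cap\{v_i\}$, so the definition forces $b(v_i)\neq b(v_j)$. By Lemma~\ref{def:identifiability}, $v_i$ is $1$-identifiable. Since this holds for every $k$-identifiable node, the set of $k$-identifiable nodes is contained in the set of $1$-identifiable nodes, whence $\phi_k(P)\le\phi_1(P)$.

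Finally I would assemble the two pieces: combine the pointwise bound $\phi_k(P)\le\phi_1(P)$ (valid for all $P$) with the fact that taking a maximum over $\mathcal{P}$ preserves the inequality, as spelled out above, to conclude $\psi_1(\mathcal{P})\ge\psi_k(\mathcal{P})$. I do not anticipate a genuine obstacle here; the only subtlety worth stating carefully is that the failure sets used in Definition~\ref{def:node_k_identifiability} are allowed to have cardinality \emph{at most} $k$ (not exactly $k$), so the singletons and the empty set are legitimate choices for any $k\ge 1$ — this is what makes the restriction to $k=1$ a special case rather than an incomparable notion. Everything else is bookkeeping over the definitions already in the excerpt.
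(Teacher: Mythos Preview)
Your proposal is correct and follows essentially the same route as the paper: pick a maximizer $P^\star$ of $\phi_k$ over $\mathcal{P}$ and chain $\psi_k(\mathcal{P})=\phi_k(P^\star)\le\phi_1(P^\star)\le\psi_1(\mathcal{P})$, with the middle inequality coming from the implication ``$k$-identifiable $\Rightarrow$ $1$-identifiable.'' The only difference is that you spell out that implication via explicit choices of $F_1,F_2$ in Definition~\ref{def:node_k_identifiability}, whereas the paper simply asserts it in the sentence preceding the lemma and cites the definition.
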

\begin{proof}
Given the optimal choice of monitoring paths $P^*\in \mathcal{P}$ achieving $\psi_k(\mathcal{P})$, we 
have $\psi_1(\mathcal{P}) \geq \phi_1(P^*) \geq \phi_k(P^*) = \psi_k(\mathcal{P})$, where the first inequality is by definition of $\psi_1(\mathcal{P})$ and the second inequality is by Definition~\ref{def:node_k_identifiability}.
\end{proof}
Therefore, in the sequel, we 
look for upper bounds on $\psi_1(\mathcal{P})$, 
 simply denoted by $\psi(\mathcal{P})$, where we will replace $\mathcal{P}$ by specific parameters in each network setting. We hereafter shortly call  the 1-identifiable nodes ``identifiable''.


\section{General network monitoring}\label{sec:novella_notes}

We initially consider a generic 
network with a given number of monitoring paths between any nodes. We analyze $\psi(\mathcal{P})$ in three cases: (i) arbitrary routing, (ii) consistent routing, {and (iii) partially-consistent routing}.
\subsection{Arbitrary routing}

\subsubsection{Identifiability bound}
Given a network with $n$ nodes, and $m$ monitoring paths,
the number of nodes that are $1$-identifiable  may grow exponentially with the number of paths.

\color{black}{
\begin{prop}
\label{prop:dmax}
Given a network with $n$ nodes, and a set of  $m$ monitoring paths $p_i,\ i=1, \ldots, m$, we denote with $\mathcal{I}(p_i)$ the set of identifiable nodes traversed by   $p_i$ and with $d_i \leq n$  the length of $p_i$ in number of nodes. It holds that  
\begin{math} |\mathcal{I}(p_i)| \leq \min \{d_i; \ 2^{m-1}\}.
\end{math}
\end{prop}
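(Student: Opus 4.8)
The plan is to establish the two bounds in the minimum separately, since they come from independent considerations.

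The bound $|\mathcal{I}(p_i)| \leq d_i$ is immediate: by definition $\mathcal{I}(p_i)$ is the set of identifiable nodes \emph{traversed by} $p_i$, hence $\mathcal{I}(p_i) \subseteq p_i$, and $|p_i| = d_i$ since $d_i$ is the length of $p_i$ in number of nodes. So I would dispose of this in one line.

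The bound $|\mathcal{I}(p_i)| \leq 2^{m-1}$ is the substantive part. The key observation is that every node $v \in \mathcal{I}(p_i)$ is traversed by $p_i$, so by Observation~\ref{obs:norm_of_b} its binary encoding satisfies $b(v)|_i = 1$; that is, $b(v) \in \mathcal{B}|_i$, the set of length-$m$ binary vectors with a $1$ in coordinate $i$. There are exactly $2^{m-1}$ such vectors. Moreover, by Lemma~\ref{def:identifiability}, the nodes in $\mathcal{I}(p_i)$ are $1$-identifiable, hence pairwise distinct binary encodings: the map $v \mapsto b(v)$ restricted to $\mathcal{I}(p_i)$ is injective. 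An injection from $\mathcal{I}(p_i)$ into a set of size $2^{m-1}$ gives $|\mathcal{I}(p_i)| \leq 2^{m-1}$. Combining the two bounds yields $|\mathcal{I}(p_i)| \leq \min\{d_i;\ 2^{m-1}\}$.

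I do not anticipate a genuine obstacle here; the only thing to be careful about is invoking the right facts in the right order — specifically, using Observation~\ref{obs:norm_of_b} to confine the encodings to $\mathcal{B}|_i$ and Lemma~\ref{def:identifiability} to get injectivity, and then a plain counting argument. The argument is essentially a pigeonhole/counting statement once those two facts are in place.
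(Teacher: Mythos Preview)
Your proposal is correct and matches the paper's proof essentially line for line: the paper also invokes Lemma~\ref{def:identifiability} for uniqueness of encodings, Observation~\ref{obs:norm_of_b} to force a $1$ in the $i$-th coordinate, and then counts the $2^{m-1}$ admissible encodings while noting the trivial bound $d_i$. Your write-up is slightly more explicit about the injection into $\mathcal{B}|_i$, but the argument is the same.
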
}

\begin{proof}

By Lemma \ref{def:identifiability}, in order for a node to be identifiable, its binary encoding must be unique.
By Observation \ref{obs:norm_of_b}, the encodings of all the nodes traversed by path $p_i$, have a one  in the $i$-th position.
It follows that the number $|\mathcal{I}(p_i)|$ of identifiable nodes traversed by path $p_i$ is upper-bounded by its length $d_i$ and by the number of sequences  of $m$ bits (binary encodings), where the $i$-th bit is a one, which is  $2^{m-1}$.
\end{proof}
%

%
\begin{theorem}
[{\color{black}{Identifiability under arbitrary routing with known average path length}}]
\label{th:exp_max}
 Given a network with $n$ nodes, and a set $P$ of $m>1$ arbitrary routing paths, where $\bar{d}\leq n$ is the average path length, the maximum number of identifiable nodes in the network satisfies:
 \begin{align*}
 \hspace{-.5em}
 \psiAPR\hspace{-.1em}(\hspace{-.05em}m,n,\bar{d}\hspace{+0.05em})
&\hspace{-.05em}\leq \hspace{-.05em} \min \hspace{-.15em}\left\{\hspace{-.05em}\sum_{i=1}^{i_{\texttt{max}}} \hspace{-.25em} {m \choose i} \hspace{-.25em}+\hspace{-.25em}
\left\lfloor\hspace{-.25em}
\frac{N_{\texttt{max}} - \sum_{i=1}^{i_{\texttt{max}}} i \cdot {m \choose i}   }{i_{\texttt{max}}+1}
\hspace{-.25em}\right\rfloor\hspace{-.25em};n \hspace{-.10em}\right\},
\end{align*}
\noindent
where 
$i_{\texttt{max}}=\max \{
k \ | \ \sum_{i=1}^{k} i \cdot {m \choose i} \leq N_{\texttt{max}} \}$,

\noindent
and\footnote{By definition $N_{\texttt{max}}$ is an integer number.}
\noindent
$N_{\texttt{max}}= m \cdot \min \{ \bar{d}; \  2^{m-1}\}$.
\end{theorem}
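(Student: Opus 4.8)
The plan is to convert the identifiability requirement into a ``budget'' inequality on crossing numbers and then solve the resulting integer knapsack-type maximization. Fix an optimal path set $P$ realizing $\psiAPR(m,n,\bar d)$, and let $N_I$ be its number of identifiable nodes. By Lemma~\ref{def:identifiability} each identifiable node has a non-null encoding, so by Observation~\ref{obs:norm_of_b} its crossing number $\chi(v)=\sum_{i=1}^m b(v)|_i$ lies in $\{1,\dots,m\}$, and distinct identifiable nodes have distinct encodings. Counting node--path incidences in two ways yields the identity $\sum_{i=1}^m|\mathcal{I}(p_i)|=\sum_{v\ \mathrm{identifiable}}\chi(v)$. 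Applying Proposition~\ref{prop:dmax} to each term on the left and using $\sum_{i=1}^m\min\{d_i;2^{m-1}\}\le\min\bigl\{\sum_{i=1}^m d_i;\ m\,2^{m-1}\bigr\}=m\min\{\bar d;2^{m-1}\}=N_{\texttt{max}}$ gives the budget constraint $\sum_{v\ \mathrm{identifiable}}\chi(v)\le N_{\texttt{max}}$.

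Next, I would group the identifiable nodes by crossing number: let $n_i$ be the number of identifiable nodes with $\chi(v)=i$, for $i=1,\dots,m$. Since there are only $\binom{m}{i}$ distinct binary encodings of weight $i$, uniqueness forces $n_i\le\binom{m}{i}$, and the budget constraint reads $\sum_{i=1}^m i\,n_i\le N_{\texttt{max}}$. Hence $N_I=\sum_{i=1}^m n_i$ is bounded above by the optimum of the integer program ``maximize $\sum_i n_i$ subject to $0\le n_i\le\binom{m}{i}$ and $\sum_i i\,n_i\le N_{\texttt{max}}$'' (and also trivially by $n$).

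I would then solve this program greedily. A node of crossing number $i$ consumes $i$ units of budget while contributing only $1$ to the objective, so it is optimal to saturate the smallest crossing numbers first: set $n_i=\binom{m}{i}$ for all $i\le i_{\texttt{max}}$ --- feasible because $\sum_{i\le i_{\texttt{max}}}i\binom{m}{i}\le N_{\texttt{max}}$ by definition of $i_{\texttt{max}}$ --- and spend the residual budget $R=N_{\texttt{max}}-\sum_{i\le i_{\texttt{max}}}i\binom{m}{i}\ge 0$ on level $i_{\texttt{max}}+1$, setting $n_{i_{\texttt{max}}+1}=\lfloor R/(i_{\texttt{max}}+1)\rfloor$; the maximality of $i_{\texttt{max}}$ ensures $R<(i_{\texttt{max}}+1)\binom{m}{i_{\texttt{max}}+1}$, so this respects the cap. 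An exchange argument --- moving one unit from any occupied level down to a not-yet-saturated lower level leaves the objective unchanged while strictly decreasing the consumed budget --- shows no feasible allocation beats this one, so the optimum equals $\sum_{i=1}^{i_{\texttt{max}}}\binom{m}{i}+\lfloor R/(i_{\texttt{max}}+1)\rfloor$, and taking the minimum with $n$ gives the claimed bound.

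The first two paragraphs are routine; the main obstacle is making the last step airtight. The exchange argument easily reduces any feasible solution to a ``left-packed'' one (full on the cheap levels), but with integrality and the caps $n_i\le\binom{m}{i}$ in force one still has to verify that diverting budget away from the cheap levels --- toward levels above $i_{\texttt{max}}+1$, or by leaving a cheap level unsaturated --- can never increase $\sum_i n_i$. This reduces to a short case analysis comparing $\sum_i n_i$ with $\sum_{i\le i_{\texttt{max}}}\binom{m}{i}+\lfloor R/(i_{\texttt{max}}+1)\rfloor$ while tracking the floor functions (and checking the degenerate case $i_{\texttt{max}}=m$, where $R=0$ and the extra term vanishes).
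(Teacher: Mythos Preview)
Your proposal is correct and follows essentially the same approach as the paper: both derive the budget constraint $\sum_{v\ \text{identifiable}}\chi(v)\le N_{\texttt{max}}$ by double-counting node--path incidences and invoking Proposition~\ref{prop:dmax}, then argue that the number of distinct encodings is maximized by saturating the lowest crossing numbers first. Your integer-program formulation with an explicit exchange argument is a cleaner packaging of what the paper states more informally as ``the number of distinct encodings is maximized when we minimize the number of encoding replicas,'' but the substance is identical.
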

\begin{proof}
{\color{black}{
The number $|\mathcal{I}(p_i)|$ of identifiable nodes traversed by a path $p_i$ of length $d_i$, $i \in \{1, \ldots, m\} $, is bounded  as described by Proposition \ref{prop:dmax}. 
Consequently,  
the number of identifiable nodes is also bounded from above as follows:  
$| \cup_{i=1}^m \mathcal{I}(p_i) | \leq \sum_{i=1}^m |\mathcal{I}(p_i)| \leq \sum_{i=1}^m \min \{d_i;\ 2^{m-1} \} \leq m \cdot \min \{ \bar{d};\ 2^{m-1}\}=N_\texttt{max}$.
%
%
%

Since we used the union bound to calculate $N_{\texttt{max}}$, this value considers some encodings multiple times when the related node belongs to  more than one path. This happens, according to  Observation \ref{obs:norm_of_b}, $\chi(v)$ times for each node $v$.

It follows that the number of distinct encodings is maximized when we minimize the number of  encoding replicas and therefore the crossing number of the related nodes.
This is achieved, within the limits of the path length, when we have ${m\choose 1}$ nodes with crossing number equal to $1$ (counted only once in $N_\texttt{max}$), ${m\choose 2}$ nodes with crossing number equal to $2$ (counted twice in $N_\texttt{max}$), and so forth, until the total number of encodings (counting the replicas) is $N_{\texttt{max}}$.

More formally, let $i_\texttt{max}=\max\{ k\ | \ \sum_{i=1}^k i \cdot {m \choose i} \leq N_\texttt{max}\}$. For each $i\leq i_\texttt{max}$, we have ${m \choose i}$ nodes with crossing number equal to $i$, i.e., traversed by $i$ paths. Considering that the remaining
$N_\texttt{max}-\sum_{i=1}^{i_\texttt{max}} i \cdot {m \choose i }$ encodings will  have at least $(i_\texttt{max}+1)$ digits equal to $1$ and thus are counted  at least $(i_\texttt{max}+1)$ times in $N_\texttt{max}$, the number of distinct encodings out of the $N_\texttt{max}$ encodings is upper-bounded by:

\begin{math}
\noindent{\small
 \psiAPR(m,n,\bar{d})
\leq  \sum_{i=1}^{i_{\texttt{max}}}  {m \choose i} +
\left\lfloor
\frac{N_{\texttt{max}} - \sum_{i=1}^{i_{\texttt{max}}} i \cdot {m \choose i}}{i_{\texttt{max}}+1}
\right\rfloor
}
\end{math}. 

\noindent
Considering also that the number of identifiable nodes cannot exceed $n$, we have  the final bound.
}
}\end{proof}

{\color{black}
We underline that Theorem \ref{th:exp_max} provides a topology-agnostic bound, i.e., a theoretical limit which is valid for any topology and only considers 
the number of nodes, the number of monitoring paths, and the average path length\footnote{As the constraints imposed by the topology of the network and path routing are not taken into account in this theorem, its validity holds also for any group testing problem where $m$ groups of known average size, are used to inspect the state of $n$ elements.} $\bar{d}$.

{\color{black}{
We observe that when paths have arbitrary unbounded length, 
we have $N_\texttt{max}=m \cdot  2^{m-1}$, 
and   $i_{\texttt{max}}= m$. In such a case, 
Theorem \ref{th:exp_max} reduces to the following corollary for unbounded path length.
\textcolor{black}{
\begin{corollary}[{\color{black}{Identifiability under arbitrary routing and unbounded path length}}]
\label{co: old_bound}
Given a network with $n$ nodes and a set $P$ of $m$ monitoring paths, the maximum number of identifiable nodes  satisfies:
\begin{center}
  \begin{math}
 \psiAPR(m,n) \leq \min \{n;2^m-1\}.
 \end{math}
 \end{center}
\end{corollary}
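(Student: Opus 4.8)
The plan is to derive Corollary~\ref{co: old_bound} as a degenerate special case of Theorem~\ref{th:exp_max}, following the remark that when path lengths are unbounded we may take $\bar{d} \geq 2^{m-1}$, so that $\min\{\bar{d}; 2^{m-1}\} = 2^{m-1}$ and hence $N_{\texttt{max}} = m \cdot 2^{m-1}$. First I would verify that with this value of $N_{\texttt{max}}$ we indeed get $i_{\texttt{max}} = m$: the defining condition is $\sum_{i=1}^{k} i \binom{m}{i} \leq N_{\texttt{max}}$, and since $\sum_{i=1}^{m} i \binom{m}{i} = m \cdot 2^{m-1}$ (a standard binomial identity, obtained e.g. by differentiating $(1+x)^m$ and setting $x=1$), equality holds at $k = m$, so $i_{\texttt{max}} = m$.

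Next I would substitute $i_{\texttt{max}} = m$ into the bound of Theorem~\ref{th:exp_max}. The floor term becomes $\lfloor (N_{\texttt{max}} - \sum_{i=1}^{m} i \binom{m}{i}) / (m+1) \rfloor = \lfloor 0/(m+1) \rfloor = 0$, since the numerator vanishes by the identity just used. The remaining sum is $\sum_{i=1}^{m} \binom{m}{i} = 2^m - 1$ (the full binomial sum minus the $i=0$ term). Hence the theorem's bound collapses to $\min\{2^m - 1; n\}$, which is exactly the claim. The only genuinely new content here is the binomial identity and the observation that unbounded length means we can drop the $\bar{d}$ constraint; everything else is bookkeeping.

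Alternatively, and perhaps more transparently, I would give a direct self-contained argument that does not route through Theorem~\ref{th:exp_max}: by Lemma~\ref{def:identifiability} every identifiable node has a nonzero binary encoding in $\{0,1\}^m$, and distinct identifiable nodes must have distinct encodings; there are exactly $2^m - 1$ nonzero vectors in $\{0,1\}^m$, and of course there are only $n$ nodes in total, giving $\psiAPR(m,n) \leq \min\{n; 2^m - 1\}$ immediately. I would likely present this direct argument as the proof and mention the reduction from Theorem~\ref{th:exp_max} as a consistency check.

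The main obstacle is not conceptual but presentational: I must make sure the statement correctly reflects that this is the $\bar{d} \to \infty$ (or $\bar{d} \geq 2^{m-1}$) limit, and that the pigeonhole/counting argument is stated cleanly. A minor subtlety worth flagging is the edge case $m = 1$: Theorem~\ref{th:exp_max} assumes $m > 1$, but the direct argument works for $m = 1$ as well, giving $\min\{n; 1\}$, which is consistent with Proposition~\ref{prop:dmax} (a single path identifies at most one node since all its nodes share the encoding $(1)$). I would note this so the corollary's hypothesis $m \geq 1$ is justified independently of the theorem.
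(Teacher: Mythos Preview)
Your proposal is correct. Your first approach---reducing from Theorem~\ref{th:exp_max} by setting $N_{\texttt{max}}=m\cdot 2^{m-1}$ and deducing $i_{\texttt{max}}=m$---is exactly what the paper does (the paper states this in one sentence and omits the binomial identity you spell out, but the route is identical). Your second, direct pigeonhole argument via Lemma~\ref{def:identifiability} is not in the paper; it is more elementary and self-contained, since it bypasses the machinery of Theorem~\ref{th:exp_max} entirely and makes the $2^m-1$ bound transparent as a count of nonzero vectors in $\{0,1\}^m$. The paper's reduction has the advantage of showing the corollary as a limiting case of the general bound (so the reader sees consistency with Theorem~\ref{th:exp_max}), while your direct argument has the advantage of covering $m=1$ without the $m>1$ hypothesis of the theorem, as you note. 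Either is fine; presenting the direct argument and mentioning the reduction as a check, as you plan, is a reasonable choice.
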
}

}}

{\color{black}{
Notice  that it may be of interest to have a bound on the number of identifiable nodes when the average length of monitoring paths is not known but there are topology or QoS related constraints on the length of a path expressed in terms of  a maximum value $d_{\texttt{max}}$. In this case, we have the following variation of the bound due to the fact that:
$$ \bar{d} \leq \max_i \{d_i\} \leq d_{\texttt{max}}.$$

\begin{corollary}
[Identifiability under arbitrary routing and bounded maximum path length]
\label{corollary:exponential}
 Given a network and a set $P$ of $m>1$ arbitrary routing paths with maximum length $d_\texttt{max}$, the maximum number of identifiable nodes in the network is upper-bounded as in Theorem \ref{th:exp_max}, except that $N_{\texttt{max}}$ is now defined as: 
\noindent
$N_{\texttt{max}} =  m \cdot \min \{ d_\texttt{max}; \  2^{m-1}\}$.
\end{corollary}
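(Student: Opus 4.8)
The plan is to obtain Corollary~\ref{corollary:exponential} directly from the proof of Theorem~\ref{th:exp_max}, modifying only the first inequality in that argument. Recall that the proof of Theorem~\ref{th:exp_max} begins with the union bound $|\cup_{i=1}^m \mathcal{I}(p_i)| \leq \sum_{i=1}^m |\mathcal{I}(p_i)|$ and then applies Proposition~\ref{prop:dmax} to get $|\mathcal{I}(p_i)| \leq \min\{d_i;\ 2^{m-1}\}$ for each path $p_i$. Under the hypothesis of the corollary every path satisfies $d_i \leq d_{\texttt{max}}$, hence $\min\{d_i;\ 2^{m-1}\} \leq \min\{d_{\texttt{max}};\ 2^{m-1}\}$, and summing over the $m$ paths yields $|\cup_{i=1}^m \mathcal{I}(p_i)| \leq m\cdot \min\{d_{\texttt{max}};\ 2^{m-1}\}$, which is precisely the redefined value of $N_{\texttt{max}}$. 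From this point on, the counting argument of Theorem~\ref{th:exp_max} — defining $i_{\texttt{max}}$ as the largest $k$ with $\sum_{i=1}^{k} i\cdot{m\choose i} \leq N_{\texttt{max}}$, allocating ${m\choose i}$ encodings of weight $i$ for $i\leq i_{\texttt{max}}$, and bounding the distinct encodings among the remaining ones by $\lfloor (N_{\texttt{max}}-\sum_{i=1}^{i_{\texttt{max}}} i\cdot{m\choose i})/(i_{\texttt{max}}+1)\rfloor$ — depends only on $N_{\texttt{max}}$ and not on how it was derived, so it carries over verbatim, as does the final truncation at $n$.

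An alternative, and arguably cleaner, presentation is as a monotonicity argument: since $\bar{d} = \frac{1}{m}\sum_{i=1}^m d_i \leq \max_i d_i \leq d_{\texttt{max}}$, it suffices to note that the right-hand side of the bound in Theorem~\ref{th:exp_max}, regarded as a function of $\bar{d}$ through $N_{\texttt{max}} = m\cdot\min\{\bar{d};\ 2^{m-1}\}$, is non-decreasing in $\bar{d}$. Then the expression evaluated at $d_{\texttt{max}}$ is at least the expression evaluated at $\bar{d}$, which by Theorem~\ref{th:exp_max} already upper-bounds $\psiAPR(m,n,\bar{d})$, and hence also upper-bounds the number of identifiable nodes in the bounded-$d_{\texttt{max}}$ setting (where the true average path length is some $\bar{d} \leq d_{\texttt{max}}$).

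The only step that requires a line of justification is the monotonicity of that right-hand side in $N_{\texttt{max}}$, if one takes the second route: as $N_{\texttt{max}}$ increases, $i_{\texttt{max}}$ can only increase, and the greedy "packing" of ones into binary encodings — exhausting all ${m\choose 1}$ weight-one encodings, then all ${m\choose 2}$ weight-two encodings, and so on — can only yield more distinct encodings when more ones are available. This matches the combinatorial interpretation already made explicit in the proof of Theorem~\ref{th:exp_max}, but it is the one non-mechanical point. For that reason I would write the proof via the first route: simply re-run the opening union-bound step of Theorem~\ref{th:exp_max} with $d_{\texttt{max}}$ substituted for each $d_i$, and invoke the rest of that proof unchanged.
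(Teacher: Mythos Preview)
Your proposal is correct and matches the paper's approach: the paper's entire justification is the one-line observation $\bar{d} \leq \max_i\{d_i\} \leq d_{\texttt{max}}$ preceding the corollary, which is precisely your second (monotonicity) route. Your first route is an equally valid unpacking of the same idea, and your discussion of why monotonicity in $N_{\texttt{max}}$ holds is actually more detail than the paper provides.
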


\subsubsection{Design via Incremental Crossing Arrangement (ICA)}\label{subsubsec:ICA}

The proof of Theorem \ref{th:exp_max} suggests a  technique to build a network topology $G=(V,E)$ and related monitoring paths $P$ with maximum identifiability, where  $|P|=m$. We call this technique {\em Incremental Crossing Arrangement} (ICA).

{\em ICA, the idea. }  The  technique  works by generating node encodings in increasing order of crossing number with respect to the monitoring paths in use, until the number of generated encodings reaches the bound defined  in Theorem  \ref{th:exp_max}. Monitoring paths must be designed so as to 
traverse nodes according to  the generated encodings: path $p_i$ traverses any node $v$ for which $b(v)|_i=1$, $\forall i \in \{1, \ldots, m \}$. The  network topology is then constructed by  considering a node for each of the generated  Boolean encodings,  and adding links between any pair of nodes appearing sequentially in any  path.
 
 {\em ICA in details. }
In the following we consider an  arbitrarily  large number of nodes $n$, such that $n$ is larger than the  bound on identifiability provided by Theorem \ref{th:exp_max}, to exclude settings where the bound is trivially equal to the number of nodes $n$.
Algorithm \ref{alg:ICA} formalizes the incremental crossing arrangement design, used to determine the binary encodings of the identifiable nodes.

As we consider $m$ paths, the node encodings will be sequences of $m$ bits in $\mathcal{B} \triangleq \{0,1 \}^m$.
We also denote with $\mathcal{B}|_i \subset \mathcal{B}$ the set of $m$-digits binary encodings  having a 1 in the $i$-th position, 
i.e., $\mathcal{B}|_i= \{ b \in \mathcal{B}\ s.t.\ b|_i=1\}$.
The nodes corresponding to encodings of $\mathcal{B}|_i$ will be monitored (at least) by path $p_i$.
Moreover, we denote with $\mathcal{B}(k) \subset \mathcal{B}$ the set of all binary encodings having exactly $k$ digits equal to 1, therefore
$\mathcal{B}(k) \triangleq \{ b \in \mathcal{B}\ s.t.\  \sum_{i=1}^m b|_i= k\}$. The nodes corresponding to encodings in $\mathcal{B}(k)$ have crossing number equal to $k$.

Finally, given a generic set of binary encodings $B \subseteq \mathcal{B}$, we  denote with $\ell_i(B)$ the number of encodings of $B$ having a one in the $i$-th position:
 $\ell_i(B) \triangleq |B\cap\mathcal{B}|_i|$. The value of $\ell_i(B)$ represents the length of a path $p_i$ traversing all the nodes  in $B \cap \mathcal{B}|_i$, exactly once.

{\color{black}

Without loss of generality, we consider paths of balanced length, i.e. 
 we set the length $d_i$ of path $p_i$ to a value $d_i \in \{ \lfloor \bar{d} \rfloor, \lfloor \bar{d} \rfloor +1 \}$  ({\bf lines  \ref{ICA:2} - \ref{ICA:d2d}}).
%

The incremental crossing arrangement approach incrementally generates the solution set 
$B_V$ by incuding all the encodings of 
$\mathcal{B}(i)$, $i=1, \ldots, i_\texttt{max}$ corresponding to nodes with crossing number lower than or equal to $i_{\texttt{max}}$. 
It then considers some encodings with $(i_{\texttt{max}}+1)$ digits equal to one. For this purpose it generates a family $\mathcal{F}$ of subsets in $\mathcal{B}({i_{\texttt{max}}+1})$, i.e., $\mathcal{F} \subseteq 2^{\mathcal{B}({i_{\texttt{max}}+1})}$ ({\bf line \ref{ICA:F}}) whose elements $B$ are  such that $\ell_k(B\cup B_V) \leq d_k$.
The algorithm then looks for a maximal cardinality set $B^*$ in the family $\mathcal{F}$ and adds it to the solution $B_V$, s.t. 
$B_V= \cup_{k=1}^{i_\texttt{max}} \mathcal {B}(k) \cup B^*$.
Notice that the maximality of the cardinality of $B^*$ implies that no encoding with $(i_\texttt{max}+1)$ digits equal to one can be added to the set $B_V$ without violating the path length constraint $\ell_k(B_V) \leq d_k$ for some path $k=1, \ldots,m$, or without removing at least one encoding already in $B_V$.

The procedure described so far is sufficient to produce a network topology and related paths, meeting the bound of Theorem \ref{th:exp_max}, with $m$ paths of average length lower than or equal to $\bar{d}$.
In the produced topology, there can be values of $k \in \{1, \ldots, m \}$ for which 
$\ell_k(B_V) < d_k$ and, more precisely, given the balanced path length,
$\ell_k(B_V) = d_k -1$, corresponding to paths longer than strictly necessary to meet the bound of Theorem \ref{th:exp_max}, i.e. overlength paths.
Overlength paths cannot traverse nodes with the same encoding without compromising the achievement of maximum identifiability. Therefore, to meet the bound with average path length exactly equal to $\bar{d}$, we proceed as follows, with a procedure that we call {\em Path Completion}.

Let $S \subset \{1, \ldots, m \}$ be the set of overlength path indexes, namely 
$S \triangleq \{k,\ s.t.\  \ell_k(B_V) = d_k -1 \}$.
It holds 
\begin{math}
|S|=\left[(N_{\texttt{max}} - \sum_{i=1}^{i_\texttt{max}}
i\cdot {{m} \choose i })\mod (i_\texttt{max}+1)\right]
\end{math}, 
hence the number of overlength  paths is lower than or equal to $i_\texttt{max}$.

We choose an encoding $b' \in B_V \cap \mathcal{B}(i_\texttt{max}+1-|S|)$ such that $b'|_k=0, \forall k \in S$, 
and such that $\left(\bigvee_{k \in S} \mathbf{e_k} \vee b' \right)\notin B_V$, where $\mathbf{e}_k$ is an $m$-dimensional identity vector with all zeroes but a one in the $k$-th position\footnote{We can always find an encoding $b'$ with the 
described properties  because $B_V$ contains all the 
encodings of $\mathcal{B}(i_\texttt{max}+1 - |S|)$ 
and not all the encodings $b$ of the set 
$ \mathcal{B}(i_\texttt{max}+1)$ for which $b|_i=1, \forall i \in S$.}.
Then we remove  $b'$ from the solution set $B_V$ and replace it with $b'' \triangleq \bigvee_{k \in S} \mathbf{e_k} \vee b'$, i.e., with  a new encoding 
$b''$ such that $b''|_k=1, \forall k \in S,$ and $b''|_k=b'|_k$ otherwise.

{\color{black}
\begin{figure}[h!]
	\centering

 \begin{tikzpicture}
    \tikzstyle{every node}=[draw,circle,fill=white,minimum size=4pt,
                            inner sep=0pt]

    \draw (0,0) node (0100) [label=left:\footnotesize{$\LD 0100 \RD\  $}] {}
        -- ++(300:1cm) node (1100) [label=left:\footnotesize{$\LD 1100 \RD\  $}] {}
        -- ++(60:1cm) node (1000) [label=right:\footnotesize{$\ \LD 1000 \RD\  $}] {}
      ;

    \draw (3.8cm,0) node (0010) [label=left:\footnotesize{$\LD 0010 \RD\  $}] {}
        -- ++(300:1cm) node (0011) [label=right:\footnotesize{$\ \LD 0011 \RD\  $}] {}
        -- ++(60:1cm) node (0001) [label=right:\footnotesize{$\ \LD 0001 \RD\  $}] {}
      ;

 \draw (1100) -- ++(340:2cm) node (1010) [label=left:{\vspace{1cm}\footnotesize{$\LD 1010 \RD\ $}}] {}
   -- ++(0011)  {};

  \node (1101) [below of = 1010, node distance=1cm, label=left:{\footnotesize{$\ \LD 0101 \RD\  $} }]{};
  

\draw [->,cyan,dashed] (0100) to [out=270,in=140] (1100);
\draw [->,cyan,dashed] (1100) to [out=280,in=160] (1101);

\draw  (1100) to (1101);
\draw (0011) to (1101);

\draw [->,blue,densely dotted] (1000) to [out=270,in=40] (1100);
\draw [->,blue,densely dotted] (1100) to [out=350,in=110] (1010);


\draw [->,green] (0010) to [out=270,in=140] (0011);
\draw [->,green] (0011) to [out=190,in=70] (1010);

\draw [->,purple,dashdotted] (0001) to [out=270,in=40] (0011);
\draw [->,purple,dashdotted] (0011) to [out=260,in=20] (1101);

\draw[color=blue,densely dotted] (-2cm, -1.6cm) --  (-1.3cm, -1.6cm) node[draw=none,fill=none] (pippo) [label=right:{{\color{black}$p_1$}}]{};
\draw[cyan,dashed] (-2cm, -1.9cm) -- (-1.3cm, -1.9cm) node[draw=none,fill=none] (pippo) [label=right:{{\color{black}$p_2$}}]{};
\draw[color=green] (-2cm, -2.2cm) -- (-1.3cm, -2.2cm) node[draw=none,fill=none] (pippo) [label=right:{{\color{black}$p_3$}}]{};
\draw[color=purple,dashdotted] (-2cm, -2.50cm) -- (-1.3cm, -2.50cm) node[draw=none,fill=none] (pippo) [label=right:{{\color{black}$p_4$}}]{};

\end{tikzpicture}

 \caption{ICA execution on Example A.
 } 
 \label{fig:ICA}
\end{figure}
}

{\color{black}
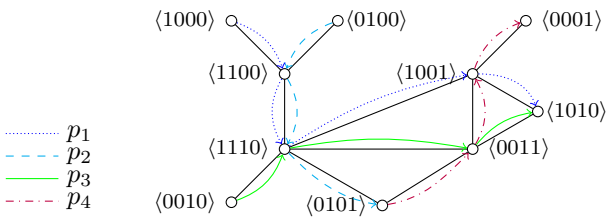
\begin{figure}[h!]
	\centering


 \begin{tikzpicture}
    \tikzstyle{every node}=[draw,circle,fill=white,minimum size=4pt,
                            inner sep=0pt]

    \draw (0,0) node (1000) [label=left:\footnotesize{$\LD 1000 \RD\  $}] {}
        -- ++(315:1cm) node (1100) [label=left:\footnotesize{$\LD 1100 \RD\  $}] {}
        -- ++(270:1cm) node (1110) [label=left:\footnotesize{$\ \LD 1110 \RD\  $}] {}
           -- ++(0:2.5cm) node (0011) [label=right:\footnotesize{$\ \LD 0011 \RD\  $}] {}
            -- ++(90:1cm) node (1001) [label=left:\footnotesize{$\ \LD 1001 \RD\  $}] {}
             -- ++(45:1cm) node (0001) [label=right:\footnotesize{$\ \LD 0001 \RD\  $}] {}
      ;

 \draw (1100) -- ++(45:1cm) node (0100) [label=right:{\vspace{1cm}\footnotesize{$\LD 0100 \RD\ $}}] {};
 \draw (1001) to (1110);

 \draw (1110) -- ++(225:1cm) node (0010) [label=left:\footnotesize{$\LD 0010\RD\  $}] {} {};
 
  \draw (1110) -- ++(330:  1.5cm) node (0101) [label=left:\footnotesize{$\LD 0101\RD\  $}] {} {};
\draw (0101) to (0011);

 \draw (0011) -- ++(30:1cm) node (1010) [label=right:\footnotesize{$\LD 1010 \RD\  $}] {} {};
 \draw  (1001) to (1010);

%
\draw [->,cyan,dashed] (0100) to [out=200,in=70] (1100);
\draw [->,cyan,dashed] (1100) to [out=300,in=60] (1110);
\draw [->,cyan,dashed] (1110) to [out=-45,in=170] (0101);
%
%
\draw [->,blue,densely dotted] (1000) to [out=340,in=110] (1100);
\draw [->,blue,densely dotted] (1100) to [out=240,in=120] (1110);
\draw [->,blue,densely dotted] (1110) to [out=33,in=190] (1001);
\draw [->,blue,densely dotted] (1001) to [out=0,in=110] (1010);
\draw [->,green] (0010) to [out=20,in=250] (1110);
\draw [->,green] (1110) to [out=10,in=170] (0011);
\draw [->,green] (0011) to [out=50,in=190] (1010);
\draw [->,purple,dashdotted] (0101) to [out=10,in=225] (0011);
\draw [->,purple,dashdotted] (0011) to [out=60,in=-60] (1001);
\draw [->,purple,dashdotted] (1001) to [out=70,in=200] (0001);
\draw[color=blue,densely dotted] (-3cm, -1.5cm) --  (-2.3cm, -1.5cm) node[draw=none,fill=none] (pippo) [label=right:{{\color{black}$p_1$}}]{};
\draw[cyan,dashed] (-3cm, -1.8cm) -- (-2.3cm, -1.8cm) node[draw=none,fill=none] (pippo) [label=right:{{\color{black}$p_2$}}]{};
\draw[color=green] (-3cm, -2.1cm) -- (-2.3cm, -2.1cm) node[draw=none,fill=none] (pippo) [label=right:{{\color{black}$p_3$}}]{};
\draw[color=purple,dashdotted] (-3cm, -2.40cm) -- (-2.3cm, -2.40cm) node[draw=none,fill=none] (pippo) [label=right:{{\color{black}$p_4$}}]{};

\end{tikzpicture}

 \caption{ICA execution on Example B.
 } \label{fig:ICA_B}
\end{figure}
}

\begin{algorithm}
\label{alg:ICA}
{\footnotesize{\color{black}
\SetAlgoCaptionSeparator{:}
\KwIn{$m$ and 
$\bar{d}$.}
\KwOut{A set of encodings $B_V$ which can be mapped to a topology graph $G=(V,E)$, with $m$ paths with average length $\bar{d}$, such that $\psi^{*{\texttt{AR}}}(m,\bar{d})$ corresponding nodes are identifiable.}

\begin{algorithmic}[1]
 \STATE Calculate $N_\texttt{max}$ and $i_{\texttt{max}}$ according to Theorem \ref{th:exp_max}, and $\psi^{*{\texttt{AR}}}(m, \bar{d})\triangleq
 \hspace{-.05em}\sum_{i=1}^{i_{\texttt{max}}} \hspace{-.25em} {m \choose i} \hspace{-.25em}+\hspace{-.25em}
\left\lfloor\hspace{-.25em}
\frac{N_{\texttt{max}} - \sum_{i=1}^{i_{\texttt{max}}} i \cdot {m \choose i}   }{i_{\texttt{max}}+1}
\hspace{-.25em}\right\rfloor\hspace{-.25em}$
 \label{ICA:1}
 
 \STATE Calculate $\ m_1
   \triangleq m \cdot ( \bar{d}-\lfloor \bar{d} \rfloor)$ \label{ICA:2};\\

   \STATE {\bf For} $i=1, \ldots , m_1$ {\bf do} set $d_i=\lfloor \bar{d} \rfloor+1$ 
   \label{ICA:d}\\

   \STATE   {\bf For} $i=m_1+1, \ldots, m$ {\bf do} set  $d_i=\lfloor \bar{d} \rfloor$  
   \label{ICA:d2d}\\

      \STATE $B_{V} = \emptyset$\\

    \STATE        {\bf For} $i=1, \ldots, i_\texttt{max}$ {\bf do}
         $B_{V}=B_{V} \cup \mathcal{B}(i)$\label{ICA:complete_tuples}\\

   \STATE         Calculate the family $\mathcal{F}$ defined as
         \newline
         $\mathcal{F}\triangleq \{ B:\  
         B \subseteq 
         \mathcal{B}
         (i_{\texttt{max}}+1)
          \wedge \ell_k(B \cup B_V) \in [ d_k-1, d_k],
           \ \forall k \}$\label{ICA:F}\\

   \STATE             Choose $B^* =\arg \max_{B \in \mathcal{F}}  |B|$\\
       \STATE     $B_V=B_V \cup B^*$ \label{ICA:enc_not_final}\\

  \STATE    \If{ $\exists k \in \{ 1, \ldots, m\}$ s.t.  
    $\ell_k(B_V) = d_k-1$} {Perform {\em path completion} and update $B_V$}      
    \label{ICA:enc_final}

  \STATE  Return ${B_V}$
 \end{algorithmic}
\caption{Incremental Crossing Arrangement}
}}
\end{algorithm}

{\em ICA: example A (where path completion is not necessary).} 
Figure \ref{fig:ICA} shows an example of a topology generated by means of incremental crossing arrangement.
We are given $m=4$ and $\bar{d}=3$, and $n$ arbitrarily large (any value larger than 8 works in this example). Applying Algorithm \ref{alg:ICA} we have $N_\texttt{max}=m\cdot \bar{d}=12$, and $i_\texttt{max}=1$. 
We also have 
$\psi^{*\texttt{AR}}=8$.
We set $d_i=3, \forall i \in \{1, \ldots, 4 \}$ ({\bf lines \ref{ICA:2} - \ref{ICA:d2d}}).
According to ICA, we first generate all the encodings of $\mathcal{B}(i_{\texttt{max}})=\mathcal{B}(1)$ and set $B_V=\mathcal{B}(1)=\{1000,0100,0010,0001 \}$ ({\bf line \ref{ICA:complete_tuples}}).
Then we generate some encodings in $\mathcal{B}(2)$ until no other encoding can be added without violating the path length constraint ({\bf line \ref{ICA:enc_not_final}}), obtaining $B_V=\{ 1000,0100,0010,0001,1100,0011,1010,0101\}$, where each encoding corresponds to a node of the graph $G$. 
Then we define the corresponding monitoring paths, by letting path $p_i$ traverse all the nodes whose encoding 
has a 1 in the $i$-th position, in arbitrary order, $\forall i \in \{1, \ldots, m \}$.
Finally, we design the underlying topology by connecting each pair of nodes appearing in a sequence in any of the paths, as shown in Figure  \ref{fig:ICA}.

{\em ICA: example B (with path completion).} 
Figure \ref{fig:ICA_B} shows another example of a topology generated by means of incremental crossing arrangement.
We are given $m=4$ and $\bar{d}=4.25$, and $n$ arbitrarily large (any value larger than 10 works in this example). Applying Algorithm \ref{alg:ICA} we have $N_\texttt{max}=m\cdot \bar{d}=17$, and $i_\texttt{max}=2$. 
We also have 
$\psi^{*\texttt{AR}}=10$.
To meet the requirement on average lenght, we set $d_1=5$, and $d_2=d_3=d_4=4$ ({\bf lines \ref{ICA:2} - \ref{ICA:d2d}}).
According to ICA ({\bf line \ref{ICA:complete_tuples}}), we first generate all the encodings of $\mathcal{B}(1)$ and $\mathcal{B}(2)$ and set $B_V=\{1000,0100,0010,0001,1100, 1010, 1001, {\underline{0110}}, 0101, 0011 \}$.

Finally, we observe that $\ell_1 (B_V) =4<d_1$.
We then perform the path completion procedure ({\bf line \ref{ICA:enc_final}}) and choose one of the encodings $b'$ in $B_V \cap \mathcal{B}(i_\texttt{max}+1-|S|)=\mathcal{B}(2) $ for which  $b'|_1=0$
and $b' \vee  \mathbf{e_1} \notin B_V$.
One encoding that satisfies this condition is 
$b'=0110$. 
We replace $b'$ with $b''=1110$.
We obtain the set of encodings $\{ 1000,0100,0010,0001,1100,1010,1001,{\underline{1110}},0101,0011\}$, each corresponding to a node of the graph $G$. 
Then we define the corresponding monitoring paths, by letting path $p_i$ traverse all the nodes whose encoding 
has a 1 in the $i$-th position, in arbitrary order, $\forall i \in \{1, \ldots, m \}$.
Finally, we design the underlying topology by connecting each pair of nodes appearing in a sequence in any of the paths, obtaining the topology of Figure \ref{fig:ICA_B}.


}

\noindent It is worth observing the following.
\begin{observation}\label{obs:max_cross}
ICA produces a network topology and related monitoring paths such that all nodes  have a crossing number lower than or equal to $(i_\texttt{max}+1)$.

\end{observation}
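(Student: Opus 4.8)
The plan is to follow the construction of the encoding set $B_V$ stage by stage through Algorithm~\ref{alg:ICA}, and to bound the number of ones in every encoding that ever enters $B_V$. Since ICA builds the topology by placing one node per encoding in $B_V$ and defining each path $p_i$ to traverse exactly the nodes whose encoding has a one in position $i$, the binary encoding of each constructed node $v$ is precisely its element of $B_V$, and by Observation~\ref{obs:norm_of_b} its crossing number satisfies $\chi(v)=\sum_{i=1}^m b(v)|_i$, i.e.\ $\chi(v)$ equals the Hamming weight of that encoding. So it suffices to show every encoding in the final $B_V$ has weight at most $i_\texttt{max}+1$.

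First I would handle the two straightforward stages. At line~\ref{ICA:complete_tuples}, $B_V$ is filled with $\bigcup_{i=1}^{i_\texttt{max}}\mathcal{B}(i)$, and by definition every encoding in $\mathcal{B}(i)$ has exactly $i\le i_\texttt{max}$ ones. At line~\ref{ICA:enc_not_final}, we add $B^*\subseteq\mathcal{B}(i_\texttt{max}+1)$, whose encodings have exactly $i_\texttt{max}+1$ ones. Hence, immediately before path completion, every encoding in $B_V$ has weight at most $i_\texttt{max}+1$.

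The only remaining point is to verify that the \emph{path completion} step of line~\ref{ICA:enc_final} does not raise any weight above $i_\texttt{max}+1$. Here the only encoding that changes is $b'$, which is chosen from $B_V\cap\mathcal{B}(i_\texttt{max}+1-|S|)$ with the property $b'|_k=0$ for all $k\in S$; it is removed and replaced by $b''=\bigl(\bigvee_{k\in S}\mathbf{e}_k\bigr)\vee b'$. Since the positions indexed by $S$ are exactly those where $b'$ is zero, the OR turns on precisely $|S|$ new bits, so $b''$ has weight $(i_\texttt{max}+1-|S|)+|S|=i_\texttt{max}+1$. Every other encoding is untouched, so the bound is preserved.

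Combining the three stages, every encoding in the returned set $B_V$ has Hamming weight at most $i_\texttt{max}+1$, and therefore every node of the ICA-constructed topology has crossing number at most $i_\texttt{max}+1$, as claimed. I do not expect a genuine obstacle here; the only subtlety worth stating explicitly is that in the path-completion step the replaced encoding $b'$ lies in $\mathcal{B}(i_\texttt{max}+1-|S|)$ and is zero precisely on $S$, which is exactly what keeps the weight of $b''$ from exceeding $i_\texttt{max}+1$.
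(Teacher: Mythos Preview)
Your proposal is correct. The paper itself states Observation~\ref{obs:max_cross} without proof, treating it as immediate from the description of ICA; your argument simply makes explicit what the paper leaves implicit, namely that every encoding added at line~\ref{ICA:complete_tuples} has Hamming weight at most $i_\texttt{max}$, every encoding added at line~\ref{ICA:enc_not_final} lies in $\mathcal{B}(i_\texttt{max}+1)$, and the path-completion replacement in line~\ref{ICA:enc_final} swaps an encoding of weight $i_\texttt{max}+1-|S|$ for one of weight exactly $i_\texttt{max}+1$. There is nothing to compare approaches against, since the paper offers no separate argument.
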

\subsubsection{Tightness of the bound on identifiability under arbitrary routing}\label{subsubsec:Design - arbitrary routing}
In this section we  show that the bound given by Theorem \ref{th:exp_max} can be achieved tightly  for a specific family of topologies constructed  via ICA.

\begin{prop}[Tightness of Theorem \ref{th:exp_max}]
\label{prop:tightness_arbitrary_routing}

For any $m\in \mathbb{Z}^+$ (positive integer) and $\bar{d}>0$, there exists a set $P$ of $m$ monitoring paths with average length $\bar{d}$, such that the number of nodes identifiable by monitoring $P$ equals the bound given in Theorem  \ref{th:exp_max}:

\begin{align*}
 \psi^{*{\texttt{AR}}}(m,\bar{d}\hspace{+0.05em})
= \sum_{i=1}^{i_{\texttt{max}}} {m \choose i} +
\left\lfloor
\frac{N_{\texttt{max}} - \sum_{i=1}^{i_{\texttt{max}}} i \cdot {m \choose i}   }{i_{\texttt{max}}+1}
\right\rfloor.
\end{align*}

\end{prop}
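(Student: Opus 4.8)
The plan is to prove tightness by exhibiting an explicit construction and showing it achieves the bound --- and in fact the excerpt already hands us this construction: the Incremental Crossing Arrangement (ICA, Algorithm~\ref{alg:ICA}). So the proof is essentially a verification that ICA terminates and produces a valid set of monitoring paths whose node set has exactly $\psi^{*\texttt{AR}}(m,\bar d)$ identifiable nodes. First I would fix $m$ and $\bar d$, compute $N_\texttt{max}$, $i_\texttt{max}$ as in Theorem~\ref{th:exp_max}, and run ICA to obtain the encoding set $B_V$. I would then establish three things in order: (i) $|B_V| = \psi^{*\texttt{AR}}(m,\bar d)$; (ii) the induced paths have average length exactly $\bar d$ (equivalently, $\sum_{k=1}^m \ell_k(B_V) = N_\texttt{max}$ with each $d_k\in\{\lfloor\bar d\rfloor,\lfloor\bar d\rfloor+1\}$); (iii) every node in the constructed topology is $1$-identifiable, which by Lemma~\ref{def:identifiability} reduces to showing all encodings in $B_V$ are distinct and nonzero.

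Step (iii) is the easiest: ICA generates $B_V$ as a union of the sets $\mathcal{B}(i)$ for $i=1,\dots,i_\texttt{max}$ together with a subset $B^*\subseteq\mathcal{B}(i_\texttt{max}+1)$ (possibly after one swap in path completion), and all these are subsets of distinct ``levels'' of the Boolean cube, so distinctness is automatic; nonzero-ness holds because $\mathbf 0\notin\mathcal{B}(i)$ for $i\ge 1$. I also need to check that the topology-construction step (adding an edge between consecutive nodes along each path) does not change any node's incident set --- it does not, because by construction the incident set of $v$ is exactly $\{i : b(v)|_i = 1\}$ regardless of the ordering chosen within each path. For (i), the key arithmetic identity is that $\sum_{i=1}^{i_\texttt{max}}{m\choose i}$ full-weight-$i$ encodings are included, and then the number of weight-$(i_\texttt{max}+1)$ encodings that $B^*$ can include, subject to $\ell_k(B\cup B_V)\le d_k$ for all $k$, is exactly $\big\lfloor (N_\texttt{max}-\sum_{i=1}^{i_\texttt{max}} i{m\choose i})/(i_\texttt{max}+1)\big\rfloor$ --- because each such encoding consumes $i_\texttt{max}+1$ units of total path-length budget, and the total residual budget is $N_\texttt{max}-\sum_{i\le i_\texttt{max}} i{m\choose i}$.

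The main obstacle --- the one step that genuinely needs an argument rather than bookkeeping --- is proving that the greedy/maximal choice of $B^*$ in line~\ref{ICA:F}--\ref{ICA:enc_not_final} actually realizes this floor value, i.e. that one can pack $\lfloor R/(i_\texttt{max}+1)\rfloor$ weight-$(i_\texttt{max}+1)$ encodings (with $R$ the residual budget) while keeping each per-path load $\ell_k$ within $[d_k-1,d_k]$. This is a combinatorial feasibility / balanced-packing claim: the loads $\ell_k$ induced by the complete levels $\bigcup_{i\le i_\texttt{max}}\mathcal{B}(i)$ are all equal (by symmetry, each equals $\sum_{i=1}^{i_\texttt{max}}{m-1\choose i-1}$), and the $d_k$ differ by at most one, so the residual per-path capacities $d_k - \ell_k(B_V^{(0)})$ are themselves nearly balanced; I would argue that a symmetric/round-robin choice of weight-$(i_\texttt{max}+1)$ encodings (e.g. cyclic shifts of a fixed pattern) distributes the load evenly enough that the length constraints are never violated until the budget $R$ is exhausted to within one encoding, which is exactly what the floor accounts for. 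I would invoke the footnote's observation that $B_V$ does not contain all of $\mathcal{B}(i_\texttt{max}+1)$ to guarantee there is always room to run either the packing or the path-completion swap. Finally, path completion changes one weight-$(i_\texttt{max}+1-|S|)$ encoding into a weight-$(i_\texttt{max}+1)$ encoding, which does not alter $|B_V|$ and raises the $|S|$ short paths to their target length $d_k$, so the average length becomes exactly $\bar d$; combined with (i) and (iii), this gives $\phi_1(P) = |B_V| = \psi^{*\texttt{AR}}(m,\bar d)$, matching the upper bound of Theorem~\ref{th:exp_max} and hence proving tightness.
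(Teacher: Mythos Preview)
Your proposal is correct and follows essentially the same approach as the paper: both verify that ICA produces a set $B_V$ of exactly $\psi^{*\texttt{AR}}(m,\bar d)$ distinct nonzero encodings, with the arithmetic hinging on the identity $m\sum_{i=0}^{i_\texttt{max}-1}\binom{m-1}{i}=\sum_{i=1}^{i_\texttt{max}}i\binom{m}{i}$ and on path completion preserving $|B_V|$. Your treatment is in fact slightly more careful than the paper's in explicitly flagging the balanced-packing feasibility of $B^*$ as the nontrivial step and sketching a round-robin argument, whereas the paper largely asserts the count $\big\lfloor\sum_k(d_k-d(i_\texttt{max}))/(i_\texttt{max}+1)\big\rfloor$ without separately justifying achievability.
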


\begin{proof}

We recall that 
the ICA technique builds a topology by creating nodes with unique encodings, in increasing order of crossing number, up to $(i_\texttt{max}+1)$. 
To prove the proposition, we need to show that the number of identifiable nodes is equal to the one provided by the bound of Theorem \ref{th:exp_max}.
ICA initially generates all the encodings of  
$\mathcal{B}(i)$, for $i=1, \ldots, i_\texttt{max}$.
As a consequence, notice that each path will traverse at least 
$d(i_\texttt{max})\triangleq
 \sum_{i=0}^{i_\texttt{max}-1}{{m-1}\choose i}$ identifiable nodes. In fact, the encodings of the nodes of $\mathcal{I}(p_i)$ (identifiable nodes traversed by path $p_i$), must have a "1" in the $i$-th position. Therefore the number of distinct encodings corresponding to nodes of $\mathcal{I}(p_i)$ is at least equal to the number of binary sequences of $(m-1)$ elements, with up to $(i_\texttt{max}-1)$ ones, which is 
$d(i_\texttt{max})$. 


%
Under incremental crossing arrangement, each path also traverses  other nodes with crossing number equal to $(i_\texttt{max}+1)$.
Each of these nodes will appear in exactly 
$(i_\texttt{max}+1)$ paths. 
 The number of such nodes  is therefore given
by 
 $\left\lfloor \frac{\sum_{k=1}^m (d_k - d(i_\texttt{max}))}{( i_{\texttt{max}}+1)}\right\rfloor$.

In conclusion, with this construction, ICA generates the following number of node encodings:
\begin{itemize}
\item $m \choose i$ encodings corresponding to nodes with crossing number  equal to $i$,  for $i=1,\ldots i_\texttt{max}$, and
\item   $\left\lfloor \frac{\sum_{k=1}^m (d_k - d(i_\texttt{max}))}{( i_{\texttt{max}}+1)}\right\rfloor$ encodings corresponding to nodes with crossing number equal to $(i_\texttt{max}+1)$. \end{itemize}

The number of generated encodings does not change if ICA applies the path completion procedure, which consists in a replacement of an encoding $b' \in \cup_{i=1}^{i_{\texttt{max}}} \mathcal{B}(i)$
with an encoding $b'' \in \mathcal{B}(i_\texttt{max}+1)$.
In both cases, ICA constructs the set $B_V$ in a way that each  encoding corresponds to a unique node, and the nodes are traversed by paths of average length $\bar{d}$, guaranteeing identifiability of all the nodes corresponding to the generated encodings.

In order to show that the number of identifiable nodes is equal to the one provided by the bound of Theorem \ref{th:exp_max},   we need to prove that $\left\lfloor\frac{\sum_{k=1}^m (d_k - d(i_\texttt{max}))}{( i_{\texttt{max}}+1)} \right\rfloor= 
\left\lfloor
\frac{N_{\texttt{max}} - \sum_{i=1}^{i_{\texttt{max}}} i \cdot {m \choose i}}{(i_{\texttt{max}}+1)}
\right\rfloor$, which  holds because $ \sum_{k=1}^m d_k=m\cdot \bar{d}=N_{\texttt{max}}$, and $m \cdot d(i_\texttt{max}) = m \cdot\sum_{i=0}^{i_{\texttt{max}}-1} {{m-1} \choose i}=\sum_{i=1}^{i_{\texttt{max}}} i \cdot {m \choose i}$, which can easily be proven by expanding the binomial coefficients.


\end{proof}

Notice that Proposition \ref{prop:tightness_arbitrary_routing}
requires $\bar{d}\leq 2^{m-1}$ as having longer paths would require
at least a path to traverse different nodes with duplicate encodings, losing identifiability with respect to the bound value.
 
{\color{black}{
While Proposition \ref{prop:tightness_arbitrary_routing}
gives a characterization of sufficient conditions for building a network topology achieving the bound,
we 
note that there exist topologies that do not meet the conditions, but still achieve the bound. We leave the characterization of necessary conditions for achieving the bound defined in Theorem IV.1 to future work.

\subsection{Consistent routing}
As we have seen in Theorem \ref{th:exp_max}, given a number of monitoring paths, the number of  identifiable nodes can be exponential  in the number of paths.
Nevertheless the bound of Theorem \ref{th:exp_max} is achieved  only when the routing scheme allows 
 paths to traverse arbitrary sequences of nodes.

If routing needs to meet additional requirements,
%
%
%
the theoretical bound given by Theorem \ref{th:exp_max} can be reduced.

We now consider the impact of the routing scheme on the  identifiability of nodes via Boolean tomography.
\subsubsection{Identifiability bound}
{\color{black}{
In the sequel, we assume that paths  satisfy the following property of {\em routing consistency}.
\begin{definition}
A set of paths $P$ is \emph{consistent} if $\forall p,\: p'\in P$ and any two nodes $u$ and $v$ traversed by both paths (if any), $p$ and $p'$  follow the same sub-path between $u$ and $v$.
\label{def:consistent_routing}
\end{definition}

Figure \ref{fig:ICA_B} is an example of non-consistent routing. Indeed, some monitoring paths  traverse different routes between  the same pair of nodes. For example paths $p_1$ and $p_3$ choose different routes to go from node $1110$ to node $1010$,  across nodes  $1001$ and  $0011$, respectively.
Nevertheless, if $p_1$ followed the same route as $p_3$, through node $0011$,  the node currently having encoding $1001$ would have the  new encoding $0001$, and it would no longer be identifiable due to the simultaneous presence of another node with the same encoding.

An example of consistent routing  of monitoring paths is instead given in Figure \ref{fig:all_1_id_graph_usp}.

{\color{black}We remark that routing consistency
is satisfied by many practical routing protocols, including but not limited to shortest path routing (where ties are broken with a unique deterministic rule).
 Note that routing consistency implies that paths are cycle-free.}

 We define the  {\em path matrix} of  $\hat{p}_i$ as a binary matrix $M(\hat{p}_i)$, in which each  row is the binary encoding of a node on the path, and rows are sorted according to the sequence $\hat{p}_i$. Notice that by definition $M(\hat{p}_i)|_{*,i}$ has only ones, i.e., $M(\hat{p}_i)|_{r,i} =1, \ \forall r$. 



\begin{lemma}\label{le:equal_rows}
Under the assumption of consistent routing, if  any two different rows of the matrix $M(\hat{p}_i)$ are equal, then the corresponding nodes are not 1-identifiable.
\end{lemma}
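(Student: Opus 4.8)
The plan is to argue by contradiction, exploiting the structure that consistent routing forces on the submatrix of $M(\hat p_i)$ between two equal rows. Suppose rows $r$ and $r'$ of $M(\hat p_i)$, with $r < r'$ (in the order induced by $\hat p_i$), are identical, and let $u$ and $v$ be the two corresponding nodes on $p_i$. Then $b(u) = b(v)$, so by Lemma~\ref{def:identifiability} (the $1$-identifiability characterization) neither $u$ nor $v$ is $1$-identifiable \emph{provided} $u \neq v$ as network nodes. So the first step is to rule out $u = v$: since consistent routing implies paths are cycle-free (remarked just above the lemma), $p_i$ visits each node at most once, hence the rows $r$ and $r'$ correspond to genuinely distinct nodes. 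This already gives the conclusion directly from Lemma~\ref{def:identifiability}, since two distinct nodes with the same binary encoding are both non-$1$-identifiable.

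First I would make the cycle-free observation precise: by Definition~\ref{def:consistent_routing}, if a path revisited a node it would have two different sub-paths (the trivial empty one and the loop) between that node and itself, contradicting consistency; so every node appears in at most one row of $M(\hat p_i)$, and two distinct rows $r \neq r'$ correspond to two distinct nodes $u \neq v$. Next, by construction of the path matrix, row $r$ of $M(\hat p_i)$ is exactly $b(u)$ and row $r'$ is exactly $b(v)$; the hypothesis that the rows are equal is precisely $b(u) = b(v)$. Finally I would invoke Lemma~\ref{def:identifiability}: since $u$ and $v$ are distinct nodes with $b(u) = b(v)$, the condition ``$\forall v_j \neq v_i,\ b(v_j) \neq b(v_i)$'' fails for both $u$ and $v$, so neither is $1$-identifiable, which is what the lemma asserts.

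I do not expect a serious obstacle here — the lemma is essentially a translation of Lemma~\ref{def:identifiability} into the path-matrix language, and the only substantive input from consistent routing is that it forbids cycles, which guarantees the two equal rows really do belong to two different nodes rather than being the same node listed twice. The one point worth stating carefully is why the path-matrix framing is faithful: row $k$ of $M(\hat p_i)$ being defined as the binary encoding $b(\cdot)$ of the $k$-th node of $\hat p_i$ means row-equality is literally encoding-equality, with no loss of information. (Note that consistency is not strictly needed to conclude non-identifiability \emph{if} one already knows the two rows are distinct nodes; its role is purely to certify that distinct row indices $\Rightarrow$ distinct nodes, via acyclicity.)
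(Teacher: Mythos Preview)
Your proposal is correct and follows essentially the same approach as the paper: use the fact that consistent routing forces paths to be cycle-free so that distinct rows of $M(\hat p_i)$ correspond to distinct nodes, then apply Lemma~\ref{def:identifiability} to conclude that two distinct nodes with identical binary encodings are not $1$-identifiable. The paper's proof is simply a terser version of your argument, omitting the explicit justification of acyclicity that you spell out.
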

\begin{proof}
Under consistent routing, the path $\hat{p}_i$ cannot contain any cycle, so every row of  $M(\hat{p}_i)$ corresponds to a different node.
If two different nodes have the same binary encoding,  by Lemma \ref{def:identifiability},  the two nodes are not identifiable.
\end{proof}


\begin{definition}
A column $M(\hat{p})|_{*,k}$ ($k=1,\ldots,m$) of a path matrix $M(\hat{p})$ has {\em consecutive ones} if all the ``1''s appear in consecutive rows, i.e., for any two rows $i$ and $j$ ($i<j$), if $M(\hat{p})|_{i,k}=M|_{j,k}=1$, then $M|_{h,k}=1$ for all $i\leq h \leq j$.
\end{definition}
\begin{lemma}\label{le:c1p}
Under the assumption of consistent routing, all the columns in all the path matrices have consecutive ones.
\end{lemma}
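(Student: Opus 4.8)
The plan is to argue by contradiction, exploiting the defining property of consistent routing: two paths that share two common nodes must share the entire sub-path between them. Fix a path, written as an ordered sequence $\hat{p}_a=(u_1,u_2,\ldots,u_d)$, and fix a column index $k\in\{1,\ldots,m\}$. By construction the $r$-th entry of $M(\hat{p}_a)|_{*,k}$ is $b(u_r)|_k$, which by Observation~\ref{obs:norm_of_b} equals $1$ precisely when $u_r\in p_k$. Assume, for contradiction, that this column does \emph{not} have consecutive ones; then there exist rows $i<h<j$ with $u_i,u_j\in p_k$ but $u_h\notin p_k$.

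Next I would pin down the sub-path of $\hat{p}_a$ joining $u_i$ and $u_j$. Since consistent routing implies every path is cycle-free (the same observation used in the proof of Lemma~\ref{le:equal_rows}), the nodes $u_1,\ldots,u_d$ are pairwise distinct, so the unique portion of $\hat{p}_a$ between $u_i$ and $u_j$ is exactly the consecutive segment $u_i,u_{i+1},\ldots,u_j$, and this segment contains $u_h$.

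Then I would invoke Definition~\ref{def:consistent_routing} for the pair of paths $p_a$ and $p_k$: both traverse $u_i$ and both traverse $u_j$, hence they follow the same sub-path between these two nodes. Since that common sub-path is $u_i,u_{i+1},\ldots,u_j$, the path $p_k$ traverses every node on it, in particular $u_h\in p_k$, contradicting $u_h\notin p_k$. Therefore no such $h$ exists, the ones in column $k$ of $M(\hat{p}_a)$ are consecutive, and since $a$ and $k$ were arbitrary the statement holds for all columns of all path matrices.

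I do not expect a genuine obstacle here; the only point needing mild care is the implicit claim that ``the sub-path of $\hat{p}_a$ between $u_i$ and $u_j$'' is well defined and coincides with the consecutive segment $u_i,\ldots,u_j$ — which is exactly where cycle-freeness (a consequence of consistency) is used — together with the trivial degenerate cases ($k=a$, for which the column is all ones, and $j=i+1$, for which the consecutive-ones condition is vacuous).
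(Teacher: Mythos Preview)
Your proof is correct and follows essentially the same approach as the paper's: both argue by contradiction, picking two rows with a one and an intermediate row with a zero in some column $k\neq a$, and then invoking Definition~\ref{def:consistent_routing} on the shared nodes to force $p_k$ through the intermediate node. Your write-up is slightly more careful in that you explicitly justify the well-definedness of the sub-path via cycle-freeness and note the degenerate cases, whereas the paper leaves these implicit.
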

\begin{proof}
 The assertion is true for $M(\hat{p}_i)|_{*,i}$ since it contains only ones. Let us consider 
 column $M(\hat{p}_i)|_{*,j}$, with
  $j \not = i $. Assume by contradiction
 that there are two rows $k_1< k_2$ s.t. $M(\hat{p}_i)|_{k_1,j}=M(\hat{p}_i)|_{k_2,j}=1$ but
 there is a row $h$ with $k_1 <h<k_2$ for which $M(\hat{p}_i)|_{h,i}=0$. 
 Let $v_1$, $v_2$, and $v_h$ be the nodes with encodings $M(\hat{p}_i)|_{k_1,*}$, $M(\hat{p}_i)|_{k_2,*}$, and $M(\hat{p}_i)|_{h,*}$, respectively.
Then the paths $\hat{p}_i$ and $\hat{p}_j$  traverse both nodes $v_1$ and $v_2$ following different paths, of which only $\hat{p}_i$ traverses node $v_h$, in contradiction with consistent routing.
\end{proof}

\begin{lemma}\label{le:numero}
\color{black}{
Given $m=|P|>1$ consistent routing paths, each path $p_i$ having length $d_i$, 
the maximum number of different encodings in the rows of $M(\hat{p}_i)$ is upper-bounded by $\min \{d_i;  2 \cdot (m-1)\}$.}
\end{lemma}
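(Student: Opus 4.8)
The plan is to bound the number of distinct rows of $M(\hat p_i)$ in two independent ways and take the smaller. One bound is immediate: $M(\hat p_i)$ has exactly $d_i$ rows, one per node of $\hat p_i$, so it cannot contain more than $d_i$ distinct rows. All the work lies in proving the bound $2(m-1)$, and for this I would lean on the consecutive-ones structure guaranteed by Lemma \ref{le:c1p}.

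First I would discard column $i$: by definition $M(\hat p_i)|_{*,i}$ is all ones, so every row carries a $1$ in position $i$ and the distinct rows of $M(\hat p_i)$ are in one-to-one correspondence with the distinct ``reduced rows'' obtained by deleting column $i$. This turns the problem into counting the distinct patterns produced by the remaining $m-1$ columns. Next I would read the reduced matrix column by column: by Lemma \ref{le:c1p} each column has its ones in consecutive rows, so scanning it from the first node of $\hat p_i$ to the last, its value flips at most twice (one $0\to1$ flip and one $1\to0$ flip). To make the bookkeeping exact I would pad the reduced matrix with an all-zero row on top and an all-zero row at the bottom; this preserves the consecutive-ones property and, more importantly, pins every column's block of ones strictly inside, so each of the $m-1$ columns still flips at most twice across the padded range. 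Taking a union bound over columns, the reduced-row value changes at most $2(m-1)$ times as we descend the padded sequence, which therefore decomposes into at most $2(m-1)+1$ maximal constant blocks. Since the top and bottom padding rows are both the all-zero vector, the first and last blocks carry the same value, so the number of distinct reduced rows — in particular among the $d_i$ original rows — is at most $2(m-1)$. Combining with the trivial bound yields $\min\{d_i,\,2(m-1)\}$.

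The step I expect to be the real obstacle is the off-by-one: a naive count of ``change points'' gives $2(m-1)+1$ blocks and hence only the weaker bound $2(m-1)+1$. The padding device (equivalently, the observation that any node on $\hat p_i$ not touched by any other path has reduced encoding $\mathbf 0$, so such nodes at the two ends force the extreme blocks to coincide) is exactly what shaves it down to $2(m-1)$. I would also check the degenerate cases so that the argument is airtight: a column of all zeros (a path disjoint from $p_i$) and a column of all ones (a path covering all of $p_i$) each contribute zero flips, and the case where every reduced row is already $\mathbf 0$ gives a single block; in all of these the count stays at or below $2(m-1)$, using only $m>1$.
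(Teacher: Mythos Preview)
Your proof is correct and follows essentially the same approach as the paper: discard the all-ones column $i$, invoke Lemma~\ref{le:c1p} to bound the number of $0\leftrightarrow 1$ flips in each of the remaining $m-1$ columns by two, and conclude that at most $2(m-1)$ distinct reduced rows can appear. Your padding device for resolving the off-by-one (from $2(m-1)+1$ constant blocks down to $2(m-1)$ distinct values) is in fact more explicit than the paper's treatment, which simply states the bound $2(m-1)$ after counting flips without spelling out why the extra block does not cost an extra value.
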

\begin{proof} While the number of different encodings appearing in the rows of $M(\Hat{p}_i)$ is trivially {\color{black}bounded by $d_i$,} it can even be lower. 
By considering each column of $M(\hat{p}_i)$ separately we observe the following.
First, column $M(\hat{p}_i)|_{*,i}$ contains only ones. Second,
for any column $M(\hat{p}_i)|_{*,j}$ with $j \neq i$, it holds, by Lemma \ref{le:c1p}, that
it has a consecutive ones.

We say that column $k$ has a {\em flip} in row $r$ if $M(\hat{p}_i)|_{r-1,k}\neq M(\hat{p}_i)|_{r,k}$.
Due to Lemma \ref{le:c1p} any column of $M(\hat{p}_i)$ can have up to two flips or it would create a fragmented sequence of ones, violating Lemma \ref{le:c1p}.
In fact, if the column starts with a 0 in the first row, it can flip from 0 to 1 in row $r_1$ and then back in row $r_2$, with $r_2 >r_1$, but if it flips from 1 to 0 it can not flip back in a successive column.
If instead the column starts with a 1 in the first row, it can only flip once.
In order to have a change in the encoding contained in any two successive rows $r-1$ and $r$ of the matrix $M(\hat{p}_i)$, i.e., $M(\hat{p}_i)|_{r-1,*} \neq M(\hat{p}_i)|_{r,*}$,
there must be at least a column  that flips in $r$.
The  number of columns that can flip is $m-1$ and each of them can  flip at most two times.
The number of different  rows that can be observed in $M(\hat{p}_i)$  is therefore upper-bounded by the smallest between the path length $d_i$ and $2\cdot(m-1)$.
\end{proof}
}

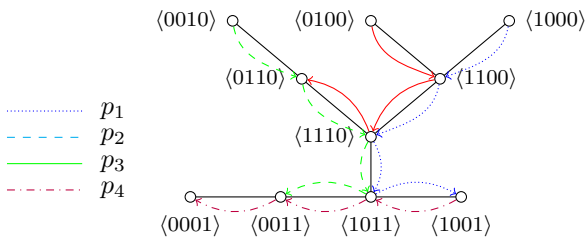
\begin{figure}[h]
	\centering

\begin{tikzpicture}

\tikzset{pallino/.style={draw,circle,fill=white,minimum size=4pt,
                            inner sep=0pt}}
\tikzset{mylab/.style={black, minimum size=0.5cm}}

  \draw (0,0) node (0010) [pallino,label=left:\footnotesize{$\LD 0010 \RD$}] {}
        -- ++(320:1.2cm) node (0110) [pallino,label=left:\footnotesize{$\LD 0110 \RD$}] {}
       -- ++(320:1.2cm) node (1110) [pallino,label=left:\footnotesize{$\ \LD 1110 \RD$}] {}
           -- ++(270:0.8cm) node (1011) [pallino]{}
           --++ (180:1.2cm) node (0011) [pallino]{}
            --++ (180:1.2cm) node (0001) [pallino]{};
    
 \draw (1110) -- ++(40:1.2cm) node (1100) [pallino,label= right:\footnotesize{$\LD 1100 \RD$}] {}   
 --++ (40:1.2cm) node (1000) [pallino,label= right:\footnotesize{$\LD 1000 \RD$}] {}  ; 
 
  \draw (1100) -- ++(140:1.2cm) node (0100) [pallino,label= left:\footnotesize{$\LD 0100 \RD$}] {}  ; 
\draw (1011) -- ++(360:1.2cm) node (1001) [pallino] {}  ;

        \node [mylab,below of = 1011, node distance=0.15in] (blank) {\footnotesize{$\ \LD 1011 \RD\  $} };
        \node [mylab,below of = 0011, node distance=0.15in] (blank) {\footnotesize{$\ \LD 0011 \RD\  $} };
        \node [mylab,below of = 0001, node distance=0.15in] (blank) {\footnotesize{$\ \LD 0001 \RD\  $} };
         \node [mylab,below of = 1001, node distance=0.15in] (blank) {\footnotesize{$\ \LD 1001 \RD\  $} };

\draw [->,green,dashed] (0010) to [out=280,in=160] (0110);
\draw [->,green,dashed] (0110) to [out=280,in=160] (1110);
\draw [->,red] (0100) to [out=280,in=160] (1100);

\draw [->,blue,densely dotted] (1000) to [out=260,in=20] (1100);
\draw [->,blue,densely dotted] (1100) to [out=260,in=20] (1110);
\draw [->,red] (1100) to [out=190,in=70] (1110);
\draw [->,red] (1110) to [out=110,in=350] (0110);
\draw [->,green,dashed] (1110) to [out=240,in=120] (1011);
\draw [->,blue,densely dotted] (1110) to [out=300,in=60] (1011);
\draw [->,blue,densely dotted] (1011) to [out=30,in=150] (1001);
\draw [->,green,dashed] (1011) to [out=150,in=30] (0011);

\draw [->,purple,dashdotted] (1001) to [out=210,in=330] (1011);
\draw [->,purple,dashdotted] (1011) to [out=210,in=330] (0011);
\draw [->,purple,dashdotted] (0011) to [out=210,in=330] (0001);

\draw[color=blue,densely dotted] (-3cm, -1.2cm) --  (-2cm, -1.2cm) node[draw=none,fill=none] (pippo) [label=right:{{\color{black}$p_1$}}]{};
\draw[cyan,dashed] (-3cm, -1.55cm) -- (-2cm, -1.55cm) node[draw=none,fill=none] (pippo) [label=right:{{\color{black}$p_2$}}]{};
\draw[color=green] (-3cm, -1.9cm) -- (-2cm, -1.9cm) node[draw=none,fill=none] (pippo) [label=right:{{\color{black}$p_3$}}]{};
\draw[color=purple,dashdotted] (-3cm, -2.25cm) -- (-2cm, -2.25cm) node[draw=none,fill=none] (pippo) [label=right:{{\color{black}$p_4$}}]{};

	\end{tikzpicture}

 \caption{Consistent routing paths identifying all nodes of the network.} \label{fig:all_1_id_graph_usp}
\end{figure}

}}
{\color{black}{

For example, the matrices of the paths of Figure \ref{fig:all_1_id_graph_usp}  
have columns with consecutive ones and each column flips at most twice, so the number of different rows is lower  than, or equal to $2 \cdot(m-1)=6$.
For instance,  $M(\hat{p}_3)$ is:

{\footnotesize
\begin{equation}
\nonumber
M(\hat{p}_3)=\
\bbordermatrix{
 \mbox{flips}  &b_1 & b_2  & b_{3} &  b_{4} \cr
0 &0 &0  &1  &0 \cr
1 &0 &1  &1  &0 \cr
2 &1 &1  &1  &0  \cr
3 &1 &0  &1  &1 \cr
4 &0 &0  &1  &1 \cr
}
\end{equation}
}
}

We now give an upper bound on the number of identifiable nodes under consistent routing.

{\color{black}{

\begin{theorem} [Identifiability with consistent routing] \label{th:bound_consistent_routing}
 Given $n$ nodes, and a set $P$ of $m>1$ consistent routing paths, with average path length $\bar{d}$, the maximum number of identifiable nodes $\psiCPR$, for any $G$ and any location of the path endpoints,  is upper-bounded as in Theorem \ref{th:exp_max}, 
 \begin{align}
 \hspace{-.5em}
 \psiCPR\hspace{-.1em}(\hspace{-.05em}m,n,\bar{d}\hspace{+0.05em})
&\hspace{-.05em}\leq \hspace{-.05em} \min \hspace{-.15em}\left\{\hspace{-.05em}\sum_{i=1}^{i_{\texttt{max}}} \hspace{-.25em} {m \choose i} \hspace{-.25em}+\hspace{-.25em}
\left\lfloor\hspace{-.25em}
\frac{N_{\texttt{max}} - \sum_{i=1}^{i_{\texttt{max}}} i \cdot {m \choose i}   }{i_{\texttt{max}}+1}
\hspace{-.25em}\right\rfloor\hspace{-.25em}; n \hspace{-.10em}\right\},
\nonumber
\end{align}
\noindent
where
$i_{\texttt{max}}= \max \{
k \ | \ \sum_{i=1}^{k} i \cdot {m \choose i} \leq N_{\texttt{max}} \}$, 
  except that $N_{\texttt{max}}$ is now defined as follows:
 \begin{center}
     \begin{math}
     N_{\texttt{max}}=m \cdot \min 
     \{\bar{d}\ ; 2 \cdot (m-1)\}.
     \end{math}
 \end{center}\end{theorem}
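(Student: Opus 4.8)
The plan is to rerun the proof of Theorem~\ref{th:exp_max} almost verbatim, changing only the single step where the routing scheme enters. That proof rests on two ingredients: (i) a per-path bound on $|\mathcal{I}(p_i)|$, the number of identifiable nodes traversed by $p_i$, there supplied by Proposition~\ref{prop:dmax} as $\min\{d_i;\,2^{m-1}\}$; and (ii) a purely combinatorial argument that, once the ``encoding budget'' $N_{\texttt{max}}$ is fixed, the number of distinct encodings among the counted slots is maximized by taking all ${m \choose i}$ encodings of crossing number $i$ for $i=1,\dots,i_{\texttt{max}}$ and then spending the residual budget on encodings of crossing number at least $i_{\texttt{max}}+1$. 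Ingredient (ii) never refers to the routing scheme, so I would reuse it unchanged; all the work is in tightening ingredient (i) under consistency.

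First I would show that, under consistent routing, $|\mathcal{I}(p_i)|\le\min\{d_i;\,2(m-1)\}$. By Definition~\ref{def:consistent_routing} the sequence $\hat{p}_i$ is cycle-free, so its $d_i$ positions index $d_i$ distinct nodes, one per row of the path matrix $M(\hat{p}_i)$. Every identifiable node has a unique binary encoding (Lemma~\ref{def:identifiability}); equivalently, two equal rows of $M(\hat{p}_i)$ correspond to non-identifiable nodes (Lemma~\ref{le:equal_rows}). Hence $|\mathcal{I}(p_i)|$ is at most the number of \emph{distinct} rows of $M(\hat{p}_i)$, which Lemma~\ref{le:numero} bounds by $\min\{d_i;\,2(m-1)\}$. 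Summing over the $m$ paths and applying the union bound exactly as in Theorem~\ref{th:exp_max}, together with the elementary fact $\sum_{i=1}^m\min\{d_i;\,c\}\le\min\{\sum_{i=1}^m d_i;\,mc\}=m\min\{\bar{d};\,c\}$ (each summand is $\le c$, and the sum is $\le\sum_i d_i=m\bar{d}$) applied with $c=2(m-1)$, gives $\abs{\bigcup_{i=1}^m\mathcal{I}(p_i)}\le m\cdot\min\{\bar{d};\,2(m-1)\}=N_{\texttt{max}}$ with the new value of $N_{\texttt{max}}$.

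From here I would invoke ingredient (ii) without modification: since each identifiable node $v$ is counted $\chi(v)$ times in $N_{\texttt{max}}$ (Observation~\ref{obs:norm_of_b}) and there are only ${m \choose i}$ encodings of crossing number $i$, the number of distinct encodings among the $N_{\texttt{max}}$ counted slots is at most $\sum_{i=1}^{i_{\texttt{max}}}{m \choose i}+\floor{\frac{N_{\texttt{max}}-\sum_{i=1}^{i_{\texttt{max}}} i\cdot{m \choose i}}{i_{\texttt{max}}+1}}$ with $i_{\texttt{max}}$ as defined in the statement; intersecting with the trivial bound $n$ yields the claim, and the bounded/unbounded path-length specializations (analogues of Corollaries~\ref{corollary:exponential} and~\ref{co: old_bound}) follow the same way if wanted. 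The only non-mechanical point — hence where I expect to spend care — is the first step: making the chain ``identifiable $\Rightarrow$ unique encoding $\Rightarrow$ distinct row of $M(\hat{p}_i)$'' watertight, and in particular using cycle-freeness under consistent routing to identify ``rows of $M(\hat{p}_i)$'' with ``nodes of $p_i$'' so that Lemma~\ref{le:numero}'s count of distinct rows is genuinely an upper bound on $|\mathcal{I}(p_i)|$. Everything downstream of that reduction is identical to the proof of Theorem~\ref{th:exp_max}; consistency enters solely through the replacement of $2^{m-1}$ by $2(m-1)$ inside $N_{\texttt{max}}$.
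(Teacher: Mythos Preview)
Your proposal is correct and follows essentially the same approach as the paper: the paper's proof simply states that the argument is analogous to that of Theorem~\ref{th:exp_max}, with the only change being that the per-path bound on the number of distinct encodings is now supplied by Lemma~\ref{le:numero} (i.e., $\min\{d_i;\,2(m-1)\}$) rather than Proposition~\ref{prop:dmax}, which is exactly what you do. If anything, your write-up is more careful than the paper's in spelling out the chain through Lemmas~\ref{def:identifiability}, \ref{le:equal_rows}, and~\ref{le:numero} and in justifying the summation inequality.
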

 
\begin{proof}
The proof is analogous to the one of Theorem \ref{th:exp_max}, as again we want to minimize the number of ones in the encodings of the nodes in order to avoid repetitions. The difference with the arbitrary routing case lies in the value of $N_{\texttt{max}}$, that now is the sum, extended to all paths, of the bound shown in Lemma \ref{le:numero}. 
 \end{proof}

As we did in the case of arbitrary routing, 
we focus on the situation in which there is an upper  bound on the length of monitoring paths, but the individual path length is not fixed, nor is the average path length. In this case, we have the following variation of the bound due to the fact that 
$$ \bar{d} \leq \max_i \{d_i\} \leq d_{\texttt{max}}.$$

\begin{corollary}
[Identifiability under consistent routing, and bounded maximum path length]
\label{corollary:consistent_routing}
 Given a network and a set $P$ of $m>1$ consistent routing paths with maximum length $d_\texttt{max}$, the maximum number of identifiable nodes in the network is upper-bounded as in Theorem \ref{th:exp_max}, except that $N_{\texttt{max}}$ is now defined as: 
\noindent
$N_{\texttt{max}} =  m \cdot \min \{ d_\texttt{max}; \  2\cdot(m-1)\}$.
\end{corollary}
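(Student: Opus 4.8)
The plan is to derive this corollary from Theorem~\ref{th:bound_consistent_routing} by a substitution argument combined with a monotonicity observation, so that essentially no new combinatorics is needed.

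First I would record the elementary inequality relating the average and the maximum path length: for any set $P$ of $m$ paths with lengths $d_1,\dots,d_m$ we have $\bar d=\frac1m\sum_{i=1}^m d_i\le \max_i\{d_i\}\le d_{\texttt{max}}$. Consequently $m\cdot\min\{\bar d;\,2(m-1)\}\le m\cdot\min\{d_{\texttt{max}};\,2(m-1)\}$ in every case (when $\bar d\le 2(m-1)$ one uses $\bar d\le d_{\texttt{max}}$; when $\bar d>2(m-1)$ both sides equal $2m(m-1)$, since $d_{\texttt{max}}\ge\bar d>2(m-1)$). In other words, the quantity $N_{\texttt{max}}$ used in the proof of Theorem~\ref{th:bound_consistent_routing} is never larger than the quantity $N_{\texttt{max}}=m\cdot\min\{d_{\texttt{max}};\,2(m-1)\}$ claimed here; this latter value is still a valid upper bound on $\sum_{i=1}^m\min\{d_i;\,2(m-1)\}$, which is exactly the sum over all paths of the per-path bound of Lemma~\ref{le:numero}.

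Second I would check that the right-hand side of the bound in Theorem~\ref{th:exp_max}/Theorem~\ref{th:bound_consistent_routing}, viewed as a function of $N_{\texttt{max}}$, is non-decreasing. As $N_{\texttt{max}}$ grows, $i_{\texttt{max}}=\max\{k \ | \ \sum_{i=1}^{k} i{m\choose i}\le N_{\texttt{max}}\}$ is non-decreasing, so $\sum_{i=1}^{i_{\texttt{max}}}{m\choose i}$ is non-decreasing; and for a fixed value of $i_{\texttt{max}}$ the floor term $\big\lfloor (N_{\texttt{max}}-\sum_{i=1}^{i_{\texttt{max}}} i{m\choose i})/(i_{\texttt{max}}+1)\big\rfloor$ is obviously non-decreasing in $N_{\texttt{max}}$. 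At a breakpoint $N_{\texttt{max}}=\sum_{i=1}^{k} i{m\choose i}$ where $i_{\texttt{max}}$ increments from $k-1$ to $k$, a short check shows that just before the breakpoint the expression equals $\sum_{i=1}^{k-1}{m\choose i}+{m\choose k}-1$ and at it equals $\sum_{i=1}^{k}{m\choose i}$, i.e.\ it increases by one, so there is no downward jump. Hence the bound is monotone in $N_{\texttt{max}}$.

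Combining the two steps: by Theorem~\ref{th:bound_consistent_routing} applied with the genuine average length, the number of identifiable nodes is at most the bound evaluated at $m\cdot\min\{\bar d;\,2(m-1)\}$; by the inequality of step one and the monotonicity of step two this is at most the same bound evaluated at $N_{\texttt{max}}=m\cdot\min\{d_{\texttt{max}};\,2(m-1)\}$; and it is trivially at most $n$. This is exactly the claimed inequality. The only mildly non-routine point is the monotonicity at the breakpoints of $i_{\texttt{max}}$; I expect this small case analysis (essentially the computation $\lfloor(k{m\choose k}-1)/k\rfloor={m\choose k}-1$) to be the main obstacle, and everything else is immediate from the definitions.
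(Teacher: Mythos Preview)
Your proof is correct and follows the same route as the paper, which derives the corollary from Theorem~\ref{th:bound_consistent_routing} via the inequality $\bar d\le \max_i\{d_i\}\le d_{\texttt{max}}$. You are in fact more careful than the paper: the monotonicity of the bound in $N_{\texttt{max}}$ (including at the breakpoints of $i_{\texttt{max}}$) is tacitly assumed there, whereas you verify it explicitly.
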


} 
}
\subsubsection{Tightness of the bound and design insights}\label{subsubsec:Design - consistent routing}

\begin{figure}[t]

	\centering

\begin{tikzpicture}

\tikzset{pallino/.style={draw,circle,fill=white,minimum size=4pt,
                            inner sep=0pt}}

\tikzset{pallinorosso/.style={draw,circle,fill=red,minimum size=4pt,
                            inner sep=0pt}}
\tikzset{mylab/.style={black, minimum size=0.5cm}}


  \draw (0,0) node (1) [pallino] {}
        -- ++(0:0.8cm) node (2) [pallino] {}
        -- ++(0:0.8cm) node (3) [pallino] {}
        -- ++(0:0.8cm) node (4) [pallino] {}
        -- ++(0:0.8cm) node (5) [pallino] {}
        -- ++(0:0.8cm) node (6) [pallino] {}
        -- ++(0:0.8cm) node (7) [pallino] {}
        -- ++(0:0.8cm) node (8) [pallino] {};

 \draw (0.8,0.6) node (9) [pallino] {}
        -- ++(0:0.8cm) node (10) [pallino] {}
        -- ++(0:0.8cm) node (11) [pallino] {}
        -- ++(0:0.8cm) node (12) [pallino] {}
        -- ++(0:0.8cm) node (13) [pallino] {}
        -- ++(0:0.8cm) node (14) [pallino] {}
        -- ++(0:0.8cm) node (15) [pallino] {};      
        
 \draw (1.6,1.2) node (16) [pallino] {}
        -- ++(0:0.8cm) node (17) [pallino] {}
        -- ++(0:0.8cm) node (18) [pallino] {}
        -- ++(0:0.8cm) node (19) [pallino] {}
        -- ++(0:0.8cm) node (20) [pallino] {}
        -- ++(0:0.8cm) node (21) [pallino] {};   
        
 \draw (2.4,1.8) node (22) [pallino] {}
        -- ++(0:0.8cm) node (23) [pallino] {}
        -- ++(0:0.8cm) node (24) [pallino] {}
        -- ++(0:0.8cm) node (25) [pallino] {}
        -- ++(0:0.8cm) node (26) [pallino] {};  
        
  \draw (3.2,2.4) node (27) [pallino] {}
        -- ++(0:0.8cm) node (28) [pallino] {}
        -- ++(0:0.8cm) node (29) [pallino] {}
        -- ++(0:0.8cm) node (30) [pallino] {};       
        
 \draw (4.0,3.0) node (31) [pallino] {}
        -- ++(0:0.8cm) node (32) [pallino] {}
        -- ++(0:0.8cm) node (33) [pallino] {};  
 
 \draw (4.8,3.6) node (34) [pallino] {}
        -- ++(0:0.8cm) node (35) [pallino] {};     
        
\draw (5.6,4.2) node (36) [pallino] {};   


\draw [-] (2) -- (9);
\draw [-] (3) --(10)--(16) ;
\draw [-] (4) -- (11) -- (17)--  (22);
\draw [-] (5) --(12) --(18) -- (23) -- (27);
\draw [-] (6) -- (13) -- (19) -- (24) -- (28) --  (31);
\draw [-] (7) -- (14) -- (20) -- (25) -- (29) -- (32) -- (34);

\draw [-] (1) --(9) -- (16) -- (22) -- (27) -- (31) -- (34) --  (36);


\draw [->,red] (8) to [out=150,in=30] (7);      
\draw [->,red] (7) to [out=150,in=30] (6);     
\draw [->,red] (6) to [out=150,in=30] (5);    
\draw [->,red] (5) to [out=150,in=30] (4);   
\draw [->,red] (4) to [out=150,in=30] (3);  
\draw [->,red] (3) to [out=150,in=30] (2); 
\draw [->,red] (2) to [out=150,in=30] (1);

\draw [->,green] (15) to [out=150,in=30] (14);      
\draw [->,green] (14) to [out=150,in=30] (13);     
\draw [->,green] (13) to [out=150,in=30] (12);    
\draw [->,green] (12) to [out=150,in=30] (11);   
\draw [->,green] (11) to [out=150,in=30] (10);  
\draw [->,green] (10) to [out=150,in=30] (9); 
\draw [->,green] (9) to [out=185,in=60] (1); 

\draw [->,blue] (21) to [out=150,in=30] (20);      
\draw [->,blue] (20) to [out=150,in=30] (19);     
\draw [->,blue] (19) to [out=150,in=30] (18);    
\draw [->,blue] (18) to [out=150,in=30] (17);   
\draw [->,blue] (17) to [out=150,in=30] (16);  
\draw [->,blue] (16) to [out=185,in=60] (9); 
\draw [->,blue] (9) to [out=235,in=130] (2); 

\draw [->,magenta] (26) to [out=150,in=30] (25);      
\draw [->,magenta] (25) to [out=150,in=30] (24);     
\draw [->,magenta] (24) to [out=150,in=30] (23);    
\draw [->,magenta] (23) to [out=150,in=30] (22);   
\draw [->,magenta] (22) to [out=185,in=60]  (16);  
\draw [->,magenta] (16) to [out=235,in=130] (10); 
\draw [->,magenta] (10) to [out=235,in=130] (3); 

\draw [->,orange] (30) to [out=150,in=30] (29);      
\draw [->,orange] (29) to [out=150,in=30] (28);     
\draw [->,orange] (28) to [out=150,in=30] (27);    
\draw [->,orange] (27) to [out=185,in=60] (22);   
\draw [->,orange] (22) to [out=235,in=130]  (17);  
\draw [->,orange] (17) to [out=235,in=130] (11); 
\draw [->,orange] (11) to [out=235,in=130] (4); 

\draw [->,teal] (33) to [out=150,in=30] (32);      
\draw [->,teal] (32) to [out=150,in=30] (31);     
\draw [->,teal] (31) to [out=185,in=60] (27);    
\draw [->,teal] (27) to [out=235,in=130]  (23);   
\draw [->,teal] (23) to [out=235,in=130]  (18);  
\draw [->,teal] (18) to [out=235,in=130] (12); 
\draw [->,teal] (12) to [out=235,in=130] (5); 

\draw [->,brown] (35) to [out=150,in=30] (34);      
\draw [->,brown] (34) to [out=185,in=60](31);     
\draw [->,brown] (31) to [out=235,in=130] (28);    
\draw [->,brown] (28) to [out=235,in=130]  (24);   
\draw [->,brown] (24) to [out=235,in=130]  (19);  
\draw [->,brown] (19) to [out=235,in=130] (13); 
\draw [->,brown] (13) to [out=235,in=130] (6); 

\draw [->,violet] (36) to [out=185,in=60] (34);     
\draw [->,violet] (34) to [out=235,in=130](32);     
\draw [->,violet] (32) to [out=235,in=130] (29);    
\draw [->,violet] (29) to [out=235,in=130]  (25);   
\draw [->,violet] (25) to [out=235,in=130]  (20);  
\draw [->,violet] (20) to [out=235,in=130] (14); 
\draw [->,violet] (14) to [out=235,in=130] (7); 

\node [mylab,below of = 1, node distance=0.12in] (blank){\footnotesize{$t_1,t_2$} };
\node [mylab,below of = 2, node distance=0.12in] (blank){\footnotesize{$t_3$} };
\node [mylab,below of = 3, node distance=0.12in] (blank){\footnotesize{$t_4$} };
\node [mylab,below of = 4, node distance=0.12in] (blank){\footnotesize{$t_5$} };
\node [mylab,below of = 5, node distance=0.12in] (blank){\footnotesize{$t_6$} };
\node [mylab,below of = 6, node distance=0.12in] (blank){\footnotesize{$t_7$} };
\node [mylab,below of = 7, node distance=0.12in] (blank){\footnotesize{$t_8$} };

\node [mylab,right of = 36, node distance=0.12in] (blank){\footnotesize{$s_8$} };
\node [mylab,right of = 35, node distance=0.12in] (blank){\footnotesize{$s_7$} };
\node [mylab,right of = 33, node distance=0.12in] (blank){\footnotesize{$s_6$} };
\node [mylab,right of = 30, node distance=0.12in] (blank){\footnotesize{$s_5$} };
\node [mylab,right of = 26, node distance=0.12in] (blank){\footnotesize{$s_4$} };
\node [mylab,right of = 21, node distance=0.12in] (blank){\footnotesize{$s_3$} };
\node [mylab,right of = 15, node distance=0.12in] (blank){\footnotesize{$s_2$} };
\node [mylab,right of = 8, node distance=0.12in] (blank){\footnotesize{$s_1$} };

%
%
	
	\end{tikzpicture}

\vspace{-.3cm}	
	 \caption{An example of half-grid graph} \label{fig:half_grid}
	\vspace{-.5cm}	
\end{figure}
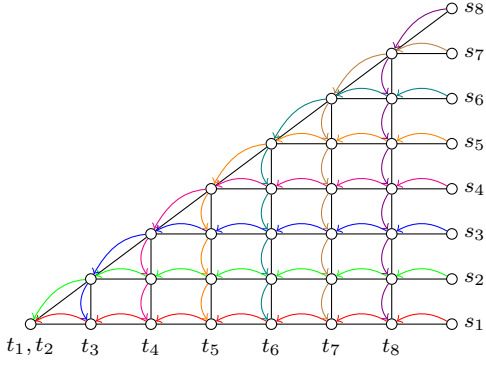 

\begin{figure}[t]
	\centering
\begin{tikzpicture}

\tikzset{pallino/.style={draw,circle,fill=white,minimum size=4pt,
                            inner sep=0pt}}

\tikzset{pallinorosso/.style={draw,circle,fill=red,minimum size=4pt,
                            inner sep=0pt}}
\tikzset{mylab/.style={black, minimum size=0.5cm}}


  \draw (0,0) node (1) [pallino] {};
  \draw (0:0.8cm) node (2) [pallino] {}
        -- ++(0:0.8cm) node (3) [pallino] {}
        -- ++(0:0.8cm) node (4) [pallino] {}
        -- ++(0:0.8cm) node (5) [pallino] {}
        -- ++(0:0.8cm) node (6) [pallino] {}
        -- ++(0:0.8cm) node (7) [pallino] {}
        -- ++(0:0.8cm) node (8) [pallino] {};

\draw (0.2,0.45) node (37) [pallinorosso] {};
\draw [-] (2) -- (37);

 \draw (0.8,0.6) node (9) [pallino] {}
        -- ++(0:0.8cm) node (10) [pallino] {}
        -- ++(0:0.8cm) node (11) [pallino] {}
        -- ++(0:0.8cm) node (12) [pallino] {}
        -- ++(0:0.8cm) node (13) [pallino] {}
        -- ++(0:0.8cm) node (14) [pallino] {}
        -- ++(0:0.8cm) node (15) [pallino] {};      
        
 \draw (1.6,1.2) node (16) [pallino] {}
        -- ++(0:0.8cm) node (17) [pallino] {}
        -- ++(0:0.8cm) node (18) [pallino] {}
        -- ++(0:0.8cm) node (19) [pallino] {}
        -- ++(0:0.8cm) node (20) [pallino] {}
        -- ++(0:0.8cm) node (21) [pallino] {};   
        
 \draw (2.4,1.8) node (22) [pallino] {};
\draw (3.2,1.8)node (23) [pallino] {}
        -- ++(0:0.8cm) node (24) [pallino] {}
        -- ++(0:0.8cm) node (25) [pallino] {}
        -- ++(0:0.8cm) node (26) [pallino] {};  
        
\draw (2.6,2.25) node (38) [pallinorosso] {};
   \draw [-] (23) -- (38);     
        
  \draw (3.2,2.4) node (27) [pallino] {}
        -- ++(0:0.8cm) node (28) [pallino] {}
        -- ++(0:0.8cm) node (29) [pallino] {}
        -- ++(0:0.8cm) node (30) [pallino] {};       
        
 \draw (4.0,3.0) node (31) [pallino] {}
        -- ++(0:0.8cm) node (32) [pallino] {}
        -- ++(0:0.8cm) node (33) [pallino] {};  
 
 \draw (4.8,3.6) node (34) [pallino] {}
        -- ++(0:0.8cm) node (35) [pallino] {};     
        
\draw (5.6,4.2) node (36) [pallino] {};   


\draw [-] (3) --(10)--(16) ;
\draw [-] (4) -- (11) -- (17)--  (22);
\draw [-] (5) --(12) --(18) -- (23);
\draw [-] (6) -- (13) -- (19) -- (24) -- (28) --  (31);
\draw [-] (7) -- (14) -- (20) -- (25) -- (29) -- (32) -- (34);

\draw [-] (1) --(37) -- (9) -- (16) -- (22) -- (38)--(27) -- (31) -- (34) --  (36);


\draw [->,red] (8) to [out=150,in=30] (7);      
\draw [->,red] (7) to [out=150,in=30] (6);     
\draw [->,red] (6) to [out=150,in=30] (5);    
\draw [->,red] (5) to [out=150,in=30] (4);   
\draw [->,red] (4) to [out=150,in=30] (3);  
\draw [->,red] (3) to [out=150,in=30] (2); 
\draw [->,red] (2) to [out=180,in=290] (37); 
\draw [->,red] (37) to [out=270,in=30] (1); 
\draw [->,green] (15) to [out=150,in=30] (14);      
\draw [->,green] (14) to [out=150,in=30] (13);     
\draw [->,green] (13) to [out=150,in=30] (12);    
\draw [->,green] (12) to [out=150,in=30] (11);   
\draw [->,green] (11) to [out=150,in=30] (10);  
\draw [->,green] (10) to [out=150,in=30] (9); 

\draw [->,green] (9) to [out=150,in=60] (37); 
\draw [->,green] (37) to [out=210,in=100] (1); 

\draw [->,blue] (21) to [out=150,in=30] (20);      
\draw [->,blue] (20) to [out=150,in=30] (19);     
\draw [->,blue] (19) to [out=150,in=30] (18);    
\draw [->,blue] (18) to [out=150,in=30] (17);   
\draw [->,blue] (17) to [out=150,in=30] (16);  
\draw [->,blue] (16) to [out=185,in=60] (9); 
\draw [->,blue] (9) to [out=235,in=340] (37); 
\draw [->,blue] (37) to [out=340,in=100] (2); 

\draw [->,magenta] (26) to [out=150,in=30] (25);      
\draw [->,magenta] (25) to [out=150,in=30] (24);     
\draw [->,magenta] (24) to [out=150,in=30] (23);    

\draw [->,magenta] (23) to [out=180,in=290] (38); 
\draw [->,magenta] (38) to [out=270,in=30] (22); 

\draw [->,magenta] (22) to [out=185,in=60]  (16);  
\draw [->,magenta] (16) to [out=235,in=130] (10); 
\draw [->,magenta] (10) to [out=235,in=130] (3); 

\draw [->,orange] (30) to [out=150,in=30] (29);      
\draw [->,orange] (29) to [out=150,in=30] (28);     
\draw [->,orange] (28) to [out=150,in=30] (27);    

\draw [->,orange] (27) to [out=150,in=60] (38); 
\draw [->,orange] (38) to [out=210,in=100] (22); 

\draw [->,orange] (22) to [out=235,in=130]  (17);  
\draw [->,orange] (17) to [out=235,in=130] (11); 
\draw [->,orange] (11) to [out=235,in=130] (4); 

\draw [->,teal] (33) to [out=150,in=30] (32);      
\draw [->,teal] (32) to [out=150,in=30] (31);

\draw [->,teal] (31) to [out=185,in=60] (27);    
\draw [->,teal] (27) to [out=235,in=340] (38); 
\draw [->,teal] (38) to [out=340,in=100] (23); 

\draw [->,teal] (23) to [out=235,in=130]  (18);  
\draw [->,teal] (18) to [out=235,in=130] (12); 
\draw [->,teal] (12) to [out=235,in=130] (5); 

\draw [->,brown] (35) to [out=150,in=30] (34);      
\draw [->,brown] (34) to [out=185,in=60](31);     
\draw [->,brown] (31) to [out=235,in=130] (28);    
\draw [->,brown] (28) to [out=235,in=130]  (24);   
\draw [->,brown] (24) to [out=235,in=130]  (19);  
\draw [->,brown] (19) to [out=235,in=130] (13); 
\draw [->,brown] (13) to [out=235,in=130] (6); 

\draw [->,violet] (36) to [out=185,in=60] (34);     
\draw [->,violet] (34) to [out=235,in=130](32);     
\draw [->,violet] (32) to [out=235,in=130] (29);    
\draw [->,violet] (29) to [out=235,in=130]  (25);   
\draw [->,violet] (25) to [out=235,in=130]  (20);  
\draw [->,violet] (20) to [out=235,in=130] (14); 
\draw [->,violet] (14) to [out=235,in=130] (7); 

\node [mylab,below of = 1, node distance=0.12in] (blank){\footnotesize{$t_1,t_2$} };
\node [mylab,below of = 2, node distance=0.12in] (blank){\footnotesize{$t_3$} };
\node [mylab,below of = 3, node distance=0.12in] (blank){\footnotesize{$t_4$} };
\node [mylab,below of = 4, node distance=0.12in] (blank){\footnotesize{$t_5$} };
\node [mylab,below of = 5, node distance=0.12in] (blank){\footnotesize{$t_6$} };
\node [mylab,below of = 6, node distance=0.12in] (blank){\footnotesize{$t_7$} };
\node [mylab,below of = 7, node distance=0.12in] (blank){\footnotesize{$t_8$} };

\node [mylab,right of = 36, node distance=0.12in] (blank){\footnotesize{$s_8$} };
\node [mylab,right of = 35, node distance=0.12in] (blank){\footnotesize{$s_7$} };
\node [mylab,right of = 33, node distance=0.12in] (blank){\footnotesize{$s_6$} };
\node [mylab,right of = 30, node distance=0.12in] (blank){\footnotesize{$s_5$} };
\node [mylab,right of = 26, node distance=0.12in] (blank){\footnotesize{$s_4$} };
\node [mylab,right of = 21, node distance=0.12in] (blank){\footnotesize{$s_3$} };
\node [mylab,right of = 15, node distance=0.12in] (blank){\footnotesize{$s_2$} };
\node [mylab,right of = 8, node distance=0.12in] (blank){\footnotesize{$s_1$} };

%
%
	
	\end{tikzpicture}

\vspace{-.3cm}	
	 \caption{An example of half-grid graph with two additional nodes.} \label{fig:half_grid_plus}
\end{figure}
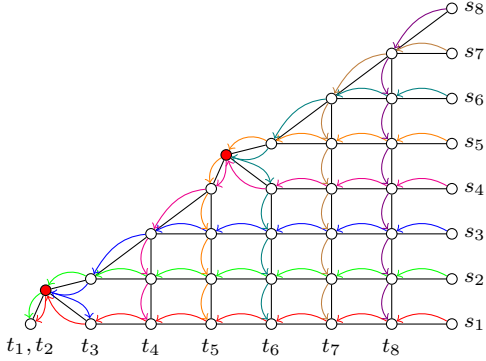

{\color{black}{
It must be noted that differently from  the case of arbitrary routing,
ICA is not always applicable to produce tight topologies, as additional requirements on the path length and number of paths are needed to ensure routing consistency.
Nevertheless, we can still use ICA for 
 certain values of $m$, $n$ and  $\bar{d}$, 
and obtain a  network topology that achieves the bound of Theorem \ref{th:bound_consistent_routing}. In particular we aim at creating a topology and routing scheme with the maximum number of nodes with unique encoding and minimum crossing number.

 First, we use ICA to generate the topology shown in Figure \ref{fig:half_grid}, that we name \emph{half-grid}. In this example, the number of paths is $m=8$. The figure highlights the source $s_i$ and destination $t_i$ of any path $p_i, \ i=1, \ldots, m$, where  $d_i =  8$ for all paths, hence $\bar{d}=m=8$. 
Observe that the half-grid satisfies the condition of routing consistency, and all the $n=\binom{8}{1}+\binom{8}{2}=36$ nodes are identifiable. 
In agreement with Observation \ref{obs:max_cross}, 
 the  maximum crossing number in this topology is  equal to $i_\texttt{max}=2$.

Such topology can be easily generalized by observing that its nodes are exactly those traversed by either one or two paths, hence it can be built for any $m$ paths, $n = m\cdot(m+1)/2$ nodes and $d_i=m$. In the resulting half-grid, routing is consistent and  all nodes are identifiable.

{\color{black}Then, in Figure \ref{fig:half_grid_plus}, we modified the half-grid of Figure \ref{fig:half_grid}, by adding two new nodes (the two red nodes of the figure)
using $m = 8$ paths, numbered as above, and  $d_1=\ldots=d_6=9$, $d_7 = d_8 = 8$, meaning that $\bar{d} = \frac{70}{8}=8.75$. 
Also in this case, we generated the node encodings in increasing order of the crossing number, and  the maximum crossing number is equal to $i_{\texttt{max}}=3$. Again, it holds that routing is consistent and that the bound of Theorem \ref{th:bound_consistent_routing} is achieved tightly, $\psiCPR={8 \choose 1} +{8 \choose 2} + \left\lfloor \frac{6}{3}\right\rfloor=38$.

We conclude that the topology of the half-grid can be modified by allowing paths to have longer lengths, adding some nodes with crossing number equal to 3 positioned in a way that 
routing is still consistent, while
the bound of Theorem \ref{th:bound_consistent_routing}  will still be tight.
 Notice that if $m\leq 4$, the half-grid topology meets the bound of Theorem \ref{th:bound_consistent_routing} for all values of $\bar{d}$. }

However, half-grid based topologies are not the only ones that can achieve the bound. 
An example is given in Figure \ref{fig:bacarozzost}  where ICA was used for  $m = 7$ consistent routing paths, each with length 12, except for one that has length 10, thus $\bar{d} = \frac{82}{7}$ and $d_\texttt{max}=12$. All the 39 nodes in the figure are identifiable, and so the bound of Theorem \ref{th:bound_consistent_routing} is achieved tightly and with nodes whose crossing number is always lower than or equal to 3. }
It remains open to find the general family of topologies that can achieve the bound in Theorem \ref{th:bound_consistent_routing}. 

\begin{figure}[t]
	\centering
	\includegraphics[width=0.75\columnwidth]{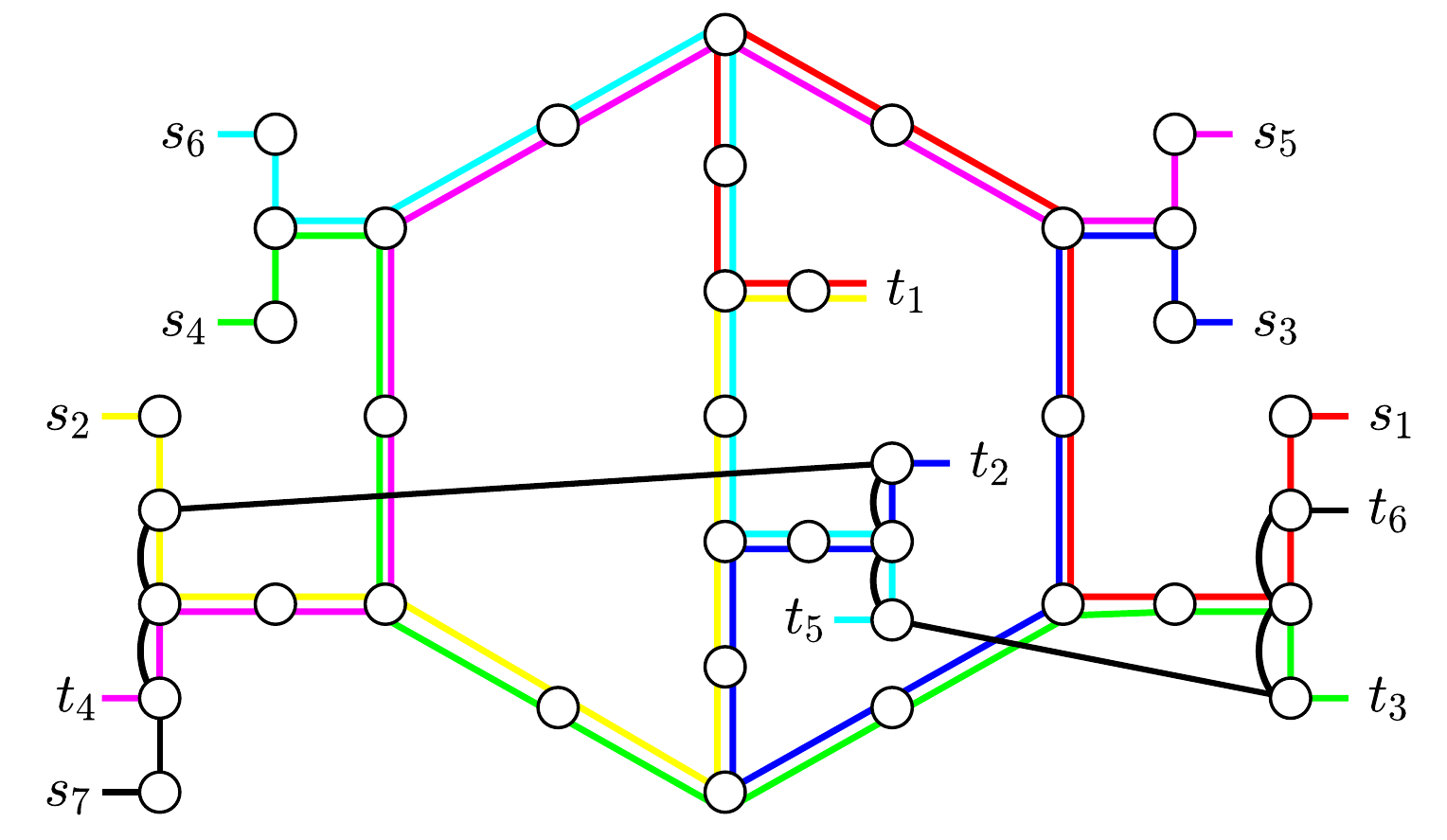}
\vspace{-.3cm}	
	 \caption{A topology that meets the bound of Theorem \ref{th:bound_consistent_routing} with $m=7$ and $\bar{d} = \frac{82}{7}$, and $d_\texttt{max}=12$.} \label{fig:bacarozzost}
\end{figure}

}}
{\color{black}
\subsection{Partially-consistent routing}

\label{subsec:part_cons_routing}

In this section, we relax the notion of routing consistency to provide a more general bound, which considers a limited number of violations of routing consistency.

\begin{definition}
\label{def:partially_consistent_routing}
If each path $p_i\in P$ can be divided into up to $q$ segments $s_1(p_i), s_2(p_i), \ldots, s_q(p_i)$, such that the property of routing consistency holds for the set $P_{\sfrac{1}{q}} = \cup_{p_i \in P} \{s_1(p_i), s_2(p_i), \ldots, s_q(p_i) \}$, then the routing scheme is called  $ \sfrac{1}{q}$~-~consistent.
\end{definition}

The following Lemma provides an analysis of the combinatorial patterns of consecutive ones under the assumption of $1/q$-consistent routing.

\begin{lemma}\label{le:kc1p}
Under the assumption of $\sfrac{1}{q}$-consistent routing, given a path $\hat{p}_i \in P$, all the columns $k=1, \ldots, m$ of the path matrix $M(\hat{p}_i)$ have up to $q$ sequences of consecutive ones.
\end{lemma}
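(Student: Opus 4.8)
The plan is to prove this by generalizing the contradiction argument behind Lemma~\ref{le:c1p}, which is exactly the case $q=1$ (a set is consistent iff it is $\sfrac{1}{1}$-consistent with the trivial one-segment decomposition). Fix a path $\hat{p}_i$ and a column index $k$, and order the rows of $M(\hat{p}_i)$ along $\hat{p}_i$, so that the $1$-entries of column $k$ are exactly the nodes of $p_i$ that also lie on $p_k$, while the $0$-entries are the nodes of $p_i$ outside $p_k$. Suppose, for contradiction, that column $k$ has at least $q+1$ maximal runs of consecutive ones, say $R_1,\ldots,R_{q+1}$ from top to bottom, separated by $q$ nonempty runs of zeros. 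For the $t$-th zero-run ($t=1,\ldots,q$) let $u_t$ be the node in the last row of $R_t$, $v_t$ the node in the first row of $R_{t+1}$, and $w_t$ any node in the intervening zero-run; then $u_t,v_t\in p_i\cap p_k$, whereas $w_t\in p_i\setminus p_k$ and $w_t$ lies strictly between $u_t$ and $v_t$ along $\hat{p}_i$.

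The core step, which mirrors the contradiction used in Lemma~\ref{le:c1p}, is: for each $t$, the nodes $u_t$ and $v_t$ cannot simultaneously lie inside a single segment of $p_i$ and inside a single segment of $p_k$. Indeed, if $u_t,v_t\in s_a(p_i)$ and $u_t,v_t\in s_b(p_k)$, then $s_a(p_i)$ and $s_b(p_k)$ are two paths of the consistent set $P_{\sfrac{1}{q}}$ both traversing $u_t$ and $v_t$, hence by Definition~\ref{def:consistent_routing} they follow the same sub-path between $u_t$ and $v_t$; but the sub-path inside $s_a(p_i)$ is the portion of $\hat{p}_i$ from $u_t$ to $v_t$, which contains $w_t\notin p_k$, while every node on the sub-path inside $s_b(p_k)$ belongs to $p_k$ --- a contradiction. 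Consequently, for every $t$ the pair $(u_t,v_t)$ must straddle a node at which two consecutive segments of $p_i$ meet, or a node at which two consecutive segments of $p_k$ meet.

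It then remains a counting argument. The $q$ sub-paths of $\hat{p}_i$ joining $u_t$ to $v_t$ have pairwise disjoint interiors (between $v_t$ and $u_{t+1}$ lies the whole run $R_{t+1}$), so each of the at most $q-1$ junction nodes of $p_i$ is interior to at most one of them, and therefore at most $q-1$ of the $q$ indices $t$ can be charged to a junction of $p_i$. The main obstacle is to show that the remaining index cannot be accounted for solely by the junctions of $p_k$: this requires controlling how the $q$ segments of $p_k$ are visited as one walks along $\hat{p}_i$, in particular ruling out that two distinct pairs $(u_t,v_t)$ straddle the same junction of $p_k$, which one gets from the fact that consistency inside each common pair of segments of $P_{\sfrac{1}{q}}$ preserves the relative order of the shared nodes. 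I expect the clean way to finish is to sharpen the choice of the witnesses $u_t,v_t$ (taking them as tight against the zero-runs as possible and, when a junction coincides with a run boundary, perturbing to an adjacent row) so that each of the $q$ zero-runs is forced to contain, in its interior along $\hat{p}_i$, a distinct junction of the segmentation of $p_i$; since there are only $q-1$ such junctions, this is the sought contradiction, giving at most $q$ runs. Because the argument is carried out for an arbitrary column $k$, the claim follows for all $k=1,\ldots,m$.
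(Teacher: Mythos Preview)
Your attempt is more careful than the paper's own argument, and the obstacle you flag is genuine --- in fact it cannot be overcome, because the statement as written does not follow from Definition~\ref{def:partially_consistent_routing}. The paper's proof simply asserts that the row-block of $M(\hat{p}_i)$ corresponding to each segment $s_j(p_i)$ satisfies the consecutive-ones property in every column (by appeal to Lemma~\ref{le:c1p} on $P_{\sfrac{1}{q}}$), and then concludes that $q$ blocks give at most $q$ runs. But Lemma~\ref{le:c1p} applied to the consistent set $P_{\sfrac{1}{q}}$ only guarantees consecutive ones in columns indexed by \emph{segments}; column $k$ of $M(\hat{p}_i)$ records membership in the full path $p_k=\bigcup_{l} s_l(p_k)$, i.e., it is the element-wise OR of up to $q$ segment-columns, and the OR of several intervals need not be a single interval.

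Concretely, for $q=2$ take $p_i=a\,b\,c\,d\,e\,f\,g$ with segments $a\,b\,c\,d$ and $d\,e\,f\,g$, and $p_k=a\,x\,c\,y\,e$ with segments $a\,x$ and $x\,c\,y\,e$, where $x,y$ are fresh nodes. Every pair of segments shares at most one node, so $P_{\sfrac{1}{2}}$ is (vacuously) consistent, yet column $k$ of $M(\hat{p}_i)$ reads $1,0,1,0,1,0,0$ --- three runs. Here the zero-run $\{b\}$ sits entirely inside $s_1(p_i)$ (no $p_i$-junction in its interior) while its endpoints $a,c$ lie in different segments of $p_k$: exactly the case your counting cannot exclude, and your proposed ``sharpening'' of the witnesses to force a $p_i$-junction inside every zero-run fails on this example. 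A similar construction (with $p_k$ visiting $a,e,g,c$ in that order) yields four runs, so the honest bound obtainable from Lemma~\ref{le:c1p} alone is $q$ runs per segment of $p_i$, hence $q^2$ runs in total, not $q$. In short, the gap you identified is not a defect of your argument but of the claim; the paper's short proof elides the step that fails.
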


\begin{proof}
Due to the $\sfrac{1}{q}$-consistency property of Definition \ref{def:partially_consistent_routing}, the  sub-matrices formed by the rows corresponding to the consistent routing segments of any path matrix will meet the consecutive ones property expressed by Lemma \ref{le:c1p}.  Therefore, $\sfrac{1}{q}$-consistency implies that each column can only have up to $q$ sequences of consecutive ones.
\end{proof}

In the following Lemma, we compute the maximum number of different encodings of a path matrix.

\begin{lemma}\label{le:1/n-numero}
Given a path $p_i \in P$ of length $d_i$, under the assumption of $\sfrac{1}{q}$-consistent routing,  with $m=|P| >1$ monitoring paths, the maximum number of  different encodings in the rows of $M(\hat{p}_i)$ is $\min \{2^{m-1}; 2q\cdot(m-1); d_i\}$. 
\end{lemma}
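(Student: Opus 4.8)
The plan is to bound the number of distinct rows of $M(\hat{p}_i)$ by three quantities simultaneously and then take their minimum. Two of the three bounds are essentially free. The bound $d_i$ is immediate, since $M(\hat{p}_i)$ has exactly $d_i$ rows (one per node of $\hat{p}_i$, which is cycle-free under any consistent, hence also $\sfrac{1}{q}$-consistent, routing). The bound $2^{m-1}$ follows from the fact that, by definition of the path matrix, column $i$ of $M(\hat{p}_i)$ is identically $1$ (i.e.\ $M(\hat{p}_i)|_{r,i}=1$ for every row $r$); hence every row is a binary string of length $m$ with a fixed $1$ in position $i$, and there are only $2^{m-1}$ such strings. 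This is the same observation underlying Proposition~\ref{prop:dmax}.

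The third and only substantive bound, $2q(m-1)$, is obtained by generalizing the flip-counting argument in the proof of Lemma~\ref{le:numero}, now invoking Lemma~\ref{le:kc1p} in place of Lemma~\ref{le:c1p}. As in that proof, say column $k$ \emph{flips} at row $r$ if $M(\hat{p}_i)|_{r-1,k}\neq M(\hat{p}_i)|_{r,k}$, and note that two consecutive rows can be distinct only if some column flips at the later one. Column $i$ is all ones and never flips. For $k\neq i$, Lemma~\ref{le:kc1p} tells us that column $k$ consists of at most $q$ maximal blocks of consecutive ones; such a column makes at most $2q$ flips, since each block contributes one ``begin'' flip unless it starts at the first row and one ``end'' flip unless it ends at the last row. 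Summing over the $m-1$ columns $k\neq i$ gives at most $2q(m-1)$ flips in total, so at most $2q(m-1)$ rows differ from their predecessor.

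The one point that needs care is converting the flip count into a bound on the number of distinct rows: a priori a matrix with $2q(m-1)$ flips could exhibit $2q(m-1)+1$ maximal constant runs, hence that many distinct rows. This is exactly the off-by-one already absorbed in Lemma~\ref{le:numero}, and I expect it to be the main (indeed only) obstacle; it is resolved by observing that attaining the full flip budget forces every column $k\neq i$ to have both leading and trailing zeros, so rows $1$ and $d_i$ of $M(\hat{p}_i)$ both equal the encoding with a single $1$ in position $i$ and therefore coincide, collapsing two of the runs onto the same encoding; and if the budget is not attained, the total number of flips is at most $2q(m-1)-1$, so the run count $1+(\#\text{flips})$ is again at most $2q(m-1)$. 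Combining the three bounds, the number of distinct encodings among the rows of $M(\hat{p}_i)$ is at most $\min\{2^{m-1};\,2q\cdot(m-1);\,d_i\}$, as claimed; for $q=1$ this recovers Lemma~\ref{le:numero} because $2(m-1)\le 2^{m-1}$ for $m\ge 2$, so the $2^{m-1}$ term is only active for $q\ge 2$.
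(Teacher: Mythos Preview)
Your proof is correct and follows the same approach as the paper: establish the trivial bounds $d_i$ and $2^{m-1}$, then obtain $2q(m-1)$ via the flip-counting argument of Lemma~\ref{le:numero}, now invoking Lemma~\ref{le:kc1p} to cap each non-$i$ column at $2q$ flips. Your treatment is in fact more careful than the paper's: the paper simply asserts that the number of distinct rows is bounded by the total flip count $2q(m-1)$, whereas you explicitly dispose of the potential off-by-one (runs $=$ flips $+1$) by the case split on whether the flip budget is fully attained, observing that full attainment forces the first and last rows to coincide with $\mathbf{e}_i$.
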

\begin{proof}

The number of different encodings in the rows of $M(\hat{p}_i)$ is  bounded by the length of $\hat{p}_i$, $d_i$.
As the $i$-th column of $M(\hat{p}_i)$ contains only ones, the different encodings in its rows can only be obtained by varying the values of the elements in the other columns. Accordingly, the number of different encodings in the rows of  $M(\hat{p}_i)$ is also bounded by $2^{m-1}$.
Furthermore, for any column $M(\hat{p}_i)|_{*,j}$ with $j \neq i$, it holds, by Lemma \ref{le:kc1p}, that it has at most $q$ sequences of consecutive ones. As a consequence, every column of $M(\hat{p}_i)$ can have no more than $2q$ flips. 
In order to have different encodings in any two successive rows $r$ and $r+1$ of the matrix $M(\hat{p}_i)$, that is $M(\hat{p}_i)|_{r,*} \neq M(\hat{p}_i)|_{r+1,*}$,
there must be at least a column  that flips in $r$.
Notice that the total number of columns that can flip is $m-1$ and each of them can flip no more than $2q$ times.
 When this bound is achieved, all columns other than the $i$-th column would have started from 0, flipped to 1, and then to 0 $q$ times.
The number of different  rows that can be observed in $M(\hat{p}_i)$  is therefore upper-bounded by $2q\cdot(m-1)$.
\end{proof}
We derive the upper-bound on the maximum number of identifiable nodes under partially-consistent routing in the following Theorem: 
\begin{theorem} [Partially-consistent routing]\label{th:bound_hcr}
In a general network with $n$ nodes,  $m>1$ monitoring paths and 
average path length $\bar{d}$, the number of identifiable nodes under $\sfrac{1}{q}$-consistent routing is upper bounded as in Theorem \ref{th:exp_max}, except that $N_{\texttt{max}}$ is replaced by 
$$N_{\texttt{max}} = 
m \cdot \min \{2^{m-1}; 2q \cdot (m-1); \bar{d}\}. $$
 \end{theorem}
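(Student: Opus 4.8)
The plan is to follow the same two-stage argument used in the proofs of Theorem~\ref{th:exp_max} and Theorem~\ref{th:bound_consistent_routing}, changing only the per-path identifiability bound. First I would note that the routing assumption enters the analysis in exactly one place: bounding the number of identifiable nodes traversed by a single path. By Lemma~\ref{def:identifiability} every identifiable node carries a unique binary encoding, and by Lemma~\ref{le:equal_rows} two identifiable nodes lying on the same path $\hat{p}_i$ correspond to two distinct rows of the path matrix $M(\hat{p}_i)$; hence $|\mathcal{I}(p_i)|$ is at most the number of distinct rows of $M(\hat{p}_i)$. Under $\sfrac{1}{q}$-consistent routing, Lemma~\ref{le:1/n-numero} bounds this number of distinct rows by $\min\{2^{m-1};\,2q\cdot(m-1);\,d_i\}$, so $|\mathcal{I}(p_i)| \le \min\{2^{m-1};\,2q\cdot(m-1);\,d_i\}$ for every $i$.

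Next I would aggregate over the $m$ paths with the union bound exactly as before. Writing $c \triangleq \min\{2^{m-1};\,2q\cdot(m-1)\}$, the map $x \mapsto \min\{c,x\}$ is concave, so with $\frac{1}{m}\sum_{i=1}^{m} d_i = \bar{d}$ one gets $|\cup_{i=1}^m \mathcal{I}(p_i)| \le \sum_{i=1}^m |\mathcal{I}(p_i)| \le \sum_{i=1}^m \min\{c,d_i\} \le m\cdot\min\{c,\bar{d}\} = N_{\texttt{max}}$, which is precisely the claimed $N_{\texttt{max}} = m\cdot\min\{2^{m-1};\,2q\cdot(m-1);\,\bar{d}\}$.

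Finally I would invoke verbatim the combinatorial counting step from the proof of Theorem~\ref{th:exp_max}. Since the union bound counts each node $v$ exactly $\chi(v)$ times (Observation~\ref{obs:norm_of_b}), the number of \emph{distinct} encodings among the $N_{\texttt{max}}$ counted slots is maximized by assigning the smallest possible crossing numbers: ${m \choose i}$ nodes of crossing number $i$ for $i=1,\dots,i_{\texttt{max}}$, which uses up $\sum_{i=1}^{i_{\texttt{max}}} i\cdot{m \choose i}$ of the budget, followed by at most $\big\lfloor (N_{\texttt{max}} - \sum_{i=1}^{i_{\texttt{max}}} i\cdot{m \choose i})/(i_{\texttt{max}}+1)\big\rfloor$ further nodes of crossing number $i_{\texttt{max}}+1$, where $i_{\texttt{max}} = \max\{k \mid \sum_{i=1}^{k} i\cdot{m \choose i} \le N_{\texttt{max}}\}$; intersecting with the trivial bound $n$ gives the stated expression. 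The only ingredient specific to this theorem is Lemma~\ref{le:1/n-numero}, which the statement already permits us to use; the concavity step and the greedy crossing-number allocation are exactly those of Theorem~\ref{th:exp_max}. I therefore expect no genuine obstacle here — all the real work of generalizing from consistent to partially-consistent routing is absorbed into the combinatorics of consecutive-ones patterns already handled by Lemmas~\ref{le:kc1p} and~\ref{le:1/n-numero}.
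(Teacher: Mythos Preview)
Your proposal is correct and matches the paper's approach: bound $|\mathcal{I}(p_i)|$ per path via Lemma~\ref{le:1/n-numero}, sum over paths to get $N_{\texttt{max}}$, and recycle the crossing-number allocation argument of Theorem~\ref{th:exp_max}. Your explicit concavity step for $\sum_i\min\{c,d_i\}\le m\cdot\min\{c,\bar d\}$ is in fact more careful than the paper (which asserts this inequality without justification); the only cosmetic point is that Lemma~\ref{le:equal_rows} is stated for consistent routing, but since you also invoke Lemma~\ref{def:identifiability}, which already yields the needed fact that distinct identifiable nodes carry distinct encodings, this is harmless.
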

\begin{proof}
The proof can be addressed as the one of Theorem \ref{th:exp_max}. 
 The maximum number of different encodings that can be observed in $m$ path matrices under the assumption of $\sfrac{1}{q}$-consistent path routing is bounded by
$N_{\texttt{max}} = 
m \cdot \min \{2^{m-1}; 2q \cdot (m-1); \bar{d}\} $, that is the sum for all paths of the bound shown in Lemma \ref{le:1/n-numero}. 
\end{proof}

In the particular case of $q=2$, we use the term {\em half-consistency}. Such a case is of particular interest. In fact, Al-Fares \emph{et al.} in \cite{Vahdat-fattree} proposed a half-consistent routing scheme to be adopted in fat-tree topologies, with the purpose to optimize bisection bandwidth. The proposed routing scheme spreads outgoing traffic among interconnected hosts as evenly as possible. We devote the following Section \ref{sec:fattree_experiments} to the analysis of half-consistent routing in fat-tree topologies.

Another motivating example for the study of  1/$q$-consistent routing is a multi-domain network with $q$ domains, in which routing consistency is guaranteed inside each domain, but inter-domain traffic can be split among multiple gateways between domains.
}

\subsection{A case study on half-consistent routing: fat-tree networks}

\color{black}{Typical data-center topologies are based on two or three levels of switches arranged into tree-like topologies. A common topology built of  commodity Ethernet switches is the {\em fat-tree} topology \cite{Leiserson}.}
\color{black}{Recent works on data-center design and optimization propose the use of fat-tree topologies to deliver high  bandwidth to  hosts  at the leaves of the fat-tree.
A special instance of a $k$-ary fat-tree together with a related addressing and routing scheme is described in the work of Al-Fares et al. in \cite{Vahdat-fattree}. Here the authors suggest the use of  homogeneous $k$-port switches to build the fat-tree topology and connect up to $k^3/4$ hosts.}
{\color{black} An example with 3 layers and  $k=4$ is shown  in Figure~\ref{fig:consistency_fattree}.}
%
{\color{black}In order to achieve maximum bisection bandwidth, which requires spreading the pod's outgoing traffic uniformly to the core switches, the authors of \cite{Vahdat-fattree} propose the use of a joint routing and addressing scheme which 
 violates the consistent routing assumption in two aspects:
(1) routes between different source-destination pairs may not be consistent, (2) routes in different directions between the same source-destination pair may not be consistent either. }

{\color{black}As an example consider the highlighted paths in Figure \ref{fig:consistency_fattree}. The blue path $p_1$ is used to send probing packets from the host $10.1.0.3$ to the host $10.3.1.3$.
$p_1$  consists of the following list of nodes:{\small{
\(p_1=<10.1.0.3, 10.1.0.1, 10.1.3.1, 10.4.2.1, 10.3.3.1, 10.3.1.1, 10.3.1.3>\)}}. Consider now, the red path $p_2$ that is used to send a packet from host $10.1.1.3$ to host $10.3.0.2$. This path consists of the following list of nodes:{\small{ $p_2= <
10.1.1.3, 10.1.1.1, 10.1.3.1, 10.4.2.2, 10.3.3.1, 10.3.0.1, 10.3.0.2>$}}.}
\begin{figure}[]
	\centering
	\includegraphics[width=\columnwidth]{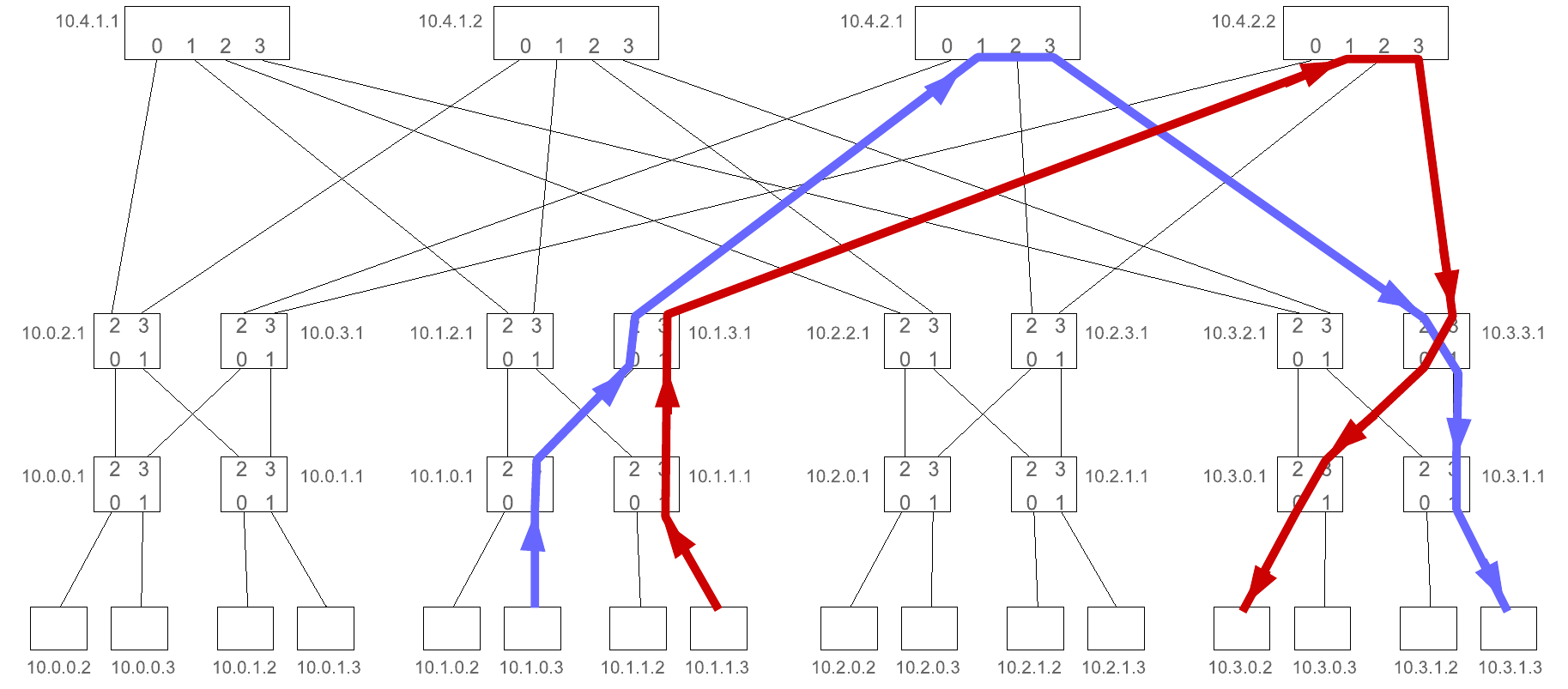} 
\vspace{-.7cm}
	\caption{Fat-tree with 3 layers and  $k=4$. Blue and red paths highlight routing inconsistency.} \label{fig:consistency_fattree}
	\vspace{-.5cm}	
\end{figure}
{\color{black}It follows that the routing scheme shown in Figure \ref{fig:consistency_fattree} is not consistent, as the path between the aggregation switches $10.1.3.1$ and $10.3.3.1$ can be different depending on the source and the destination hosts. Nevertheless, this is a case of half-consistent routing scheme, because the routing scheme only affects the choice of the core switches, while the other parts of the paths are fixed. }
\begin{prop}
Any shortest-path routing scheme on a fat-tree is half-consistent.
\label{prop:half_consistency}
\end{prop}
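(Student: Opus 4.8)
The plan is to show that for an arbitrary set $P$ of shortest host-to-host paths on a $k$-ary fat-tree there is a way to cut each path into at most two pieces so that the resulting family $P_{1/2}$ satisfies routing consistency; by Definition~\ref{def:partially_consistent_routing} this is exactly $1/2$-consistency (half-consistency). First I would record two structural facts. \emph{(a)} The fat-tree is layered into hosts, edge switches, aggregation switches and core switches, and every link joins consecutive layers, so any shortest path between two hosts is \emph{valley-free}: it strictly ascends from its source to a unique topmost node --- call it the \emph{apex} --- and then strictly descends to its destination (a non-monotone path would have to reach the same layer twice, hence be strictly longer). \emph{(b)} The ``downward'' subgraph rooted at any node $v$ is a tree: by the wiring of \cite{Vahdat-fattree}, a core switch meets exactly one aggregation switch per pod, an aggregation switch meets each edge switch of its pod exactly once, and an edge switch meets each of its hosts once; hence every node $w$ reachable from $v$ by a level-decreasing walk has a unique neighbour toward $v$, and therefore there is a \emph{unique} monotone descending path from $v$ down to $w$ --- equivalently, downward forwarding in a fat-tree is destination-based.

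Using \emph{(a)}, I would cut each $p_i\in P$ at its apex $a_i$ into the ascending segment $s_1(p_i)$ (from one endpoint up to $a_i$) and the descending segment $s_2(p_i)$ (from $a_i$ down to the other endpoint), so that $P_{1/2}=\bigcup_{p_i\in P}\{s_1(p_i),s_2(p_i)\}$ partitions every path into at most two segments. The key claim is that every segment of $P_{1/2}$, read from its higher-level endpoint to its lower-level endpoint, is a monotone descending path and is therefore, by \emph{(b)}, \emph{the} unique descending path between those two endpoints. For $s_2(p_i)$ this is immediate. For an ascending segment $s_1(p_i)$ from a host $h$ to an apex $a_i$: host $h$ attaches to a unique edge switch, and if $a_i$ is a core switch then only one aggregation switch of $h$'s pod is adjacent to $a_i$ (and likewise when $a_i$ is an aggregation or edge switch the ascent has no choice); hence the ascent is forced, i.e., the reversed segment coincides with the unique descending path from $a_i$ to $h$ given by \emph{(b)}. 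So, in one orientation, every segment of $P_{1/2}$ is a ``canonical'' top-to-bottom path of the fat-tree.

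It then remains to verify consistency of $P_{1/2}$ in the sense of Definition~\ref{def:consistent_routing}. Let $\sigma,\sigma'\in P_{1/2}$ be any two segments sharing two nodes $u$ and $v$ (two segments of the same path share only the apex, so the interesting case is distinct paths). Each segment is monotone in level and passes through at most one node per layer, so $u$ and $v$ sit at different layers, say $u$ above $v$; then the sub-path of $\sigma$ from $u$ to $v$ and the sub-path of $\sigma'$ from $u$ to $v$ are both descending paths from $u$ to $v$, hence by \emph{(b)} both equal the unique descending $u$--$v$ path, so they coincide. Thus $P_{1/2}$ meets Definition~\ref{def:consistent_routing}, and $P$ is half-consistent. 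The main obstacle is establishing \emph{(b)} together with the ``apex determines the ascent'' fact from the precise core/aggregation/edge interconnection of the $k$-ary fat-tree of \cite{Vahdat-fattree}; once those are pinned down, valley-freeness of shortest paths and the final consistency check are routine.
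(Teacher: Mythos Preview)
Your proof is correct and follows essentially the same approach as the paper: split each shortest path at its highest-layer node (the paper calls it the ``upper node'', your ``apex'') and use the tree structure of the fat-tree below any fixed node to conclude that each half is the unique monotone path between its endpoints, whence the collection of halves is consistent. The paper's proof is considerably terser---it asserts uniqueness of $s_1(p)$ and $s_2(p)$ ``due to the structure of the fat-tree'' and concludes immediately---whereas you spell out the valley-freeness of shortest paths, the downward-tree property \emph{(b)}, and the explicit pairwise consistency check on segments; but the underlying idea is identical.
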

\begin{proof}
Let us call $u_s(p)$ and $u_t(p)$ the source and the destination endpoints of $p$, and let us call the {\em upper node} $u_m(p)$ the node of $p$ that is the farthest from the endpoints. Due to the structure of the fat-tree, there is only a unique path $s_1(p)$ from $u_s(p)$ to $u_m(p)$, and a unique  path $s_2(p)$ from $u_m(p)$ to $u_t(p)$. Therefore, for any two intermediate nodes on $s_i(p)$ ($i=1,\: 2$), there cannot be any alternative path between them, and the routing of these path segments is consistent.
\end{proof}

We devote Section \ref{sec:fattree_experiments}
to an experimental evaluation of identifiability bounds on fat-tree topologies.

\section{Service network monitoring}
\vspace{-.1cm}
We consider a service network where we monitor paths between clients and servers, under consistent routing in the case of (i) single-server and (ii) multi-server monitoring. 
{\color{black}To this purpose we refer to the work of  He et al. \cite{usICDCS16}, in which passive measurements along service paths are used to infer the status (working or not working) of the traversed nodes.
}

\subsection{Single-server monitoring }
\label{sec:client-server}
\subsubsection{Identifiability bound}

Consider the scenario where a single server communicates with multiple clients and we can only monitor the paths in between. The number of paths $m$ coincides with the number of clients, and all the monitoring paths must share a common endpoint (the server).

We start by showing the special structure of the topology spanned by the monitoring paths. 
\vspace{-.1cm}
\begin{lemma}\label{le:tree}
Under consistent routing, any monitoring paths with a common endpoint $r$ must form a tree rooted at $r$.
\end{lemma}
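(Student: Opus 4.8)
The plan is to show that the union of the monitoring paths, viewed as a subgraph, contains no cycle, which together with connectedness (every path touches $r$) yields a tree rooted at $r$.

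First I would argue acyclicity by contradiction. Suppose the union graph $H$ of all monitoring paths contains a cycle $C = (w_0, w_1, \ldots, w_{\ell-1}, w_0)$. Each edge of $C$ lies on at least one monitoring path; since (by Lemma~\ref{le:c1p} / the remark after Definition~\ref{def:consistent_routing}) consistent routing implies every individual path is cycle-free, no single path can contain all of $C$, so $C$ is covered by at least two distinct paths. Now pick two nodes $u, v$ on $C$ that both lie on two distinct paths $p$ and $p'$ — such a pair exists because the cycle is stitched together from several paths, so some vertex is shared by two of them, and $r$ is shared by all of them; more carefully, take a vertex $u \neq r$ where the "path label" changes as we traverse $C$, and let $v$ be another shared vertex (e.g. $r$ itself, or the next changeover point). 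The two arcs of $C$ between $u$ and $v$ then give two distinct $u$–$v$ sub-paths, one followed by $p$ and the other by $p'$ (or by $p$ and the sub-path of $C$), directly contradicting Definition~\ref{def:consistent_routing}, which forces any two paths through both $u$ and $v$ to follow the same sub-path between them. Hence $H$ is acyclic.

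Next, connectedness: by hypothesis every monitoring path has $r$ as an endpoint, so every node of $H$ is joined to $r$ by (a prefix of) some monitoring path, hence $H$ is connected. A connected acyclic graph is a tree, and designating $r$ as the root gives the rooted tree rooted at $r$.

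The main obstacle I expect is the bookkeeping in the acyclicity step: cleanly extracting, from an arbitrary cycle in the union graph, a \emph{specific} pair of vertices $u,v$ together with two \emph{distinct} paths each traversing both of them so as to invoke Definition~\ref{def:consistent_routing} without hand-waving. The cleanest route is probably to consider a minimal cycle in $H$ and walk around it tracking which path supplies each edge; at a vertex where the supplying path switches from $p$ to $p'$, both $p$ and $p'$ contain that vertex, and since both also contain $r$, the two vertices $u$ (the switch point) and $v=r$ — or, if $p,p'$ happen to share a closer vertex, that one — furnish the needed configuration, with the two arcs of the minimal cycle being genuinely different sub-paths. Handling the degenerate case where $p$ and $p'$ share \emph{only} the switch vertex and $r$ (so the two arcs really are the two path-segments between them) is what needs the most care.
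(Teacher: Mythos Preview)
Your overall strategy (connected $+$ acyclic $\Rightarrow$ tree) is sound, but the acyclicity step has a real gap, and the paper's argument sidesteps it entirely with a much shorter route.

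The paper does not hunt for a cycle. It argues \emph{pairwise}: take any two monitoring paths $p_i,p_j$; both contain $r$, so if they share any other vertex $v$, Definition~\ref{def:consistent_routing} forces the $r$--$v$ segments of $p_i$ and $p_j$ to coincide. Equivalently, once $p_i$ and $p_j$ take different next hops out of $r$ (or out of any later common vertex), they can never meet again. That single observation, applied to all pairs, already gives the tree structure---no cycle-chasing, no switch points.

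In your argument the contradiction is not actually pinned down. You want two vertices $u,v$ on the cycle $C$ that both lie on the \emph{same} two paths $p,p'$, with the two arcs of $C$ between them realized by $p$ and $p'$ respectively. But your candidate $v=r$ need not lie on $C$ at all, so ``the two arcs of $C$ between $u$ and $v$'' is undefined. The alternative candidate, the next changeover point along $C$, lies on $C$, but there is no reason $p'$ (the path entering at the \emph{first} changeover $u$) passes through that second changeover vertex; only the path supplying the intervening arc does. So neither choice delivers the pair $(u,v)$ with the required properties, and you correctly flag this as the sticking point without resolving it.

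The clean fix is to use the common endpoint $r$ the way the paper does, rather than trying to find the violation inside the cycle. Concretely: at your switch vertex $u$, both $p$ and $p'$ contain $u$ and $r$, so by consistency their $r$--$u$ segments are identical; hence every vertex $v\neq r$ in the union has a well-defined unique ``parent'' (its predecessor on the common $r$--$v$ segment), and every edge of the union joins some vertex to its parent. A graph in which each non-root vertex has exactly one parent and all edges are parent--child edges is acyclic. This is essentially the paper's pairwise argument rephrased, and it replaces the cycle bookkeeping you were worried about.
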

\vspace{-.4cm}
\begin{proof}
We consider any two paths $p_i$ and $p_j$. Starting from $r$, the next hops on these paths lead to either a common node or two different nodes. In the latter case, the two paths cannot intersect at any subsequent node $v$, as otherwise the two path segments from $r$ to $v$ following paths $p_i$ and $p_j$ would violate routing consistency. 
As this is true for all the paths, the paths must form a tree rooted at $r$.
\end{proof}
\vspace{-.1cm}
{\color{black}{
As a consequence many paths will have some common nodes and links, and this implies that the number of identifiable nodes with $m$ paths will be lower than in the general case expressed by Theorem
\ref{th:bound_consistent_routing}.
In the following (Theorem \ref{th:bound_with_tree}) we show that this number has indeed  an upper bound as small as  $2 m -1$.

Before we formalize this result let us introduce the concept of {\em optimal monitoring tree}, which is any tree topology (and related monitoring paths) that  guarantees the identifiability of all its nodes and for which the number of identifiable  nodes is maximum.
Given $m$ paths with maximum path length $d_{\texttt{max}}$, 
the optimal monitoring tree is a tree with $m$ leaves and maximum depth\footnote{The {\em depth of a tree} is the maximum  distance from the root to any leaf, in number of links.} $d_{\texttt{max}}-1$, that has the maximum number of identifiable nodes when its root-to-leaf paths are monitored.  }
}

\begin{lemma}\label{le:optimal_tree}
If the maximum path length $d_{\texttt{max}}$ satisfies $ d_{\texttt{max}} \geq \lceil \log_2 m\rceil +1$, the optimal monitoring tree is a full binary tree\footnote{We recall that a {\em full binary tree} is  a binary tree where each node is either a leaf or it has exactly two children.} with $m$ leaves.
If $d_{\texttt{max}} < \lceil \log_2 m \rceil +1$, then the optimal monitoring tree is a tree composed of $\left\lfloor \frac{m}{2^{(d_{\texttt{max}}-2)}}\right\rfloor$ perfect binary trees\footnote{We also recall that a {\em perfect binary tree} is a full binary tree where all leaves are at the same distance from the root.}  with depth $(d_{\texttt{max}}-2)$, and up to one full  binary tree with depth at most $(d_{\texttt{max}}-2)$ and $\left(m \mod 2^{(d_{\texttt{max}}-2)}\right)$ leaves, connected to  a common root.
\end{lemma}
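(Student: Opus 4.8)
The plan is to recast the optimization as a combinatorial question about rooted trees. By Lemma~\ref{le:tree} the $m$ consistent monitoring paths sharing the server $r$ form a tree rooted at $r$ whose leaves are exactly the $m$ path endpoints; path $p_i$ passes through a node $v$ iff the $i$-th leaf lies in the subtree rooted at $v$, so $b(v)$ is the characteristic vector of the descendant-leaf set of $v$. Hence two nodes have equal encodings iff they dominate the same set of leaves, which in a tree happens only along a chain of single-child nodes. Combining this with Lemma~\ref{def:identifiability}, the first step is the criterion: \emph{all nodes of the tree are identifiable iff every internal node has at least two children}. For the nontrivial direction, incomparable nodes have disjoint nonempty leaf sets; if $w$ is a strict ancestor of $v$, a child of $w$ not on the $w$--$v$ chain together with a leaf below it distinguishes $b(w)$ from $b(v)$; and every encoding is nonzero since every node dominates some leaf. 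Conversely, a single-child internal node shares its encoding with its child.

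Next I would record the counting identity. With $m$ leaves, $I$ internal nodes, and $m+I$ nodes in all, and since the number of edges equals both $(\#\mathrm{nodes})-1$ and $\sum_{v\ \mathrm{internal}} c(v)$, we get $\sum_{v\ \mathrm{internal}}(c(v)-1)=m-1$; as $c(v)\ge 2$ for every internal node this gives $I\le m-1$, with equality iff every internal node has exactly two children, i.e.\ iff the tree is a full binary tree, in which case it has $2m-1$ nodes. So maximizing the number of identifiable nodes is the same as maximizing $I$ (equivalently, minimizing the excess $\sum_{v\ \mathrm{internal}}(c(v)-2)$) subject to depth $\le d_{\texttt{max}}-1$.

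For the first regime, $d_{\texttt{max}}\ge\lceil\log_2 m\rceil+1$: a full binary tree with $m$ leaves can be built with depth $\lceil\log_2 m\rceil\le d_{\texttt{max}}-1$, so it is a feasible monitoring tree, it attains $I=m-1$ (the global maximum), hence it is optimal, and conversely any optimal tree attains $I=m-1$ and is therefore full binary. For the second regime, note $d_{\texttt{max}}<\lceil\log_2 m\rceil+1$ is equivalent to $2^{d_{\texttt{max}}-1}<m$, so no full binary tree on $m$ leaves fits within the allowed depth. Here I would induct on $d_{\texttt{max}}$: deleting the root leaves $k=c(r)\ge 2$ subtrees of depth at most $d_{\texttt{max}}-2$ with leaf counts $m_1,\dots,m_k$ summing to $m$, and the induction hypothesis bounds the internal-node count of each by $f(m_j)$, where $f(\ell)=\ell-1$ when $\ell\le 2^{d_{\texttt{max}}-2}$ and $f(\ell)\le\ell-2$ otherwise (a full binary tree on more than $2^{d_{\texttt{max}}-2}$ leaves needs depth exceeding $d_{\texttt{max}}-2$). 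An exchange argument — merging two subtrees whose combined leaf count is at most $2^{d_{\texttt{max}}-2}$ strictly increases the total, while shifting leaves between subtrees (and making them full binary) so that all but one has exactly $2^{d_{\texttt{max}}-2}$ leaves never decreases it — shows the optimum uses $\lfloor m/2^{d_{\texttt{max}}-2}\rfloor$ subtrees that are perfect binary trees of depth $d_{\texttt{max}}-2$ and puts the remaining $r=m\bmod 2^{d_{\texttt{max}}-2}$ leaves, when $r>0$, into one full binary tree of depth $\lceil\log_2 r\rceil\le d_{\texttt{max}}-2$, all attached to the root, so $k=\lceil m/2^{d_{\texttt{max}}-2}\rceil\ge 2$; this is precisely the tree in the statement, and a direct count yields $2m+1-\lceil m/2^{d_{\texttt{max}}-2}\rceil$ nodes.

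The main obstacle is exactly the optimality argument in the second regime: the naive bound from deleting the root, $I\le m-k$, is too weak (it gives only $I\le m-1$), so one genuinely needs the recursion together with the facts that consolidating leaves into one subtree helps only up to the full-binary size, and that a perfect binary tree of the maximal admissible depth is the most ``leaf-efficient'' building block. Making the exchange argument airtight — in particular verifying that $k$ never drops below $2$ during rebalancing (a merge with $k=2$ would require combined leaf count $m>2^{d_{\texttt{max}}-1}$, which is impossible within depth $d_{\texttt{max}}-2$) — is the delicate part. Minor extra care is needed at the boundary cases $r=0$ (no leftover subtree), $r=1$ (the leftover ``full binary tree'' degenerates to a single node), and $d_{\texttt{max}}=2$ (the optimal tree is a star on $m$ leaves).
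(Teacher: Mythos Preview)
Your proof is correct and takes a genuinely different route from the paper. The paper argues by local transformation: assuming a non--full-binary optimum, it exhibits in case~(a) (a $\geq 3$-child node) an extra identifiable node obtained by inserting an intermediate parent over two of the children, and in case~(b) (a single-child node) either a contraction or an extension of a terminating path; the depth-constrained regime is then handled by a direct (and rather informal) packing argument. You instead reduce the problem to pure combinatorics: the observation that $b(v)$ is the characteristic vector of the descendant-leaf set gives the clean equivalence ``all nodes identifiable $\Leftrightarrow$ every internal node has $\geq 2$ children,'' and the edge-count identity $\sum_{v\text{ internal}}(c(v)-1)=m-1$ immediately yields $I\le m-1$ with equality exactly for full binary trees, disposing of the unconstrained regime in one line. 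For the depth-constrained regime your induction-plus-exchange argument (merge small subtrees, then rebalance to perfect blocks of size $2^{d_{\texttt{max}}-2}$) is substantially more explicit than the paper's, and your closed form $2m+1-\lceil m/2^{d_{\texttt{max}}-2}\rceil$ for the node count is a nice by-product. What your approach buys is rigor and generality (the same counting works for any branching bound); what the paper's buys is a concrete picture of \emph{how} a suboptimal tree can be improved, which matches the design-insight flavor of the surrounding section. One small remark: your guard against $k$ dropping to $1$ only needs $m>2^{d_{\texttt{max}}-2}$, which already follows from the second-regime hypothesis; the stronger $m>2^{d_{\texttt{max}}-1}$ you cite is also true but not required there.
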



\vspace{-.4cm}
\begin{proof}
Let us first consider the case of unbounded path length. By contradiction, assume the existence of an optimal monitoring tree that is not a full binary tree.
Such a tree must have at least a node $u$ whose number of children is either (a) strictly greater than two or it is (b) exactly one.

If (a), $u$ has at least three  children $v_1, v_2$ and $v_3$.
Let $p_1, p_2$ and $p_3$ be the paths from these nodes to $u$, as in Figure \ref{fig:l18_1}. We can build a new graph, starting from this,  with an additional identifiable node $x$, by removing the links between $u$ and $v_1$, $v_2$ and  adding $x$ as a parent of $v_1$ and $v_2$ and child of $u$. The modified topology is shown in  Figure \ref{fig:l18_2}.
Node $x$ is identifiable as its encoding is different from the encodings of the leaves $v_1$, $v_2$, as $x$ is traversed by the union of the set of paths traversing them, and from the encodings of $v_3$ and of the root $u$, as $x$ is not traversed by path $p_3$.
\begin{figure}[h!]
    \centering
    \begin{minipage}{.48\columnwidth}
        \centering
\begin{tikzpicture}[level/.style={sibling distance=1.2cm/#1}]
\tikzset{pallino/.style={draw,circle,fill=white,minimum size=4pt,inner sep=0pt}}
\tikzset{mylab/.style={black, minimum size=0.5cm}}

\node [pallino,circle,draw] [label=left:\footnotesize{$\LD 111 \RD$}] (u){}
  child {node [pallino,circle,draw][label=left:\footnotesize{$\LD 001 \RD$}] (v3) {}}
   child {node [pallino,circle,draw][label=left:\footnotesize{$\LD 010 \RD$}] (v2) {}}
    child {node [pallino,circle,draw][label=left:\footnotesize{$\LD 100 \RD$}] (v1) {}}; 
\node [mylab] at (-1.cm,0.5cm)   (8) {};
\node [mylab] at (1.,-0.7cm)   (6) {};

\draw [->,red] (u) to [out=210,in=80] (v3);      
\draw [->,green,dashdotted] (u) to [out=290,in=65] (v2);      
      
\draw [->,blue,dashed] (u) to [out=330,in=100] (v1);

 \node [mylab,below of = v3, node distance=0.13in] (blank){\footnotesize{$v_3$}};
\node [mylab,below of = v2, node distance=0.13in] (blank){\footnotesize{$v_2$} };
\node [mylab,below of = v1, node distance=0.13in] (blank){\footnotesize{$v_1$} };
\node [mylab,right of = u, node distance=0.13in] (blank){\footnotesize{$u$}};

\draw[blue,dashed] (-2.5cm, 0.5cm) -- (-1.8cm, 0.5cm) node[draw=none,fill=none] (pippo) [label=right:{{\color{black}$p_1$}}]{};
\draw[color=green,dashdotted] (-2.5cm, 0.2cm) -- (-1.8cm, 0.2cm) node[draw=none,fill=none] (pippo) [label=right:{{\color{black}$p_2$}}]{};
\draw[color=red] (-2.5cm, -0.1cm) -- (-1.8cm, -0.1cm) node[draw=none,fill=none] (pippo) [label=right:{{\color{black}$p_3$}}]{};
 
	\end{tikzpicture}  
  
        \vspace{-.3cm}   
        \caption{Three children tree}
        \label{fig:l18_1}
    \end{minipage}%
    \hfill
    \begin{minipage}{0.48\columnwidth}
        \centering
       
       \begin{tikzpicture}[level distance=1.0cm] 
  \tikzstyle{level 1}=[sibling distance=1.7cm] 
  \tikzstyle{level 2}=[sibling distance=1.2cm] 
  \tikzstyle{level 3}=[sibling distance=4mm] 
  \tikzset{pallino/.style={draw,circle,fill=white,minimum size=4pt,inner sep=0pt}}
\tikzset{mylab/.style={black, minimum size=0.5cm}}

\node [pallino,circle,draw] [label=left:\footnotesize{$\LD 111 \RD$}] (u){}
  child {node [pallino,circle,draw][label=left:\footnotesize{$\LD 001 \RD$}] (v3) {}}
   child {node [pallino,circle,draw][label=left:\footnotesize{$\LD 110 \RD$}] (x) {}
    child {node [pallino,circle,draw][label=left:\footnotesize{$\LD 010 \RD$}] (v2) {}}
    child {node [pallino,circle,draw][label=left:\footnotesize{$\LD 100 \RD$}] (v1) {}}};

   
   \draw [->,red] (u) to [out=210,in=80] (v3);     
  \draw [->,blue,dashed] (u) to [out=330,in=100] (x);
  
  \draw [->,blue,dashed] (x) to [out=330,in=100] (v1);      
 
   \draw [->,green,dashdotted] (x) to [out=210,in=80] (v2);      
   \draw [->,green,dashdotted] (u) to [out=290,in=160] (x);      
    
 
 \node [mylab,below of = v3, node distance=0.13in] (blank){\footnotesize{$v_3$}};
\node [mylab,below of = v2, node distance=0.13in] (blank){\footnotesize{$v_2$} };
\node [mylab,below of = v1, node distance=0.13in] (blank){\footnotesize{$v_1$} };
\node [mylab,right of = u, node distance=0.13in] (blank){\footnotesize{$u$}};
 \node [mylab,right of = x, node distance=0.13in] (blank){\footnotesize{$x$}};

	\end{tikzpicture}

        \vspace{-.3cm}
        \caption{Full binary tree}
        \label{fig:l18_2}
    \end{minipage}
\end{figure}
If (b), $u$ has only one child $v$, as shown in Figure \ref{fig:l18_bis_a}.
If $v$ is not traversed by any path, or all the paths traversing $u$ also traverse $v$, then node $v$ is not identifiable, and the removal of $v$ from the tree would not decrease the identifiability.
If instead there is a path $p_1$ traversing both $u$ and $v$, and a path $p_2$ traversing $u$ which ends before reaching node  $v$, as in Figure
\ref{fig:l18_bis_a}, 
 then path $p_2$ can be prolonged to traverse a new node $x$  added as a child of node $u$ to increase the identifiability of the topology, as shown in Figure \ref{fig:l18_bis_b}.

\begin{figure}[h!]
    \centering
    \begin{minipage}{.48\columnwidth}
        \centering
\begin{tikzpicture}[level/.style={sibling distance=60mm/#1}]
\tikzset{pallino/.style={draw,circle,fill=white,minimum size=4pt,inner sep=0pt}}
\tikzset{mylab/.style={black, minimum size=0.5cm}}

\node [pallino,circle,draw] [label=left:\footnotesize{$\LD 11 \RD$}] (z){}
  child {node [pallino,circle,draw][label=left:\footnotesize{$\LD 10 \RD$}] (a) {}};
   
\node [mylab] at (-1.cm,0.5cm)   (8) {};
\node [mylab] at (1.,-0.7cm)   (6) {};

\draw [->,red] (8) to [out=350,in=120] (z);      
\draw [->,red] (z) to [out=10,in=120] (6);
\draw [->,blue,dashed] (z) to [out=315,in=45] (a);      
  
  \node [mylab,above of = z, node distance=0.13in] (blank){\footnotesize{$u$}};
  
  \node [mylab,below of = a, node distance=0.13in] (blank){\footnotesize{$v$}};
                             
\draw[blue,dashed] (-2.5cm, 0.2cm) -- (-1.8cm, 0.2cm) node[draw=none,fill=none] (pippo) [label=right:{{\color{black}$p_1$}}]{};
\draw[color=red] (-2.5cm, -0.1cm) -- (-1.8cm, -0.1cm) node[draw=none,fill=none] (pippo) [label=right:{{\color{black}$p_2$}}]{};

	\end{tikzpicture}        
        \vspace{-.3cm}
        \caption{One child tree}
        \label{fig:l18_bis_a}
    \end{minipage}%
    \hfill
    \begin{minipage}{0.48\columnwidth}
        \centering
\begin{tikzpicture}[level/.style={sibling distance=15mm/#1}]
\tikzset{pallino/.style={draw,circle,fill=white,minimum size=4pt,inner sep=0pt}}
\tikzset{mylab/.style={black, minimum size=0.5cm}}

\node [pallino,circle,draw] [label=left:\footnotesize{$\LD 11 \RD$}] (z){}
  child {node [pallino,circle,draw][label=left:\footnotesize{$\LD 10 \RD$}] (a) {}}
   child {node [pallino,circle,draw][label=right:\footnotesize{$\LD 01 \RD$}] (b) {}}; 
\node [mylab] at (-1.cm,0.5cm)   (8) {};
\node [mylab] at (1.,-0.7cm)   (6) {};

\draw [->,red] (z) to [out=330,in=90] (b);      
   
\draw [->,blue,dashed] (z) to [out=210,in=90] (a);      
      
 \node [mylab,above of = z, node distance=0.13in] (blank){\footnotesize{$u$}};
  
  \node [mylab,below of = a, node distance=0.13in] (blank){\footnotesize{$v$}};  
    \node [mylab,below of = b, node distance=0.13in] (blank){\footnotesize{$x$}};        
                            

	\end{tikzpicture}            
            
              \vspace{-.3cm}
        \caption{Full binary tree}
        \label{fig:l18_bis_b}
    \end{minipage}
\end{figure}

Notice that as long as the maximum path length is
$d_{\texttt{max}} \geq \lceil  \log_2 m \rceil +1$, \color{black}{that is the unbounded case},
we can apply the previous discussion and  build an optimal full binary tree with up to $m$ leaves and depth $ \lceil  \log_2 m \rceil +1$ (maximum distance from the root to the leaves, in number of nodes). 
If instead $d_{\texttt{max}} < \lceil  \log_2 m \rceil +1$, the
largest number of leaves that can be obtained in a full binary tree topology with depth $d_{\texttt{max}}-1$ is $2^{d_{\texttt{max}}-1}$ which is lower than the number of paths $m$.
Therefore, in such a case,  the maximum identifiability is obtained by creating the maximum number $\lfloor \frac{m}{2^{(d_{\texttt{max}}-2)}}\rfloor$ of perfect binary trees of  depth $d_{\texttt{max}}-2$ and up to one full binary tree (not perfect) with depth at most $d_{\texttt{max}}-2$, connecting them to a same root, thus ensuring that the number of nodes with either no children or two only children is maximized.
\end{proof}

{\color{black}{\em Example: }
Figure \ref{fig:full_binary_tree_with_7_leaves}(a) shows an optimal monitoring tree for $m=7$ and $d_{\texttt{max}}=4$, i.e. a full binary tree.
In Figure \ref{fig:full_binary_tree_with_7_leaves}(b)  $m=7$ but $d_{\texttt{max}}=3$, so the optimal monitoring tree is made of $3$ perfect binary trees of depth 
$1$ and a full binary tree of depth at most $1$, connected to the same root.
}

\begin{figure}[h!]
\centering
\subfigure
{

\begin{tikzpicture}[level distance=0.5cm] 
  \tikzstyle{level 1}=[sibling distance=1.8cm] 
  \tikzstyle{level 2}=[sibling distance=0.9cm] 
  \tikzstyle{level 3}=[sibling distance=4mm] 
  \tikzset{pallino/.style={draw,circle,fill=white,minimum size=4pt,inner sep=0pt}}
\tikzset{mylab/.style={black, minimum size=0.5cm}}

\node [pallino,circle,draw] [label=above:\footnotesize{$s_1, \ldots, s_7$}] (u){}
child {node [pallino,circle,draw][] (u1) {}
child {node [pallino,circle,draw][] (u11) {}
child {node [pallino,circle,draw][] (u111) {}}
child {node [pallino,circle,draw][] (u112) {}}
}
child {node [pallino,circle,draw][] (u12) {}
child {node [pallino,circle,draw][] (u121) {}}
child {node [pallino,circle,draw][] (u122) {}}
}}
child {node [pallino,circle,draw][] (u2) {}
child {node [pallino,circle,draw][] (u21) {}
child {node [pallino,circle,draw][] (u211) {}}
child {node [pallino,circle,draw][] (u212) {}}
}
child {node [pallino,circle,draw][] (u22) {}}};
\node [mylab,below of = u111, node distance=0.13in] (blank){\footnotesize{$t_1$}};
\node [mylab,below of = u112, node distance=0.13in] (blank){\footnotesize{$t_2$}};
\node [mylab,below of = u121, node distance=0.13in] (blank){\footnotesize{$t_3$}};
\node [mylab,below of = u122, node distance=0.13in] (blank){\footnotesize{$t_4$}};
\node [mylab,below of = u211, node distance=0.13in] (blank){\footnotesize{$t_5$}};
\node [mylab,below of = u212, node distance=0.13in] (blank){\footnotesize{$t_6$}};
\node [mylab,below of = u22, node distance=0.13in] (blank){\footnotesize{$t_7$}};

	\end{tikzpicture}
}
\hspace{.5cm}
\subfigure{

\begin{tikzpicture}[level distance=0.7cm] 
  \tikzstyle{level 1}=[sibling distance=0.8cm] 
  \tikzstyle{level 2}=[sibling distance=0.4cm] 
  \tikzstyle{level 3}=[sibling distance=4mm] 
  \tikzset{pallino/.style={draw,circle,fill=white,minimum size=4pt,inner sep=0pt}}
\tikzset{mylab/.style={black, minimum size=0.5cm}}

\node [pallino,circle,draw] [label=above:\footnotesize{$s_1, \ldots, s_7$}] (u){}
child {node [pallino,circle,draw][] (u1) {}
child {node [pallino,circle,draw][] (u11) {}
}
child {node [pallino,circle,draw][] (u12) {}
}}
child {node [pallino,circle,draw][] (u2) {}
child {node [pallino,circle,draw][] (u21) {}
}
child {node [pallino,circle,draw][] (u22) {}
}}
child {node [pallino,circle,draw][] (u3) {}
child {node [pallino,circle,draw][] (u31) {}
}
child {node [pallino,circle,draw][] (u32) {}
}}
child {node [pallino,circle,draw][] (u4) {}
};

\node [mylab,below of = u11, node distance=0.13in] (blank){\footnotesize{$t_1$}};
\node [mylab,below of = u12, node distance=0.13in] (blank){\footnotesize{$t_2$}};
\node [mylab,below of = u21, node distance=0.13in] (blank){\footnotesize{$t_3$}};
\node [mylab,below of = u22, node distance=0.13in] (blank){\footnotesize{$t_4$}};
\node [mylab,below of = u31, node distance=0.13in] (blank){\footnotesize{$t_5$}};
\node [mylab,below of = u32, node distance=0.13in] (blank){\footnotesize{$t_6$}};
\node [mylab,below of = u4, node distance=0.13in] (blank){\footnotesize{$t_7$}};

	\end{tikzpicture}

 }
\center{\small{(a)} \hspace{3.6cm} \small{(b)}}
 \caption{Optimal monitoring tree:$m=7$ and $d_{\texttt{max}}=4$ (a) or $d_{\texttt{max}}=3$ (b).}
\label{fig:full_binary_tree_with_7_leaves}
\end{figure}
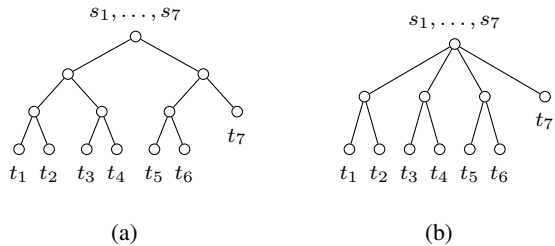 
{\color{black}
The following fact about full binary trees will be useful for bounding the identifiability in the case of single-server monitoring.

\begin{fact}\label{fact:num_fullbinary}
Given a full binary tree with $m$ leaves,
the number of nodes is $z_{\texttt{fb}}(m)\triangleq \max\{0,2m-1\}$.
\end{fact}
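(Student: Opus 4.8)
The plan is to prove the identity by induction on the number of leaves $m$, exploiting the recursive structure of full binary trees. First I would dispose of the degenerate cases that the $\max$ with $0$ is there to absorb: if $m=0$ the tree is empty and has $0=\max\{0,-1\}$ nodes, and if $m=1$ the tree is a single node which is simultaneously the root and a leaf, giving $1=2\cdot 1-1$ nodes, matching $z_{\texttt{fb}}(1)$.

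For the inductive step, suppose $m\geq 2$ and the claim holds for all full binary trees with fewer than $m$ leaves. Since $m\geq 2$, the root cannot be a leaf, so by the definition of a full binary tree it has exactly two children; deleting the root splits the tree into two subtrees $T_1$ and $T_2$, each of which is again a full binary tree. The one point worth stating carefully is this last assertion: a node lying in $T_j$ retains exactly the children it had in the original tree, so it still has either zero or two children, hence $T_j$ is full binary. If $T_1$ has $m_1$ leaves and $T_2$ has $m_2$ leaves, then $m_1,m_2\geq 1$ and $m_1+m_2=m$, because every leaf of the original tree belongs to exactly one of the two subtrees and the root is not a leaf. By the inductive hypothesis $T_j$ has $2m_j-1$ nodes, so the total is $1+(2m_1-1)+(2m_2-1)=2(m_1+m_2)-1=2m-1=z_{\texttt{fb}}(m)$, closing the induction.

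Alternatively I could give a one-line double-counting argument: letting $i$ be the number of internal nodes and $\ell=m$ the number of leaves, counting parent–child edges two ways gives $2i=n-1$ with $n=i+\ell$ (each internal node contributes two children, and every non-root node has a unique parent), whence $i=\ell-1$ and $n=2\ell-1$. I would include whichever fits the surrounding exposition better; the induction is self-contained and meshes naturally with the recursive tree constructions of Lemma \ref{le:optimal_tree}. There is no genuine obstacle here — this is a classical fact — and the only subtlety is remembering that ``full binary tree'' admits the single-node tree, which is precisely what the base case $m=1$ and the $\max$ with $0$ account for.
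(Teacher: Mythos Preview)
Your argument is correct; both proofs proceed by induction on $m$, but the decomposition differs. The paper works bottom-up: it picks an internal node $v$ whose two children are both leaves, removes those two leaves so that $v$ becomes a leaf, and applies the simple inductive hypothesis for $m-1$ leaves to the resulting tree. You work top-down: you delete the root and split into the two child subtrees, which forces strong induction since the subtree leaf-counts $m_1,m_2$ are arbitrary positive integers summing to $m$. Your version avoids the small unstated step in the paper's proof (that a full binary tree with $m\geq 2$ leaves always contains an internal node whose two children are both leaves), at the mild cost of needing the stronger hypothesis. Your alternative double-counting argument ($2i=n-1$, $n=i+\ell$) is the slickest of the three and would also fit nicely, since it sidesteps induction entirely.
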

\begin{proof}
The fact can be proved by induction on $m$. 
If $m=1$ or $2$ the assertion is trivially true as the corresponding binary tree is unique and has $1$ or $3$ nodes, respectively.
Let us assume that the assertion is true for $m-1$ and $z_{\texttt{fb}}(m-1)=2(m-1)-1=2m-3$.
Let us now consider a generic full binary tree $t$ with $m$ leaves.
Let us remove the two leaves of any node $v$ of such a tree, obtaining the tree $t'$.
The tree $t'$ is also a full binary tree. $t'$ has  $m-1$ leaves as node $v$ is now a leaf itself.
Therefore the number of nodes of the new tree is $z_{\texttt{fb}}(n-1)=2m-3$. As the initial tree $t$ has two more nodes than $t'$ we can calculate  $z_{\texttt{fb}}(n)=2m-3+2=2m-1$ which concludes the proof. \qedhere

\end{proof}

Given the above properties we can formulate the following tight bound for the case of $m$ monitoring paths sharing a common endpoint, i.e, for single server monitoring.}

{\color{black}
\begin{theorem} [Identifiability for single-server monitoring]
\label{th:bound_with_tree}
Consider monitoring paths between a server and $m$ clients in a network of $n$ nodes and maximum path length $d_{\texttt{max}}$.  Then the maximum number of identifiable nodes $\psiTREE(m,n,d_{\texttt{max}})$ is upper-bounded by:
\begin{align}
 \left\{\begin{array}{l}
\min \left\{z_{\texttt{fb}}(m), n \right\}, \ \mbox{if }d_{\texttt{max}}\geq \lceil log_2 m \rceil+1\\
\min \left\{n; 1+ \Big\lfloor \frac{m}{2^{(d_{\texttt{max}}-2)}}\right\rfloor
\cdot z_{\texttt{fb}}(2^{(d_{\texttt{max}}-2)})+ \\ 
\hspace{4.5em} + z_{\texttt{fb}}( m \mod 2^{(d_{\texttt{max}}-2)})\Big\}, \  \mbox{otherwise}
\end{array}
 \right.
\end{align}
where $z_{\texttt{fb}}(m)\triangleq \max\{0,2m-1\}$ is the number of nodes in a full binary tree with $m$ leaves. 
\end{theorem}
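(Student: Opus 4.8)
The plan is to obtain this bound as a direct consequence of the structural results already established, namely Lemmas~\ref{le:tree} and~\ref{le:optimal_tree} and Fact~\ref{fact:num_fullbinary}. By Lemma~\ref{le:tree}, any feasible set $P$ of $m$ client--server monitoring paths under consistent routing induces a tree $\mathcal{T}$ rooted at the server, in which the $m$ monitoring paths are exactly root-to-leaf paths; hence $\mathcal{T}$ has at most $m$ leaves, and the constraint that every path has at most $d_{\texttt{max}}$ nodes forces $\mathcal{T}$ to have depth at most $d_{\texttt{max}}-1$. Thus $\phi_1(P)$ is at most the number of identifiable nodes of $\mathcal{T}$, which is in turn at most the maximum number of identifiable nodes over all trees with at most $m$ leaves and depth at most $d_{\texttt{max}}-1$ whose root-to-leaf paths are monitored. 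By the local modifications used in the proof of Lemma~\ref{le:optimal_tree} (splitting a node of out-degree larger than two, extending a one-child node, and subdividing to enlarge the leaf set), each such transformation never decreases the number of identifiable nodes, so this maximum is attained by the optimal monitoring tree with exactly $m$ leaves characterized in Lemma~\ref{le:optimal_tree}.

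Next I would invoke that characterization in the two regimes. If $d_{\texttt{max}}\geq \lceil\log_2 m\rceil+1$, the optimal monitoring tree is a full binary tree with $m$ leaves; if $d_{\texttt{max}}< \lceil\log_2 m\rceil+1$, it is a common root joined to $\lfloor m/2^{(d_{\texttt{max}}-2)}\rfloor$ perfect binary trees of depth $d_{\texttt{max}}-2$ and to at most one further full binary tree with $m\bmod 2^{(d_{\texttt{max}}-2)}$ leaves. In either case every internal node has exactly two children, so distinct internal nodes have distinct sets of descendant leaves and hence distinct encodings; each leaf has a distinct singleton encoding that differs from every internal encoding; and in the bounded case the global root lies on strictly more paths than each subtree root (the root has at least two children in that regime, since $m>2^{d_{\texttt{max}}-1}$ there). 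Consequently \emph{every} node of the optimal monitoring tree is identifiable, so $\phi_1(P)$ is bounded above by the total number of its nodes.

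It then only remains to count those nodes. By Fact~\ref{fact:num_fullbinary}, a full binary tree with $m$ leaves has $z_{\texttt{fb}}(m)=\max\{0,2m-1\}$ nodes, which settles the first case and yields the bound $\min\{z_{\texttt{fb}}(m),n\}$. In the second case the optimal monitoring tree has one root, plus $\lfloor m/2^{(d_{\texttt{max}}-2)}\rfloor$ perfect binary trees each with $2^{(d_{\texttt{max}}-2)}$ leaves and hence $z_{\texttt{fb}}(2^{(d_{\texttt{max}}-2)})$ nodes, plus one full binary tree with $m\bmod 2^{(d_{\texttt{max}}-2)}$ leaves and hence $z_{\texttt{fb}}(m\bmod 2^{(d_{\texttt{max}}-2)})$ nodes; summing gives $1+\lfloor m/2^{(d_{\texttt{max}}-2)}\rfloor\cdot z_{\texttt{fb}}(2^{(d_{\texttt{max}}-2)})+z_{\texttt{fb}}(m\bmod 2^{(d_{\texttt{max}}-2)})$. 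Intersecting with the trivial bound $\phi_1(P)\leq n$ gives the stated inequality.

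The step I expect to be the main obstacle is the reduction in the first paragraph: one must be careful that an \emph{arbitrary} feasible tree --- which may have strictly fewer than $m$ distinct leaves (e.g.\ when some client sits on the path to another), maximal chains of one-child nodes, or a leaf that shares an encoding with an ancestor --- does not have more identifiable nodes than the optimal monitoring tree with $m$ leaves. I would dispatch this by reusing the transformations of the proof of Lemma~\ref{le:optimal_tree}, verifying that each is non-decreasing in the number of identifiable nodes and that iterating them terminates exactly at the tree described there; the only place needing slightly more care is checking that enlarging the leaf set from $m'<m$ to $m$ is also non-decreasing, which again follows from the ``split a high-out-degree node'' move. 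Everything else is routine counting with $z_{\texttt{fb}}$.
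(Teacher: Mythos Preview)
Your proposal is correct and follows essentially the same approach as the paper: invoke Lemma~\ref{le:tree} to reduce to a tree, Lemma~\ref{le:optimal_tree} to identify the optimal monitoring tree in each regime, and Fact~\ref{fact:num_fullbinary} to count its nodes. You are in fact more careful than the paper's own proof, which simply cites these three results without explicitly verifying that every node of the optimal tree is identifiable or justifying the reduction from an arbitrary feasible tree (possibly with fewer than $m$ leaves or one-child chains) to the optimal one; your discussion of these points is a welcome addition rather than a deviation.
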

{\color{black}
\begin{proof}

Let us first consider the case of unbounded path length.
Due to Lemma \ref{le:tree} the monitoring paths form a tree topology. Since we are interested in the case of maximum identifiability with a given number of paths, Lemma \ref{le:optimal_tree} provides the case of full binary tree to maximize the number of identifiable nodes given $m$ monitoring paths. It follows from Fact \ref{fact:num_fullbinary} that such a number is either $z_{\texttt{fb}}(m)$ or $n$ whichever is the lowest.

In the case of bounded path length, if $d_\texttt{max}\geq \lceil \log_2 m \rceil+1$, we apply  Lemma \ref{le:optimal_tree} to see that 
the path length limit has no implications on the value of the bound, which therefore would still be $z_{\texttt{fb}}(m)$ or $n$, whichever is the lowest.

If instead $d_\texttt{max}<\lceil log_2 m \rceil+1$, according to Lemma  \ref{le:optimal_tree} we know that the topology that guarantees maximum identifiability can be obtained by creating several full binary tree of depth $d_\texttt{max}-1$ and connecting them to a unique root.
The maximum number of leaves of a full binary tree of depth $d_\texttt{max}-1$ is $2^{(d_\texttt{max}-2)}$.
Therefore with $m$ paths we can create $\left\lfloor \frac{m}{2^{(d_\texttt{max}-2)}}\right\rfloor$ full binary trees of depth $d_\texttt{max}-1$, each guaranteeing identifiability of a root plus $
z_{\texttt{fb}}(2^{(d_\texttt{max}-2)})$ nodes (according to Fact \ref{fact:num_fullbinary}) and a full binary tree with depth lower than $d_\texttt{max}$ with the remaining $[m \mod 2^{(d_\texttt{max}-2)}]$ leaves, of depth lower than $d_\texttt{max}$ which will ensure the identifiability of other $z_{\texttt{fb}}(
m \mod 2^{(d_\texttt{max}-2)})$ nodes.

\end{proof}
}}

\subsubsection{Tightness of the bound and design insights}
Under the constraint that monitoring paths  have a common endpoint, for any given number of monitoring paths $m$, maximum path length $d_{\texttt{max}}$, and sufficiently large $n$, it is possible to design a network topology according to the  structure of an optimal monitoring tree, as described by  Lemma \ref{le:optimal_tree}, with a number of identifiable nodes equal to the bound in Theorem
\ref{th:bound_with_tree}. 

In particular, if $d_{\texttt{max}} \geq \lceil log_2 m \rceil+1$
 the topology would be a full binary tree as in the example of Figure \ref{fig:full_binary_tree_with_7_leaves}(a), while if $d_{\texttt{max}} < \lceil log_2 m \rceil+1$ the topology would be the composition of $\lfloor {m \over 2^{(d_{\texttt{max}}-2)}} \rfloor$ perfect binary trees of depth $d_{\texttt{max}}-2$, and a full binary tree of depth at most $d_{\texttt{max}}-2$, connected to a common root, as in the example of Figure \ref{fig:full_binary_tree_with_7_leaves}(b).

\vspace{-.1cm}
\subsection{Multi-server monitoring}
{\color{black}
We now consider the case in which monitoring is performed through the paths of an overlay service network, with a set of $S$ servers,
where each server $s$  ($s=1,\ldots,S$) has $m_s$ clients. We want to determine an upper bound on the number of identifiable nodes that can be obtained by varying the location of the servers in $S$ and related clients.

\subsubsection{Identifiability bound}

Since all the paths going from the $m_s$ clients to a deployed server $s$ will have the same destination, under the assumption of consistent routing they will form a  tree with $m_s$ leaves, as shown in Lemma \ref{le:tree}. Hence, we will have $S$ such trees of paths intersecting each other to increase identifiability.

We analyze two subcases: (i) \emph{fixed client assignment}, where the number of clients $m_s$ for each server is predetermined, and (ii) \emph{flexible client assignment}, where the total number of clients $\sum_{s=1}^S m_s$ is fixed but the distribution across servers can be designed.

Let us first consider  the paths of one monitoring tree at a time. The following lemma holds.}}}

\begin{lemma} \label{le:number_id_in_path}
Let us consider a tree of $m_s$ monitoring paths. The maximum number of identifiable nodes along any one of the $m_s$ paths is $m_s$.
\end{lemma}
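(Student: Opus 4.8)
The plan is to work inside a single monitoring tree $\hat{T}_s$ determined by $m_s$ consistent routing paths sharing the server endpoint $r_s$, and to bound the number of identifiable nodes lying on one fixed root-to-leaf path $\hat{p}$. First I would invoke Lemma~\ref{le:tree} to recall that these $m_s$ paths form a tree rooted at $r_s$, so that $\hat{p}$ is a simple path from $r_s$ to some leaf. The key structural observation is that, because routing is consistent and the paths form a tree, each of the other $m_s-1$ paths $p_j$ ($p_j \neq \hat{p}$) intersects $\hat{p}$ in a (possibly empty) \emph{prefix} of $\hat{p}$: starting from the common root $r_s$, the path $p_j$ follows $\hat{p}$ for a while and then branches off, never to return (any return would create two distinct sub-paths between $r_s$ and the re-intersection node, violating consistency — this is exactly the reasoning of Lemma~\ref{le:tree}). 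Consequently, in the path matrix $M(\hat{p})$, every column $M(\hat{p})|_{*,j}$ with $p_j$ on the tree has its ones forming a prefix of the rows (an initial segment), rather than an arbitrary consecutive block.

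Next I would translate this into a counting argument on distinct encodings, paralleling Lemma~\ref{le:numero} but exploiting the stronger ``prefix'' structure. Order the nodes of $\hat{p}$ as $v_1 = r_s, v_2, \ldots, v_{d}$ from root to leaf. For each column $j$ with $p_j \neq \hat{p}$ on the tree, the column is a prefix of ones (all ones up to some branch-off row, then all zeros); in addition the column corresponding to $\hat{p}$ itself is all ones. A change of encoding between consecutive rows $v_{r-1}$ and $v_r$ requires at least one column to flip at row $r$, and since each such column is a prefix of ones it flips \emph{at most once} (from $1$ to $0$). Therefore, going down $\hat{p}$, the encoding is non-increasing in its support, and the number of distinct row-encodings is at most $1$ plus the number of rows at which some column flips, i.e.\ at most one more than the number of distinct branch-off points. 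But a cleaner way to finish: the encoding of $v_r$ is completely determined by which paths pass through $v_r$, and on a root-to-leaf path the set of paths through $v_r$ is monotonically shrinking as $r$ increases; hence the distinct encodings along $\hat{p}$ are totally ordered by inclusion of their supports, and a chain of subsets of $\{p_1,\ldots,p_{m_s}\}$ all containing $\hat{p}$'s own index has length at most $m_s$. Combined with Lemma~\ref{le:equal_rows} (equal rows $\Rightarrow$ non-identifiable), the number of identifiable nodes on $\hat{p}$ is at most the number of distinct rows of $M(\hat{p})$, which is at most $m_s$.

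I expect the main obstacle to be making the ``prefix of ones'' claim fully rigorous: one must argue that for a path $p_j$ sharing the root with $\hat{p}$, the intersection $p_j \cap \hat{p}$ really is an initial segment of $\hat{p}$ and not some interior block. This follows from combining Lemma~\ref{le:tree} (all paths from $r_s$ form a tree, so $p_j$ and $\hat{p}$ share a unique maximal common prefix and then diverge permanently) with the consecutive-ones property of Lemma~\ref{le:c1p} (the ones in each column are consecutive); since the block of ones in column $j$ is consecutive and must include row $1$ (both paths start at $r_s$), it is forced to be a prefix. Once this is established, the chain-of-subsets argument is immediate and gives the bound $m_s$ without any floor functions or case analysis. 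A minor point to handle carefully is whether $\hat{p}$ might be strictly longer than $m_s$ nodes, in which case several consecutive nodes necessarily share an encoding and are thus non-identifiable — but this only strengthens the conclusion, so it causes no difficulty.
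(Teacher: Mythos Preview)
Your proof is correct and more rigorous than the paper's. The paper argues by induction on $m_s$: for the root to be identifiable there must be a diverging path at its child, and likewise at every subsequent node along $\hat{p}$; since at most $m_s$ paths are available to diverge, at most $m_s$ nodes on $\hat{p}$ can be distinguished. This is essentially a one-sentence sketch.

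Your route is different in presentation: you work directly with the path matrix $M(\hat{p})$, strengthen the consecutive-ones property of Lemma~\ref{le:c1p} to a \emph{prefix}-of-ones property (using that all $m_s$ paths share the root $r_s$), and then observe that the row encodings form a decreasing chain of subsets of $\{1,\dots,m_s\}$ all containing the index of $\hat{p}$, hence of length at most $m_s$. This buys you a fully explicit argument that slots into the paper's existing machinery (Lemmas~\ref{le:tree}, \ref{le:equal_rows}, \ref{le:c1p}) without needing an inductive hypothesis, and it makes transparent why the bound is exactly $m_s$ rather than, say, $2(m_s-1)$ as in the general consistent-routing case of Lemma~\ref{le:numero}: each column flips at most once instead of twice. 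The paper's induction is shorter but leaves more to the reader; your chain argument is the cleaner formalization of the same underlying observation that each of the other $m_s-1$ paths branches off $\hat{p}$ exactly once.
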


\begin{proof}
By induction on $m$ and considering the tree structure we can see that in order for the root to be identifiable, its children must have diverting paths, and so forth for every new level in the tree. Given that the maximum number of diverting paths is bounded by $m$, then $m$ is the maximum number of identifiable nodes that can be found along a single monitoring path. More specifically this bound is met tightly when the tree is an unbalanced full binary tree.\qedhere
\end{proof}
Following a similar approach to the analysis we made for the proof of Theorem \ref{th:exp_max}, we want to give a value of an upper bound $N_{\texttt{max}}$ on the sum of the number
of different encodings in each path matrix.

\begin{lemma}\label{le:length_binary}
Given a 
monitoring tree with $m_s$ leaves $v_i$, $i=1, \ldots, m_s$. Let $\ell_k$ be the maximum number of identifiable nodes on the path from the leaf  $v_k$ to the root $r$ (calculated in number of traversed nodes). Let 
$L\triangleq\sum_{k=1}^m  \ell_k$.
The value of $L$ is bounded above as follows:
$$
L \leq \psi(m_s)\triangleq \frac{m_s^2+3m_s-2}{2}.
$$

\end{lemma}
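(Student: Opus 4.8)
The plan is to bound the quantity $\ell_k$ for each leaf $v_k$ using the tree structure, and then sum. The key observation is that in a monitoring tree rooted at $r$ with $m_s$ leaves, the path from a leaf $v_k$ to the root traverses a sequence of nodes, and the identifiable ones among them must have distinct binary encodings. Since all these nodes lie on one root-to-leaf path, by Lemma~\ref{le:number_id_in_path} we have $\ell_k \leq m_s$ for every $k$; this already gives $L \leq m_s^2$, which is weaker than what we want. To get the tighter bound $\psi(m_s) = (m_s^2 + 3m_s - 2)/2$, I would exploit the fact that the $m_s$ root-to-leaf paths heavily overlap near the root. Concretely, the path from $v_k$ to $r$ shares its upper portion (near $r$) with the paths from other leaves; a node at ``depth'' $j$ from the root that is identifiable contributes to $\ell_k$ for every leaf $v_k$ in its subtree, but it should be counted toward the \emph{encoding budget} only once.

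The main step is a careful accounting argument on the tree. I would proceed by induction on $m_s$ (the number of leaves). For the base case $m_s = 1$, a single path with one leaf and root: the maximum number of identifiable nodes on it is $1$ (only the root/leaf, which here coincide in the degenerate case, or at most the two endpoints — one needs to check the exact convention), and $\psi(1) = (1 + 3 - 2)/2 = 1$, matching. For the inductive step, given a tree with $m_s$ leaves, I would pick an internal node $u$ closest to the leaves all of whose children are leaves (a ``deepest fork''); since the optimal configuration is a full binary tree (by Lemma~\ref{le:optimal_tree}), such $u$ has exactly two leaf-children $v_a, v_b$. Removing these two leaves turns $u$ into a leaf, yielding a tree $t'$ with $m_s - 1$ leaves. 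The two leaves $v_a$ and $v_b$ each contribute one new identifiable node that was not in $t'$, and the path from $u$ to $r$ in $t'$ has at most $\psi(m_s-1)$-worth... — here I need to be careful: $L$ in $t'$ is $\sum$ over $m_s-1$ leaves, and passing to $t$ replaces the term $\ell_u$ (for leaf $u$ in $t'$) by $\ell_{v_a} + \ell_{v_b}$ in $t$, where $\ell_{v_a}, \ell_{v_b} \leq \ell_u + 1$. Thus $L(t) \leq L(t') + \ell_u + 2$. Since $\ell_u \leq m_s - 1$ (it is a root-to-leaf path length in an $(m_s-1)$-leaf tree, bounded by Lemma~\ref{le:number_id_in_path}), we get $L(t) \leq \psi(m_s - 1) + (m_s - 1) + 2 = \psi(m_s-1) + m_s + 1$. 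One then checks the recurrence: $\psi(m_s) - \psi(m_s - 1) = \frac{(m_s^2 + 3m_s - 2) - ((m_s-1)^2 + 3(m_s-1) - 2)}{2} = \frac{2m_s + 2}{2} = m_s + 1$, so the induction closes exactly.

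I expect the main obstacle to be pinning down the precise edge cases in the recurrence — in particular, justifying that $\ell_{v_a}, \ell_{v_b} \leq \ell_u + 1$ rather than $\ell_u + 2$, and verifying that the worst case is indeed achieved by a full binary tree so that Lemma~\ref{le:optimal_tree} applies to reduce the general optimization to the binary case. There is also a subtlety about whether the root itself and the leaf endpoints are always counted as identifiable (they are not if their encoding is null or duplicated), so I would state the bound as an upper bound on $L$ over all valid monitoring trees, and let the reduction to full binary trees via Lemma~\ref{le:optimal_tree} handle the ``maximum'' aspect. A cleaner alternative to the induction would be a direct summation: in a full binary tree arranged as a ``caterpillar'' (the extremal shape from Lemma~\ref{le:number_id_in_path}), the leaf at depth $j$ has $\ell_k = j + 1$ or so, and summing $\sum_{j} (j+1)$ over the appropriate range of depths gives a closed form; matching that closed form to $(m_s^2 + 3m_s - 2)/2$ would be the routine verification. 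I would present whichever of these is cleanest, but the inductive recurrence seems the most robust.
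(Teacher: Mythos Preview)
Your proposal is correct and uses essentially the same approach as the paper: induction on $m_s$ via the recurrence $\psi(m_s)=\psi(m_s-1)+(m_s+1)$, invoking Lemma~\ref{le:number_id_in_path} to bound the single-path contribution in the smaller tree by $m_s-1$. The only stylistic difference is the direction of the inductive step---you reduce an $m_s$-leaf tree to an $(m_s-1)$-leaf tree by deleting two sibling leaves, whereas the paper builds up from an $(m_s-1)$-leaf tree by appending a new path---but both arrive at the same increment $m_s+1$ and close the induction identically.
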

\begin{proof}

By induction, when $m_s=1$, $L=1$ and the assertion is trivially true. When $m_s=2$ it is also true, and the sum of the paths of the  tree is $L=4$.
Assume that the assertion is valid for all trees with $m_s-1$ leaves, which means  that $\psi(m_s-1)=(m_s^2+m_s-4)/2$.
Let $t$ by any tree with $m_s-1$ leaves, and $L(t)$ be the value of $L$ for such a tree.
Let us consider the addition of a new path $p_{m_s}$ to the tree $t$, to obtain a new tree $t'$ with $m_s$ paths. 
According to Lemma \ref{le:number_id_in_path}, the 
maximum length of the new path $p_{m_s}$
in terms of identifiable nodes is $m_s$.
In order for all its nodes to be identifiable, it is necessary for the new path  to cross $m_s-1$ identifiable nodes of the tree $t$ going from the root $r$ to a leaf $v$ at distance $m_s-1$ (in number of nodes) from $r$.
Let $p_v$ be the monitoring path of $t$ running from $v$ to $r$.
We can use the new path $p_{m_s}$ of $t'$ to produce a maximum increase of identifiability by considering two new leaves $v'$ and $v''$ appended to $v$.
Of these two leaves, the leaf $v'$ can be identified by prolonging the path $p_v$, while the leaf $v''$ can be identified by the new path $p_{m_s}$ only. 
According to this construction, the value of $L(t')$ is obtained my adding $m_s+1$ to the value of $L(t)$, where $m_s$ nodes are due to the length of the new path $p_{m_s}$ and the term $+1$ is due to the increase in the length of the path $p_v$.
$$L(t')=L(t)+(m_s+1).$$
Considering the inductive hypothesis according to which $L(t) \leq (m_s^2+m_s-4)/2$ , 
we obtain the proof of the assertion: \\
\noindent
\begin{math}
L(t') \leq \psi(m_s)=m_s+1+\frac{m_s^2+m_s-4}{2}=\frac{m_s^2+3m_s-2}{2}
\end{math}.
\end{proof}

Let us denote with $m \leq \sum_{i=1}^S m_i$ the total number of clients, where the inequality derives from the fact the  the same clients may be interested in multiple services.
We consider the number of unique encodings appearing in each path matrix of a service $i$,  with $m_i$ clients, where  $i=1,2, \ldots, S$. If these encodings represent nodes that appear also in other paths, the same encodings will also appear in their respective path matrices.
Thanks to Lemma \ref{le:length_binary} we derive the following lemma.

\begin{lemma}\label{le:icdcs-nmax}
Let us consider $S$ services with $m_i$ clients each, where $i=1,2, \ldots, S$, and a total of $m \leq \sum_{i=1}^S m_i$ clients.
The sum of the maximum numbers of different binary encodings in each of the $m$ path matrices (including repetitions across different matrices) is
$$N_{\texttt{max}} \triangleq \sum_{i=1}^S \left[\frac{m_i^2+3m_i-2}{2} + 2m_i \cdot (m-m_i)\right].$$
\end{lemma}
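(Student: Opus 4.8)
The plan is to combine two separate upper bounds that together account for every occurrence of an encoding in any of the $m$ path matrices: first, the encodings contributed "internally" by each monitoring tree, and second, the encodings that a node on one tree contributes to the path matrices of the other trees it happens to traverse. For the internal count, I would fix a server $i$ and appeal directly to Lemma \ref{le:length_binary}: summing over the $m_i$ root-to-leaf paths of tree $i$ the maximum number of distinct encodings that can appear in each path matrix, we get at most $\psi(m_i)=\frac{m_i^2+3m_i-2}{2}$. This is the first term inside the bracket.

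For the cross-tree contribution, the key observation is that a node $v$ lying on tree $i$ has a crossing number that decomposes by tree: if $v$ is traversed by some paths of tree $i$ and also by paths of other trees, then its encoding appears in the path matrix of each path (in any tree) that traverses it. An identifiable node on tree $i$ lies on at most $\ldots$ well, the cleanest route is: each identifiable node counted in the first term lies on some path of tree $i$; the number of \emph{other} path matrices (belonging to the $S-1$ other services, or more precisely to the $m-m_i$ paths not in service $i$) in which its encoding can additionally appear is at most $m-m_i$. Since, by Lemma \ref{le:number_id_in_path}, each of the $m_i$ paths of tree $i$ carries at most $m_i$ identifiable nodes, the nodes of tree $i$ that can appear in foreign path matrices number at most $\ldots$ — here I have to be a little careful, because Lemma \ref{le:length_binary} already bounds the multiset over the $m_i$ paths of tree $i$, and I want a bound on the set of distinct nodes of tree $i$. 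The safe over-count is to say there are at most $2m_i$ identifiable nodes in the whole tree $i$ (by Lemma \ref{le:number_id_in_path} applied along a single longest path together with the full-binary-tree structure, or simply bounding by $z_{\texttt{fb}}(m_i)\le 2m_i$), and each contributes its encoding to at most $m-m_i$ foreign path matrices, giving the term $2m_i\cdot(m-m_i)$. Summing over $i=1,\dots,S$ yields exactly the claimed $N_{\texttt{max}}$.

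The main obstacle I expect is getting the bookkeeping of the cross-tree term to line up precisely with the coefficient $2m_i$ rather than, say, $\psi(m_i)$ or $z_{\texttt{fb}}(m_i)=2m_i-1$: I must argue that it suffices to count each \emph{distinct} node of tree $i$ once when charging it to foreign matrices, even though Lemma \ref{le:length_binary} was phrased as a sum over paths (which double-counts shared internal nodes). The resolution is that the first bracket term $\psi(m_i)$ already absorbs all intra-tree repetitions, so the foreign contribution need only be charged per distinct node, and the number of distinct identifiable nodes in a monitoring tree with $m_i$ leaves is at most $2m_i$ by Fact \ref{fact:num_fullbinary} / Lemma \ref{le:optimal_tree}. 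A secondary subtlety is that $m\le\sum_i m_i$ (a client may be shared), so "foreign paths" should be counted as $m-m_i$ and not $\sum_{j\neq i}m_j$; using $m-m_i$ is the conservative and correct choice. Once these two points are pinned down, the sum $N_{\texttt{max}}=\sum_{i=1}^S\big[\psi(m_i)+2m_i(m-m_i)\big]$ follows by simply adding the two contributions over all services.
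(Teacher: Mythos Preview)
Your decomposition into an ``internal'' term (bounded by $\psi(m_i)$ via Lemma~\ref{le:length_binary}) and a ``cross'' term is the right overall shape, but the justification you give for the cross term does not go through. You bound the cross contribution of service $i$ by arguing that tree $i$ contains at most $2m_i$ identifiable nodes (invoking Fact~\ref{fact:num_fullbinary} / Lemma~\ref{le:optimal_tree}) and that each of these appears in at most $m-m_i$ foreign path matrices. The first claim is false in the multi-tree setting: those results bound the number of identifiable nodes in a \emph{single} tree monitored in isolation, but once foreign paths cross tree $i$, they can render many more than $2m_i$ of its nodes globally identifiable. For a concrete instance, take $m_i=1$ so that tree $i$ is a single path $p$; if $m-1$ foreign paths cross $p$ at distinct interior nodes, then all $m$ nodes of $p$ (up to $2m-1$ after the flip bound) can be identifiable, far exceeding $2m_i=2$. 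So the product ``(at most $2m_i$ nodes)$\times(m-m_i)$'' is not a valid bound on anything you have defined, even though it happens to equal the correct coefficient.

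The paper obtains the same term $2m_i(m-m_i)$ by a different and sound route: it stays with the column-flip argument of Lemma~\ref{le:c1p}. In each path matrix $M(\hat p_j)$ with $j$ in service $i$, there are $m-m_i$ ``foreign'' columns, and by routing consistency each of them has consecutive ones and hence at most two flips. These flips add at most $2(m-m_i)$ distinct row encodings beyond what the $m_i$ internal columns alone would produce; summing over the $m_i$ path matrices of service $i$ gives exactly $2m_i(m-m_i)$. Combined with Lemma~\ref{le:length_binary} for the internal columns, this yields the bracket $\psi(m_i)+2m_i(m-m_i)$ per service. The crux you are missing is that the factor $2$ comes from the two flips per foreign column (a per-matrix argument), not from a bound on the number of identifiable nodes of tree $i$ (a per-tree argument); your attempt to swap the roles of node-count and path-count in the product does not survive the multi-tree interaction.
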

\begin{proof}

In each path matrix related to the client-server path of 
a given service $i =1, \ldots, S$, there are $m_i$ columns related to the paths of the same service and other $m-m_i$ columns related to paths of the other services.  The sequence of bits of these latter columns  may flip twice, due to Lemma \ref{le:c1p}.

As each of these columns flip potentially creates a new encoding with respect to the
encodings that the columns related to the $m_i$ paths of the same service would generate alone, these column flips contribute additional $2(m-m_i)$ encodings to each path matrix.

This occurs across all the $m_i$ path matrices, where the number of the potentially different encodings related to the first columns (over all the $m_i$ matrices) is $\psi(m_i)$ as detailed in Lemma \ref{le:length_binary} and where the column flips of all the other $m-m_i$ columns will add $2(m-m_i)\cdot m_i$ encodings.

We conclude that the $m_i$ path matrices of the same service would generate 
$\frac{m_i^2+3m_i-2}{2} + 2m_i \cdot (m-m_i)$ potentially different encondings with possible repetitions in the different path matrices.
As this holds for the path matrices of all the services we can derive the formula for  $N_\texttt{max}$.
\end{proof}

\begin{theorem}[Multiple servers, fixed client assignment]
\label{th:bound_with_services}
Let us consider $S$ servers with $m_s$ clients each, where $s=1,2, \ldots, S$, and a total of $m \leq \sum_{s=1}^S m_s$ clients.
Let also  $n=|N|$ be the total number of nodes and $\bar{d}$ the  average path length.%
The maximum number of identifiable nodes $\psiFOREST(\mathbf{m},n,\bar{d})$ is upper bounded as in Theorem \ref{th:exp_max}, except that $N_{\texttt{max}}$ is replaced by\\

\begin{math}
N_{\texttt{max}}=\min\left\{m \cdot \bar{d}; \sum\limits_{s=1}^S \left[\frac{m_s^2+3m_s-2}{2} + 2m_s (m-m_s)\right]\right\}.
\end{math}
\end{theorem}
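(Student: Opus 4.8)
The plan is to reduce this to the same greedy encoding‑packing argument that proves Theorem~\ref{th:exp_max}, changing only the value of $N_{\texttt{max}}$, which is now supplied by Lemma~\ref{le:icdcs-nmax} together with the elementary union bound on total path length. First I would recall that, by Lemma~\ref{le:tree}, each server $s$ together with its $m_s$ clients induces a tree of monitoring paths, so the collection of all monitoring paths and their path matrices $M(\hat{p}_i)$ is well defined. For every path $p_i$, each identifiable node traversed by $p_i$ occupies, by Lemma~\ref{le:equal_rows}, a distinct row of $M(\hat{p}_i)$; hence $|\mathcal{I}(p_i)|$ is bounded both by $d_i$ and by the number of distinct encodings appearing in $M(\hat{p}_i)$.

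Summing over the monitoring paths and using the double‑counting identity $\sum_i |\mathcal{I}(p_i)| = \sum_{v\ \text{identifiable}} \chi(v)$ (each identifiable $v$ is counted once per path through it), I would obtain
\begin{align*}
\sum_{v\ \text{identifiable}}\chi(v)
&\le \min\Big\{\textstyle\sum_i d_i,\ \sum_i\big(\text{number of distinct encodings in }M(\hat{p}_i)\big)\Big\}\\
&\le \min\Big\{m\bar{d},\ \textstyle\sum_{s=1}^S\big[\tfrac{m_s^2+3m_s-2}{2}+2m_s(m-m_s)\big]\Big\} =: N_{\texttt{max}},
\end{align*}
where the right‑hand term inside the outer $\min$ is exactly Lemma~\ref{le:icdcs-nmax} and the left‑hand term uses $\sum_i d_i = m\bar{d}$.

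It then remains to invoke the argument in the proof of Theorem~\ref{th:exp_max} verbatim: identifiable nodes have pairwise distinct, nonzero encodings (Lemma~\ref{def:identifiability}), there are only ${m \choose k}$ encodings of crossing number $k$, and each such encoding contributes $k$ to $\sum_v\chi(v)$; so the number of identifiable nodes is maximized by first taking all ${m \choose i}$ encodings of crossing number $i$ for $i=1,\dots,i_{\texttt{max}}$, where $i_{\texttt{max}}$ is the largest $k$ with $\sum_{i=1}^{k} i{m \choose i}\le N_{\texttt{max}}$, and then adding at most $\big\lfloor(N_{\texttt{max}}-\sum_{i=1}^{i_{\texttt{max}}}i{m \choose i})/(i_{\texttt{max}}+1)\big\rfloor$ further nodes, each of crossing number at least $i_{\texttt{max}}+1$. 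Intersecting with the trivial bound $n$ on the number of nodes yields the claimed expression. I expect no genuine obstacle here, since the combinatorial heavy lifting already sits in Lemmas~\ref{le:length_binary} and~\ref{le:icdcs-nmax}; the only points that merit a careful sentence are why the columns of $M(\hat{p}_i)$ coming from other servers' trees still obey the consecutive‑ones property used to count flips in Lemma~\ref{le:icdcs-nmax} (this is Lemma~\ref{le:c1p}, which holds under consistent routing independently of the tree structure), and why passing from ``distinct encodings per matrix'' to $\sum_v\chi(v)$ loses nothing, which is just the double‑counting identity above.
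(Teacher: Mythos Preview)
Your proposal is correct and follows essentially the same approach as the paper: the paper's proof is a one-line reduction stating that the argument is analogous to Theorem~\ref{th:exp_max} with $N_{\texttt{max}}$ supplied by Lemma~\ref{le:icdcs-nmax}, and you have simply spelled out that reduction in full detail, including the double-counting identity and the reason the $m\bar{d}$ term enters the $\min$.
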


\begin{proof}
The proof is analogous to that of Theorem \ref{th:exp_max}, where $N_{\texttt{max}}$ is given by Lemma \ref{le:icdcs-nmax}.
\end{proof}

{\color{black}{

In a more general case, each client can be assigned to any of $S$ available servers. In this case, a valid bound on the identifiable nodes corresponding to the monitoring paths between clients and servers 
 should hold for all the possible assignments of the clients to the servers.
 In the following we denote any of these assignments as an $S$-dimensional integer  vector  $\mathbf{m}$, where each element  $m_s$ 
 gives the number of clients assigned to the server $s \in \mathcal{ S}$, and where it holds that $ m_s \geq 0$, $\forall s \in \mathcal{S}$ and $\sum_{s=1}^{S} m_s=m$. 
}}

{\color{black}{
The following theorem aims at characterizing the maximum identifiability that can be achieved by means of passive monitoring through service paths, in a multi-server scenario, when every client can be assigned to any server.}}

{\color{black}{

\begin{theorem} [Identifiability for multi-server monitoring with flexible client assignment]\label{th:undistinguished_clients_rev}
Consider monitoring the paths between $S$ servers and $m$ clients with arbitrary client-server assignment in a network with $n$ nodes, with  average path length $\bar{d}$. Then the maximum number of identifiable nodes $\psiFOREST(m,S,n,\bar{d})$ is upper-bounded as in Theorem \ref{th:exp_max},
 except that $N_{\texttt{max}}$ is specified by
\begin{math}
N_{\texttt{max}}=\min\left\{m \cdot \bar{d}; m^2(2-\frac{3}{2S}) + 3m/2-S\right\}.
\end{math}
\end{theorem}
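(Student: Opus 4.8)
The plan is to derive the flexible-assignment bound from the fixed-assignment bound of Theorem~\ref{th:bound_with_services} by optimizing over the choice of assignment. Fix any admissible assignment $\mathbf m=(m_1,\dots,m_S)$, i.e.\ $m_s\ge 0$ and $\sum_{s=1}^S m_s=m$. Theorem~\ref{th:bound_with_services}, through Lemma~\ref{le:icdcs-nmax}, already shows that the number of identifiable nodes under this assignment is at most the expression of Theorem~\ref{th:exp_max} evaluated at $N_{\texttt{max}}(\mathbf m)=\min\{m\bar d;\ \sum_{s=1}^S[\frac{m_s^2+3m_s-2}{2}+2m_s(m-m_s)]\}$. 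Since a bound for the flexible case must be valid no matter which assignment the network designer picks, and since the bound of Theorem~\ref{th:exp_max} is monotonically nondecreasing in $N_{\texttt{max}}$ (which one checks directly from the closed form: raising $N_{\texttt{max}}$ either enlarges the floor term or advances $i_{\texttt{max}}$ by one without lowering the total), it suffices to maximize $N_{\texttt{max}}(\mathbf m)$ over all admissible $\mathbf m$. As $m\bar d$ is independent of the assignment, one then has $\max_{\mathbf m}N_{\texttt{max}}(\mathbf m)=\min\{m\bar d;\ \max_{\mathbf m}\sum_{s=1}^S[\frac{m_s^2+3m_s-2}{2}+2m_s(m-m_s)]\}$.

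The crux is the optimization $\max_{\mathbf m}\sum_{s=1}^S[\frac{m_s^2+3m_s-2}{2}+2m_s(m-m_s)]$. Expanding each summand as $-\tfrac32 m_s^2+(2m+\tfrac32)m_s-1$ and summing, using $\sum_s m_s=m$, gives $\sum_{s=1}^S[\dots]=-\tfrac32\sum_{s=1}^S m_s^2+(2m+\tfrac32)m-S$, so maximizing the left side is the same as minimizing $\sum_{s=1}^S m_s^2$ under $\sum_s m_s=m$. By Cauchy--Schwarz, $\sum_{s=1}^S m_s^2\ge(\sum_s m_s)^2/S=m^2/S$ (the worst case being the balanced assignment $m_s=m/S$), and substituting yields $\max_{\mathbf m}\sum_{s=1}^S[\dots]\le -\tfrac{3m^2}{2S}+(2m+\tfrac32)m-S=m^2\bigl(2-\tfrac{3}{2S}\bigr)+\tfrac{3m}{2}-S$.

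Combining, $\max_{\mathbf m}N_{\texttt{max}}(\mathbf m)\le\min\{m\bar d;\ m^2(2-\tfrac{3}{2S})+\tfrac{3m}{2}-S\}$, and feeding this into the formula of Theorem~\ref{th:exp_max} (again invoking its monotonicity in $N_{\texttt{max}}$) produces exactly the stated bound. I expect the only delicate point---the ``main obstacle'' in an otherwise short argument---to be stating cleanly the two reductions that hinge on monotonicity: that one may first maximize $N_{\texttt{max}}$ over assignments and only afterwards plug the maximizer into Theorem~\ref{th:exp_max}, and that replacing $\max_{\mathbf m}\sum_s[\dots]$ by the larger quantity $m^2(2-\tfrac{3}{2S})+\tfrac{3m}{2}-S$ still yields a valid upper bound on $\psiFOREST$. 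Both follow at once once the closed form of Theorem~\ref{th:exp_max} is observed to be nondecreasing in $N_{\texttt{max}}$.
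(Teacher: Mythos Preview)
Your proposal is correct and follows essentially the same route as the paper: reduce to the fixed-assignment bound of Theorem~\ref{th:bound_with_services}, maximize $N_{\texttt{max}}$ over assignments, rewrite the sum as $2m^2+\tfrac{3m}{2}-S-\tfrac32\sum_s m_s^2$, and observe the optimum is at the balanced assignment $m_s=m/S$. The only differences are cosmetic: you invoke Cauchy--Schwarz where the paper says ``relaxation,'' and you make the monotonicity of the Theorem~\ref{th:exp_max} bound in $N_{\texttt{max}}$ explicit, which the paper leaves implicit.
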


}}
{\color{black}{
\begin{proof}

 Let $\mathcal{A}$ be the set of possible assignments of $m$ clients to $S$ servers:
  \noindent
  \begin{math}\mathcal{A}=\{\mathbf{m}\in \mathbb{N} | m_s \geq 0,\textrm{ and }\sum_{s=1}^{S}   m_s =m \} .
 \end{math}



The bound on the number of identifiable nodes in the case of $S$ servers and undistinguished clients can be formulated as in Theorems \ref{th:bound_consistent_routing} and \ref {th:bound_with_services}, where
{
{
\begin{math}
N_{\texttt{max}}=\min\left\{m\bar{d}; \max_{\mathbf{m}\in {\mathcal{A}}}  \sum_{s=1}^S \left[\frac{m_s^2+3m_s-2}{2} + 2m_s \cdot (m-m_s)\right]\right\}.
\end{math}
}}
In order to calculate $N_{\texttt{Max}}$ we address the optimization, in the integer variables $m_s$, of the objective function
$f(\mathbf{m})=\sum_{s=1}^S \left[(m_s^2+3m_s-2)/2 + 2m_s \cdot (m-m_s) \right]=
2m^2+3m/2-S
-3/2 \sum_{s=1}^S m_s^2$ (obtained by replacing  $\sum_{s=1}^S m_s$ with $m$ where possible), under the constraint that $\mathbf{m}\in\mathcal{A}$.
A relaxation of this problem 
leads to the following solution: $m_s=m/S$, $\forall s=1, \ldots, S$, and an objective value  of
$m^2(2-\frac{3}{2S}) + 3m/2-S$, which is an upper bound to
$f(\mathbf{m})$,
 from which we derive the assertion of the theorem.
\end{proof}

\vspace{-.1cm}
\subsubsection{Design insights}

In a setting in which the $m$ monitoring paths connect a given number of servers to their  clients, the maximum identifiability is obtained by letting the branches of several server-rooted optimal monitoring trees intersect with each other, while satisfying the consistent routing assumption and the constraint on the average path length $\bar{d}$. 


While in the case of fixed client assignment to servers, the number of leaves of each tree is predetermined, in the case of flexible client assignment, the proof of
Theorem \ref{th:undistinguished_clients_rev} suggests that the highest identifiability is obtained through a uniform assignment of clients to servers. In terms of topology design this implies that the maximally identifiable topology would require uniformly sized monitoring trees.

%

\section{Performance evaluation}\label{sec:Performance Evaluation}

To evaluate the tightness of the proposed upper bounds, we compare them with lower bounds obtained by known heuristics on synthetic and real network topologies. \color{black}{Since the bound in Theorem~\ref{th:exp_max} is achievable under arbitrary routing (see Section~\ref{subsubsec:Design - arbitrary routing}), but it is higher than the bound in Theorem \ref{th:bound_consistent_routing} when applied to consistent routing schemes, we 
show it once in Figure \ref{fig:A} and we omit it in the rest of the evaluation.} \looseness=-1
\textcolor{black}{In all the experiments, where not otherwise stated we have a uniform path length, therefore  $d_i = \bar{d} = d_{\texttt{max}}$, and vary the number of paths.
}

\subsection{Consistent routing}\label{sec:exp_cons_rout}
We analyze the tightness of the upper bound in Theorems~\ref{th:exp_max} and \ref{th:bound_consistent_routing}
under consistent routing.
\color{black}{In Figure \ref{fig:A} the upper bounds (UB) computed as in Theorems \ref{th:exp_max} and \ref{th:bound_consistent_routing} are shown together with a lower bound (LB) obtained by placing monitoring endpoints as in Section~\ref{subsubsec:Design - consistent routing}. 
%
%
We vary the number of paths while fixing the {\color{black} average path length} $ \bar{d} =d_{\texttt{max}}= 12$.}

\color{black}{Notice that the upper bounds given by Theorems \ref{th:exp_max} and \ref{th:bound_consistent_routing} for $d_\texttt{max}=12$, are the same for $m=2,\ 3$, that is when $\min \{ d_i; 2^{m-1}\} = \min \{ d_i; 2\cdot (m-1)\}$, and for $m \geq 7$, that is the threshold above which $\min \{ d_i; 2 \cdot (m-1)\}=d_i = 12$. This result highlights how consistent routing reduces the maximum number of identifiable nodes as far as $\bar{d}$ is not too small. \\
The figure also shows the identifiability  of the half-grid topology, (see Figures \ref{fig:half_grid} and \ref{fig:half_grid_plus}). Notice that, as we pointed out in Section \ref{subsubsec:Design - consistent routing}, the bound on the number of identifiable nodes under the assumption of consistent routing (Theorem \ref{th:bound_consistent_routing}) is tight on the half grid topology when $m$ satisfies $\frac{m^2 +3m - 6}{m}\geq d_i$ (that in this example is when $m\geq 10 $) and when $m\leq 4$. The green triangle in the figure represents the topology shown in Figure \ref{fig:bacarozzost}. }
%
%
%
%

In Figure \ref{fig:bound_no_knowledge} we show, for the same network, how the bound of Theorem~\ref{th:bound_consistent_routing} varies with the number of monitoring paths $m$ and the \color{black}{maximum path length $d_{\texttt{max}}$.}
For small values of $d_{\texttt{max}}$ the bound has an almost linear growth with $m$.
For larger values of $d_{\texttt{max}}$ the bound shows two regions: an initial  super-linear growth for small values of $m$, and a linear growth for large values of $m$.
The figure also shows that while the number of paths $m$ has a major impact on the number of
identifiable nodes, the length of the monitoring paths has a significant impact only when $d_{\texttt{max}}$ is small, and  diminishing impact otherwise.

\begin{figure}[]
    \centering
    \begin{minipage}{.48\columnwidth}
        \centering
        \includegraphics[height=1.08\linewidth, angle=-90]{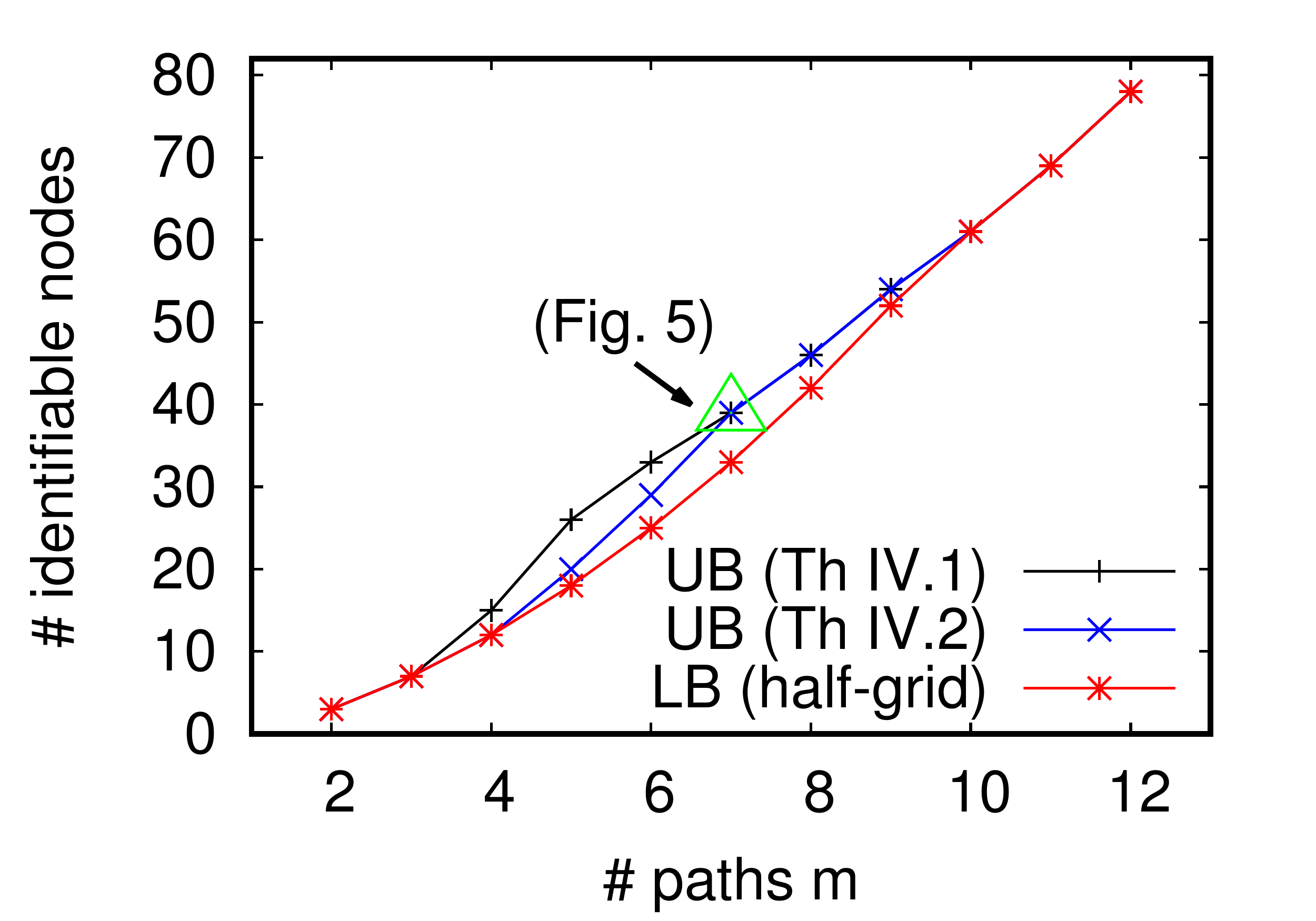}
        \caption{Bounds of Th. \ref{th:exp_max} and Th. \ref{th:bound_consistent_routing}, and LB for      
        $n=78$, varying $m$, and $d_{\texttt{max}}=12$.}
        \label{fig:A}
           \vspace{-.2cm}
    \end{minipage}%
     \hfill\hspace{.2cm}
    \begin{minipage}{0.48\columnwidth}
        \centering
        \includegraphics[height=1.08\linewidth, angle=-90]{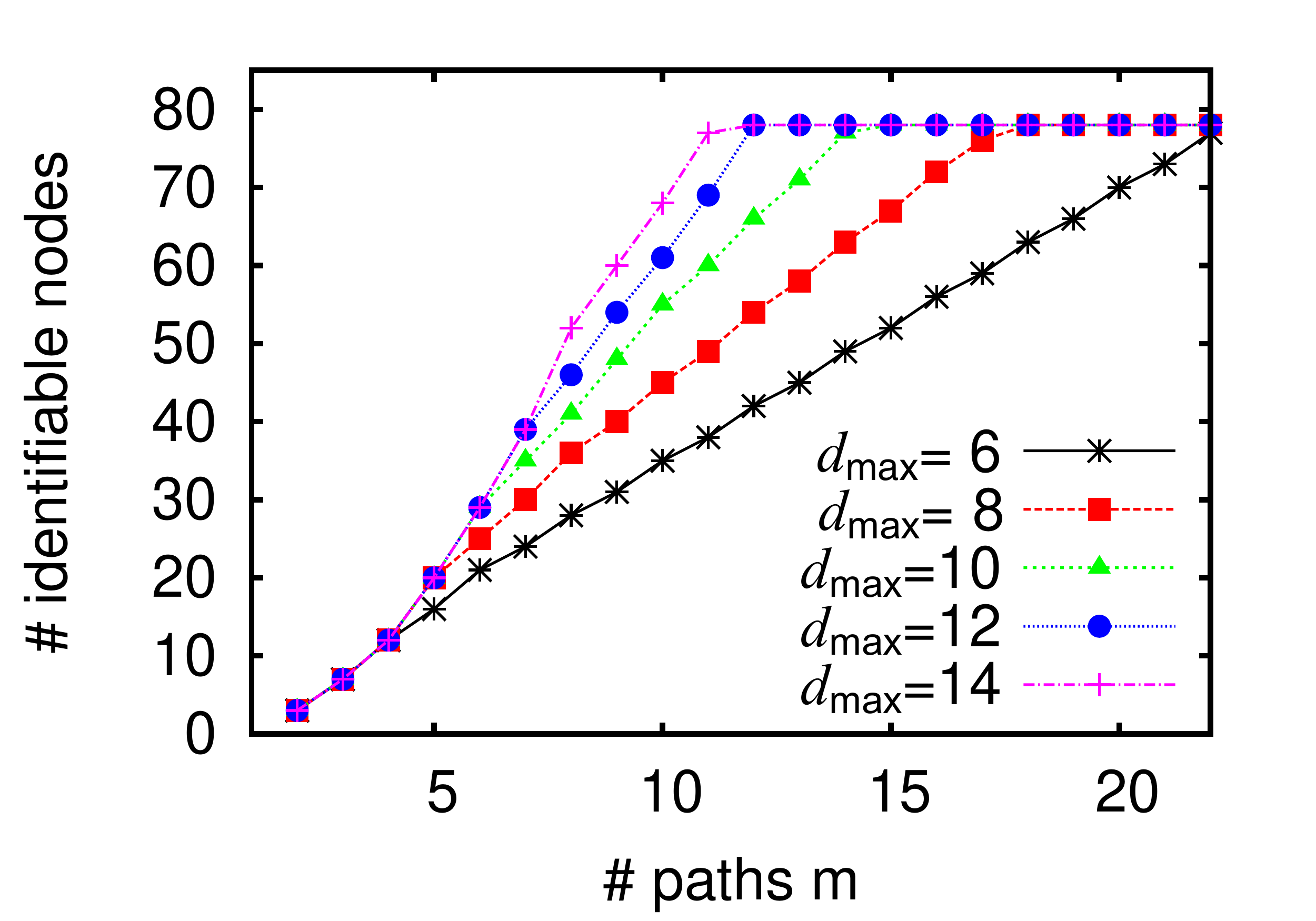}
        \caption{Bound of Th.~\ref{th:bound_consistent_routing}, for $n=78$, varying $m$ and different values of $d_{\texttt{max}}$.}
        \label{fig:bound_no_knowledge}
        \vspace{-.2cm}
    \end{minipage}
    \vspace{-.3cm}
\end{figure}

\subsection{Single-server monitoring} \label{sec:single_server_results}
{\color{black}
Figure \ref{fig:C} shows two scenarios with different topologies.
The first scenario is a {\color{black}network of 95 nodes}, connected as a full binary tree with 48 leaves, with $d_{\texttt{max}}= 7$ (in number of nodes). The figure shows the increase of the optimal number of identifiable nodes by varying the number of monitoring paths having a common endpoint.
By using 48 paths {\color{black}{each of length $d_i=7$}} from the leaves to the root, it is possible to identify all the  network nodes.
Notice that the optimal number of identifiable nodes that can be obtained by varying server location and placement of clients coincides with the  value of the bound of Theorem \ref{th:bound_with_tree}.
Lemma \ref{le:optimal_tree} shows in fact that for such a topology, the optimal identifiability is achieved by placing the endpoints of the $m$ different monitoring paths one in the root of the tree and the others in a way that the paths form a full binary tree topology. 

For the second scenario we consider a stricter limit on the path length: $d_i = d^*= d_{\texttt{max}} = 3$.
We consider a tree topology where a common root is connected to 24 binary trees of depth 1, for a total of 48 leaves, and 73 nodes (this topology is constructed extending the case of Figure \ref{fig:full_binary_tree_with_7_leaves}(b) to connect 24 subtrees).
In this topology, by using 48 paths {\color{black}{each of length $d_i = 3$}}, from the leaves to the root, it is possible to identify all the  nodes. 
{\color{black}Also in this case, the bound of Theorem \ref{th:bound_with_tree} is tight, and coincides with the optimal,
which is a tree of paths where $\lceil m/2 \rceil$  binary trees of depth 1 descend from a common root.
}
The Figure also shows that  the values of the bound obtained with Theorem \ref{th:bound_consistent_routing}, are considerably looser than those of Theorem \ref{th:bound_with_tree}. This is because the former considers any $m$ paths generated with any consistent routing scheme, while the latter considers the additional requirement that the monitoring paths share a unique common endpoint.


\begin{figure}
\subfigure
{
 \includegraphics[width=0.35\columnwidth, angle=-90]{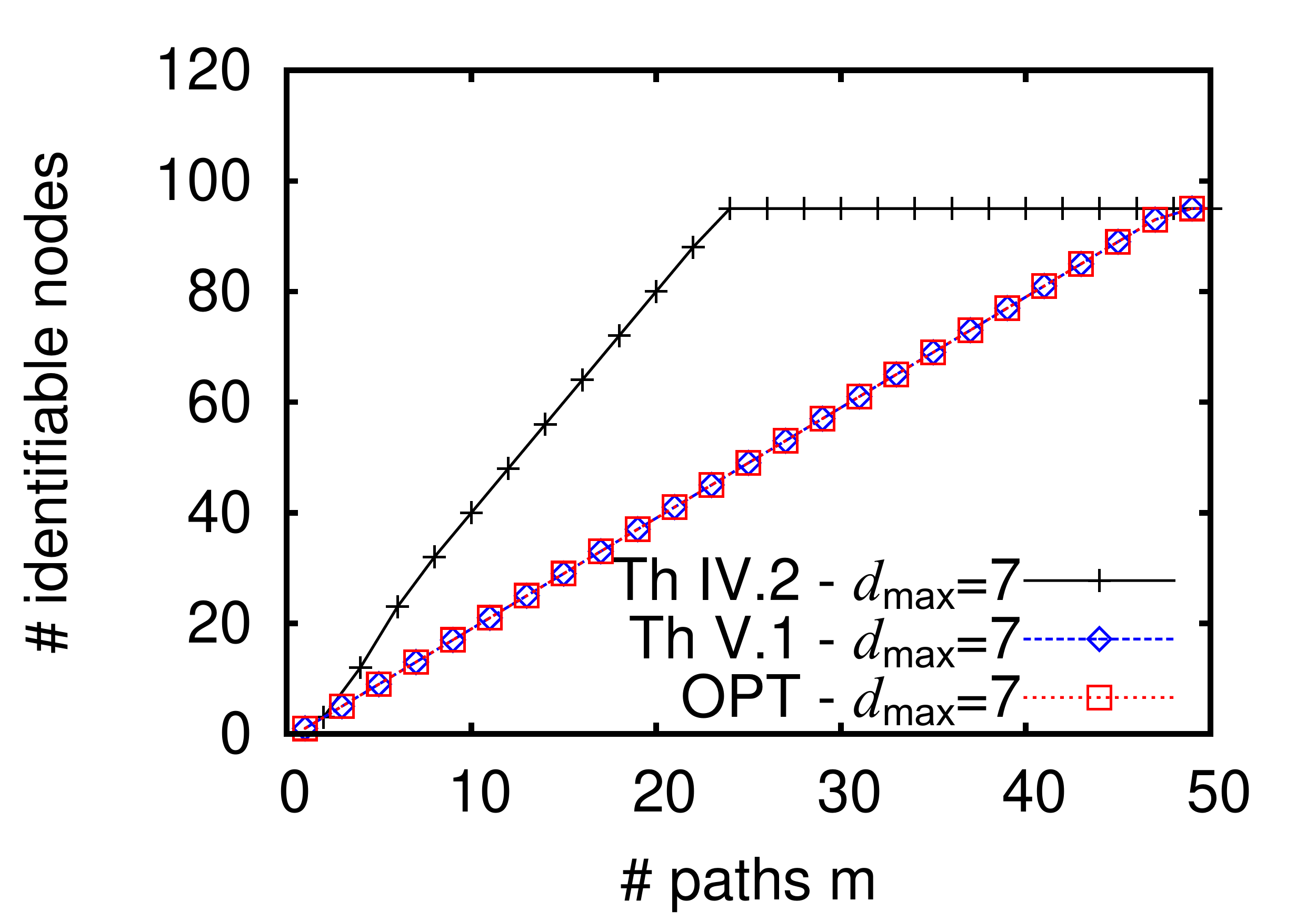}
}\hspace{-.58cm}
\subfigure{
\includegraphics[width=0.35\columnwidth, angle=-90]{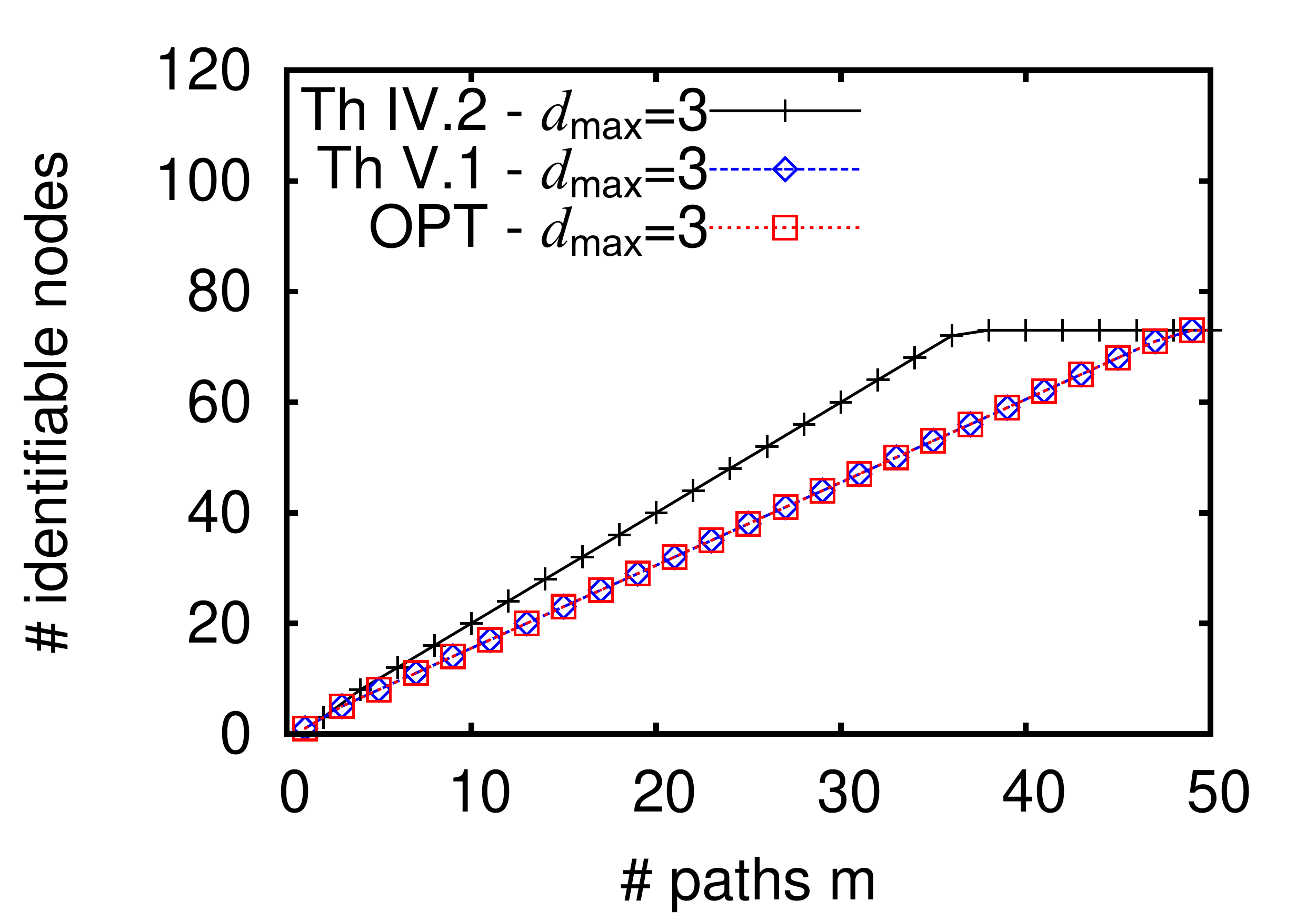}
 }
\center{\small{(a)} \hspace{4cm} \small{(b)}}
 \caption{Bound for single-server monitoring
(Th.~\ref{th:bound_with_tree}) - full binary tree for $d_{\texttt{max}}=7$ (a), multiple binary trees with a single root for $d_{\texttt{max}}=3$ (b).}
\label{fig:C} 
      \end{figure} 

%
%
%


\begin{figure}[tb]
    \centering
    \begin{minipage}{.48\columnwidth}
        \centering
        \includegraphics[height=1.08\linewidth, angle=-90]{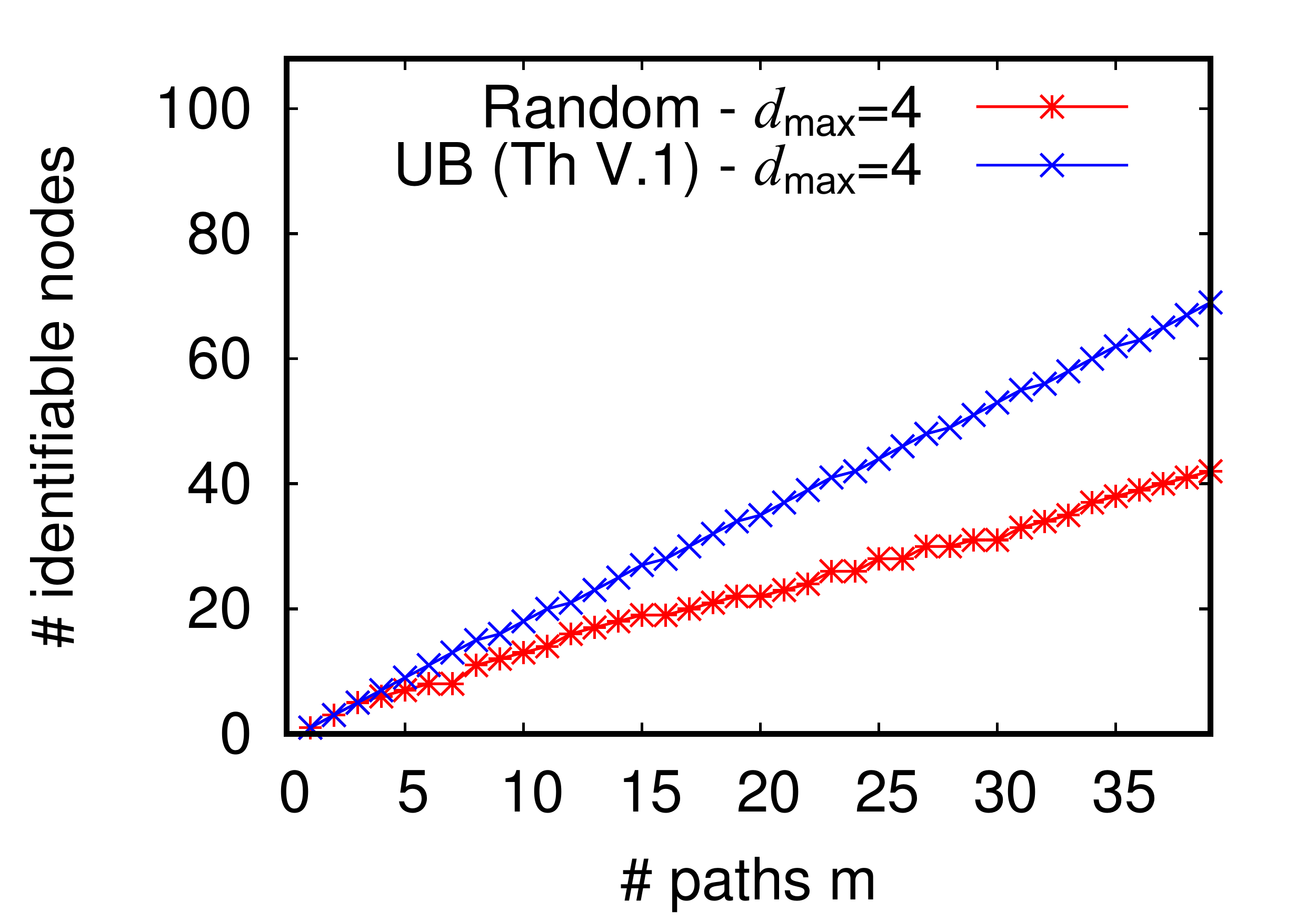}
        \caption{UB of Th~\ref{th:bound_with_tree} and LB of random placement, AT\&T  topology, $S=1$, $d_{\texttt{max}}=4$, varying $m$. }
        \label{fig:tree_monitoring}
    \end{minipage}%
    \hfill\hspace{.2cm}
    \begin{minipage}{0.48\columnwidth}
        \centering
        \includegraphics[height=1.08\linewidth, angle=-90]{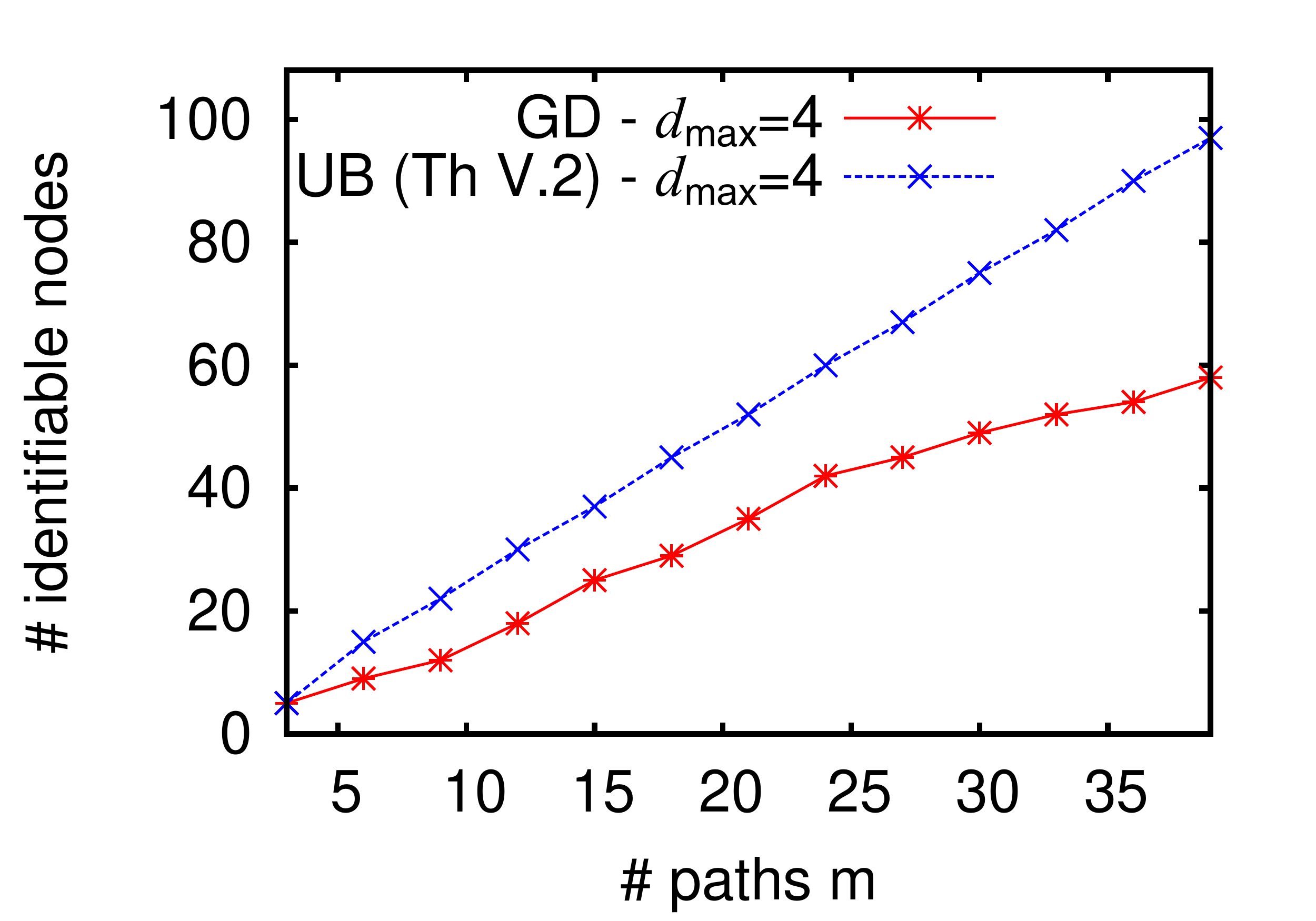}
        \caption{UB of Th. \ref{th:bound_with_services} and LB of GD \cite{usICDCS16}, AT\&T topology, $d_{\texttt{max}}=4$, varying $m$ and $S$ (3 clients per server). }
        \label{fig:bound_icdcs_m}
    \end{minipage}
\end{figure}

{\color{black}
Figure \ref{fig:tree_monitoring} illustrates
an experiment on
an existing AT\&T topology mapped with  Rocketfuel \cite{POP:02sigcom}, with 108 nodes and 141 links.
We consider a single server and a random placement of $m$ clients. We obtained a lower bound,
called "Random",
by running {\color{black}$100$} trials for each value of $m$ and using the largest number of nodes identified by client-server paths under consistent shortest path routing.  
We then compare this value to the upper bound given by Theorem~\ref{th:bound_with_tree}.
As the figure shows, the lower bound is not as close to the upper bound as in the case of the engineered topologies in Figure~\ref{fig:C}.
}
\subsection{Multi-server monitoring}
In these experiments we also consider
the AT\&T topology with 108 nodes and 141 links.
We  analyze the case of multiple servers, each serving  3 clients.
We increase the number of servers and vary the number of clients accordingly.
Figure \ref{fig:bound_icdcs_m} shows the upper bound of Theorem~\ref{th:bound_with_services}  compared to a lower bound obtained with the heuristic 
{\em greedy distinguishability maximization (GD)\footnote{Note that GD requires client locations to be predetermined. Here we place clients on some of the 78 dangling nodes, and then use GD to place servers. }}
proposed in \cite{usICDCS16}. Notice that this heuristic  finds a good approximation to the optimal number of identifiable nodes in this problem setting.
Although the heuristic only optimizes server placement, while Theorem~\ref{th:bound_with_services} considers the optimal placement of servers as well as clients, the experiment shows a good approximation of the upper and the lower bounds when $m$ is sufficiently small.

%
%

%

 {\color{black}
Figure
\ref{fig:alberi_uneven_distribution}
 shows a comparison of the three bounds of Theorems
\ref{th:bound_consistent_routing} (arbitrary sources/destinations),
 \ref{th:bound_with_services} (fixed client assignment) and \ref{th:undistinguished_clients_rev} (flexible client assignment) for the same topology, where
 we vary the numbers of services and clients, with an \color{black}{average} path length $\bar{d}=20$.
In the figure, the bound of Theorem \ref{th:bound_consistent_routing}
represents the special case of one client per server.
{\color{black}
We calculate the bound of Theorem~\ref{th:bound_with_services} assuming first a uniform assignment of clients to servers, as shown in Figure     
\ref{fig:alberi_uneven_distribution}(a), and then an uneven assignment, which is shown in Figure \ref{fig:alberi_uneven_distribution}(b).
For uneven assignment: in the case of two servers, one server is assigned to  4/5 of the clients, while the other to the rest 1/5; 
in the case of three servers, one server is assigned to 3/4 of the clients, the second server to 3/16, and the third server to 1/16.}
{\color{black}It can be seen that in the case of even assignment of clients to servers, the two bounds of Theorems
 \ref{th:bound_with_services} (fixed client assignment) and \ref{th:undistinguished_clients_rev} (flexible client assignment) give the same values. By contrast, in the case of
uneven distribution of clients to servers, Theorem~\ref{th:bound_with_services} gives a considerably smaller bound than Theorem~\ref{th:undistinguished_clients_rev}, which assumes an even distribution of clients to servers.



\begin{figure}[t]
\subfigure
{
 \includegraphics[width=0.35\columnwidth, angle=-90]{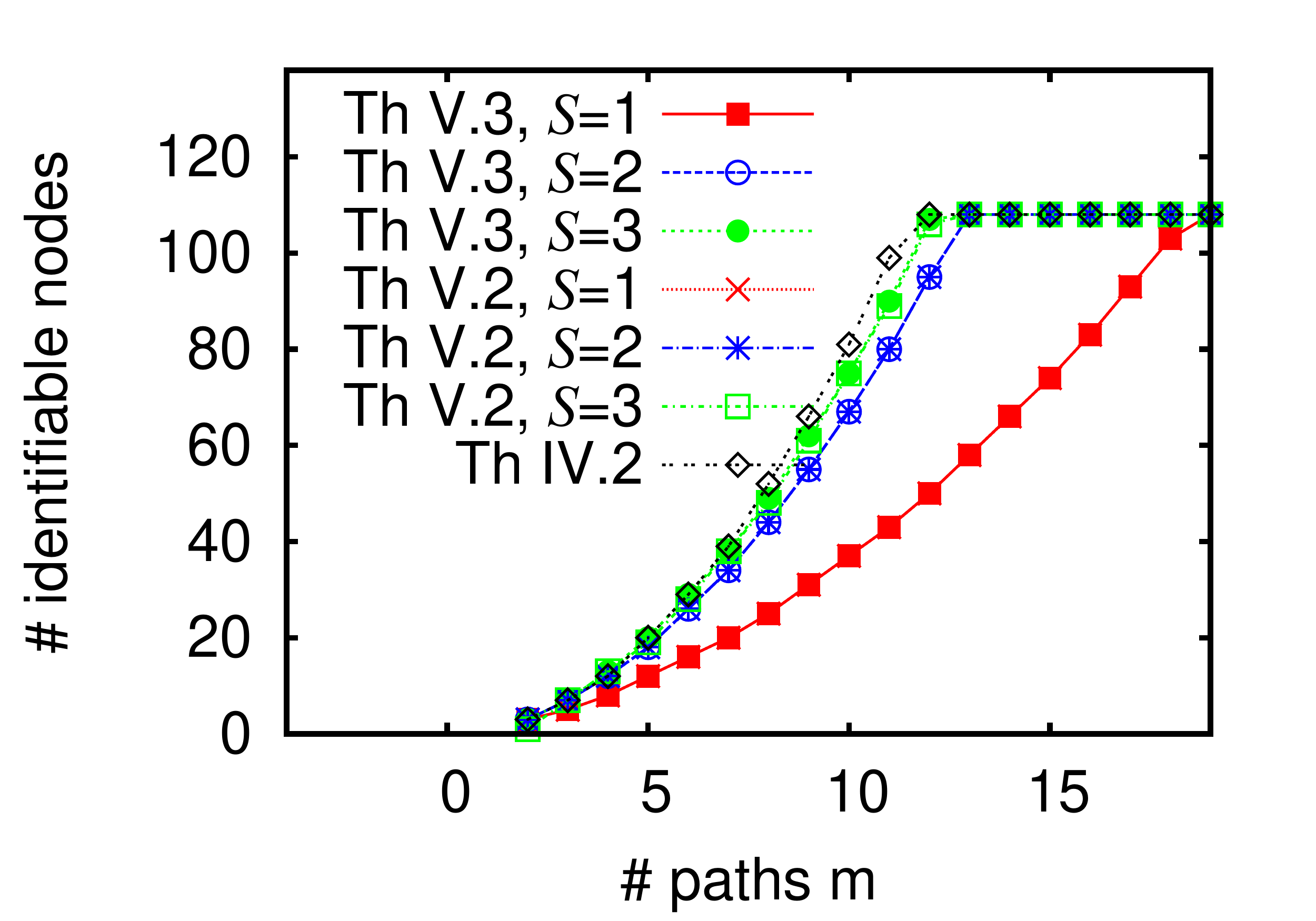}
}
\hspace{-.72cm}
\subfigure{
\includegraphics[width=0.35\columnwidth, angle=-90]{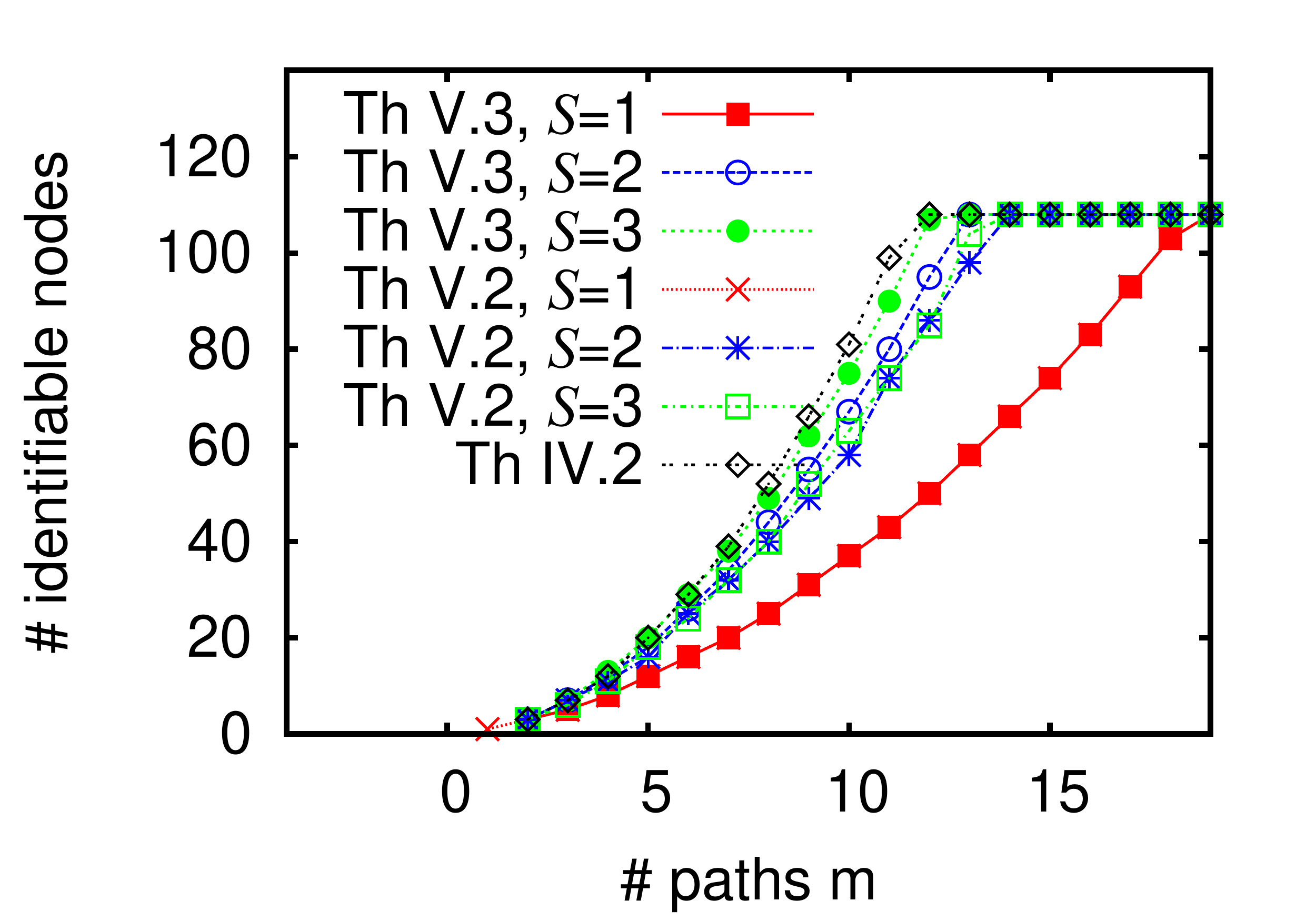}
 }
\center{\small{(a)} \hspace{4cm} \small{(b)}}
 \caption{UB of Theorems \ref{th:bound_consistent_routing}, \ref{th:bound_with_services}  and \ref{th:undistinguished_clients_rev}, AT\&T topology, $d_{\texttt{max}}$=20, $S$ servers, $m$ clients -  even (a) and uneven (b) distribution of clients to servers.}
\label{fig:alberi_uneven_distribution} 
      \end{figure}

%

\subsection{Data-center network monitoring}
\label{sec:fattree_experiments}

The identifiability of a fat-tree depends on the topology parameters $k$, $\ell$ and  the number of paths $m$.
In the following, we show that  only with a high number of layers, routing half-consistency plays a role in optimizing identifiability.
To this purpose Figure \ref{fig:consistency_d} evidences the difference in the upper bounds of the case of a more flexible half-consistent routing scheme considered in Theorem \ref{th:bound_hcr}, with respect to the case of
consistent routing considered in Theorem \ref{th:bound_consistent_routing}.
\begin{figure}[t]
    \centering
    \begin{minipage}{.48\columnwidth}
        \centering
        \includegraphics[height=1.08\linewidth, angle=-90]{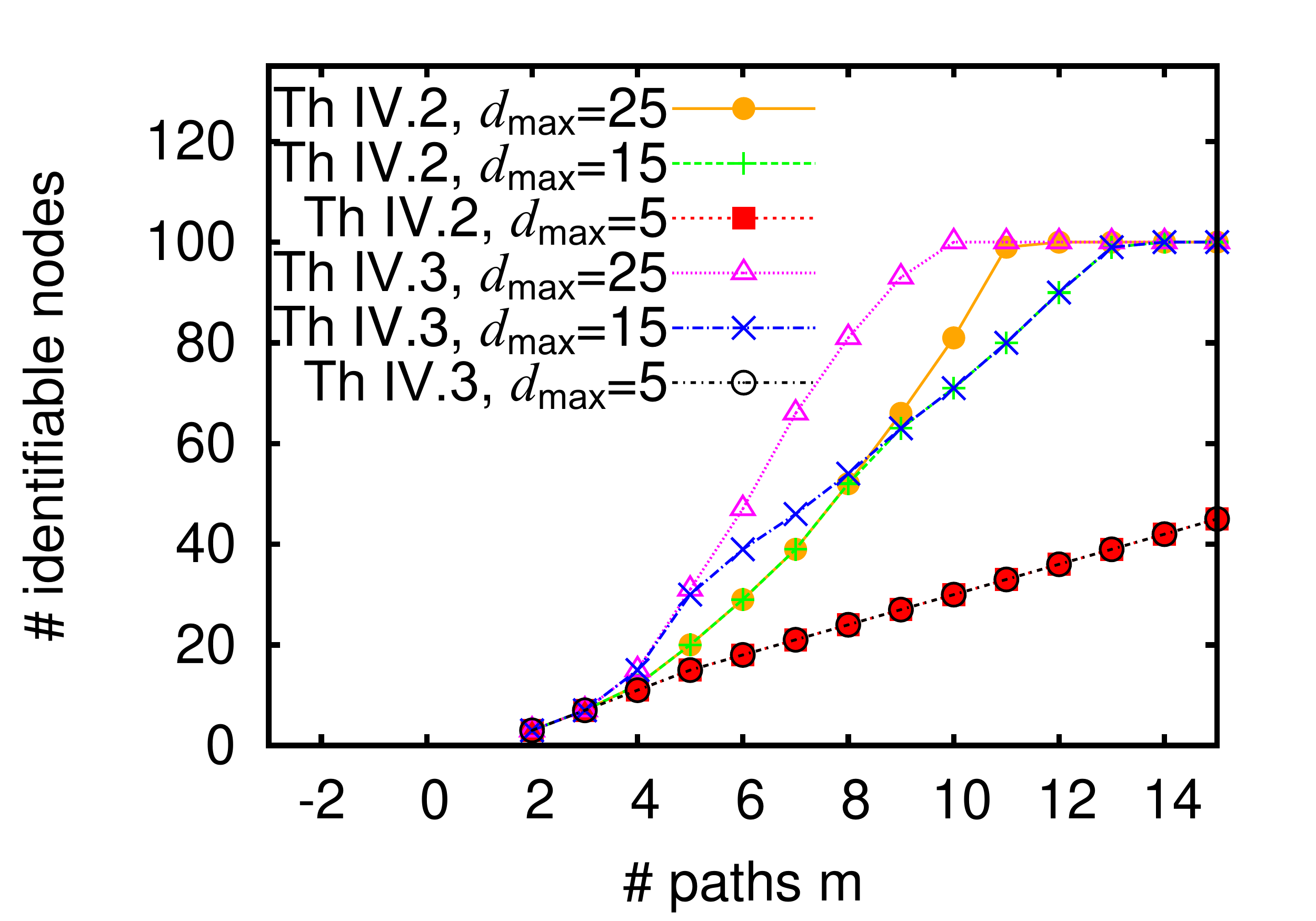}
        \caption{UB of Theorems \ref{th:bound_consistent_routing} 
        and \ref{th:bound_hcr} 
         - $100$ nodes, varying  $d_\texttt{max}$. }
        \label{fig:consistency_d}
        \vspace{-.3cm}
    \end{minipage}%
        \hfill\hspace{.2cm}
    \begin{minipage}{0.48\columnwidth}
        \centering
        \includegraphics[height=1.08\linewidth, angle=-90]{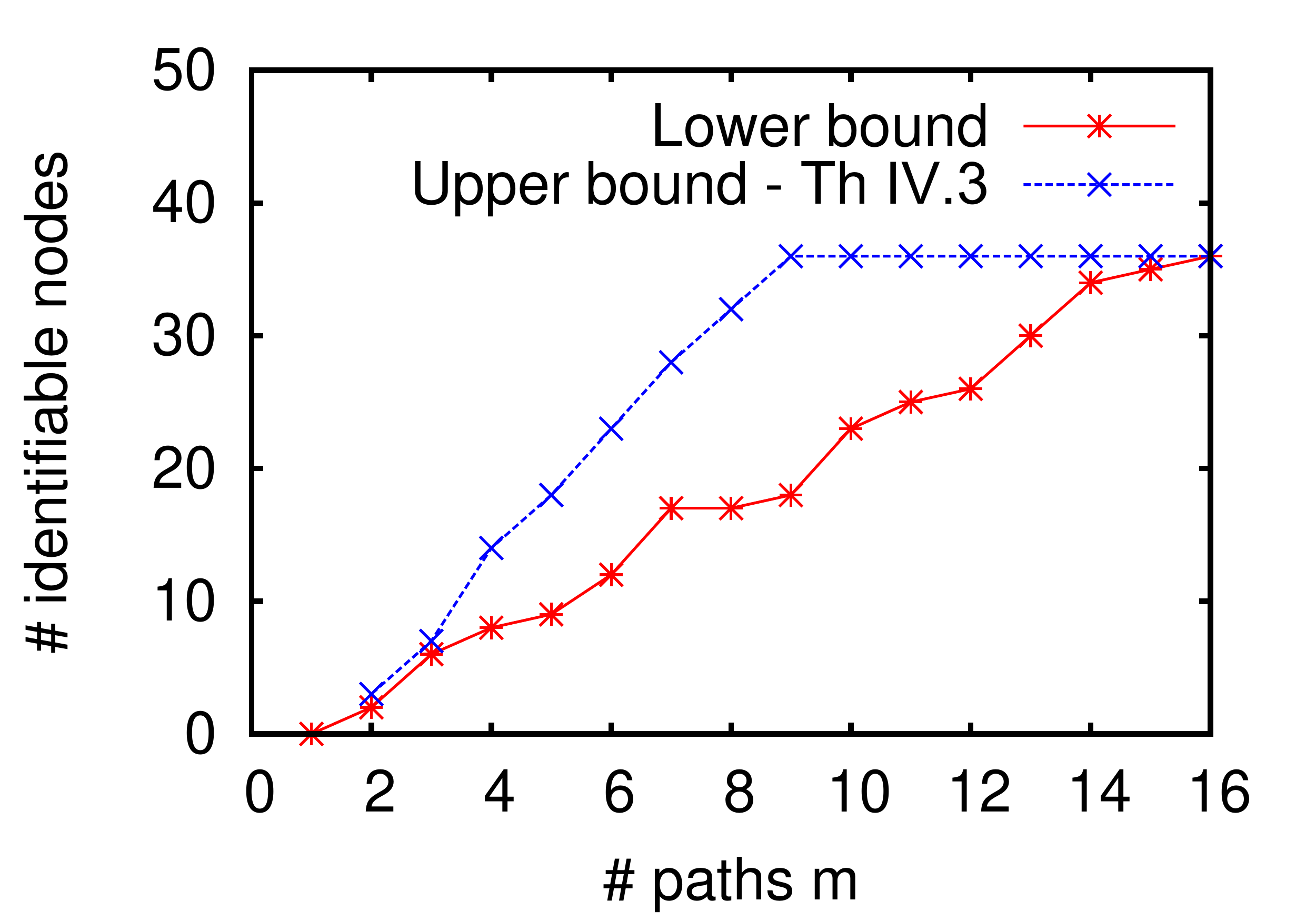}
        \caption{UB of Th.~\ref{th:bound_hcr} 
        and LB for a $4$-ary fat-tree with 3 layers.}
        \label{fig:comparison}
    \vspace{-.3cm}
    \end{minipage}
       \vspace{-.2cm}
\end{figure}
It considers a general network with 100 nodes. The difference of identifiability  between consistent and half-consistent routing grows by increasing  the maximum length of monitoring paths as {\color{black}{$d_{\texttt{max}} =5, 15, 25$}}, which in a fat-tree would correspond to values of $\ell=2, 7, 12$.
{\color{black}In conclusion, we can affirm that for topologies with very short diameter, such as in the case of fat-trees, 
having a higher degree of freedom in routing (half-consistent routing) has a significant impact on the identifiability of the network only for a high number of layers.}

%

We now consider the case in which monitoring is performed along paths  between hosts of a data-center network with a fat-tree topology and the routing scheme proposed in \cite{Vahdat-fattree}.
In Figure \ref{fig:comparison} we
consider a $4$-ary fat-tree with three layers
and study the tightness of the bound of Theorem \ref{th:bound_hcr}.
Due to the high complexity in selecting the optimal monitoring paths, we resort to {\color{black} an empirical selection of  paths that give us a lower bound on the  number of identifiable nodes}. 
It is interesting to see that  with only 16 monitoring paths we are able to monitor all the $36$ nodes of this fat-tree.

\section{Conclusion}\label{sec:Conclusion}
We consider the problem of  maximizing the number of nodes whose states can be identified  via Boolean network tomography.
We formulate the problem in terms of graph-based group testing and exploit the combinatorial structure of the testing matrix to derive upper bounds on the number of identifiable nodes under different assumptions, including: arbitrary routing, consistent routing, monitoring through client-server paths with one or multiple servers (and even or uneven distribution of clients), and half-consistent routing.
These  bounds show the fundamental limits of Boolean network tomography in both real and engineered networks.
We use the bound analysis to derive  
insights for the design of topologies with high identifiability in different network scenarios.
Through analysis and experiments we  evaluate the tightness of the bounds and demonstrate the  efficacy of the design insights for engineered as well as real networks.

\bibliographystyle{IEEEtran}
\bibliography{IEEEabrv,mybib}

\setlength{\intextsep}{0pt}
\vspace{-1.2cm}
\begin{IEEEbiography}[{\includegraphics[width=1in,height=1.25in,
clip,keepaspectratio]{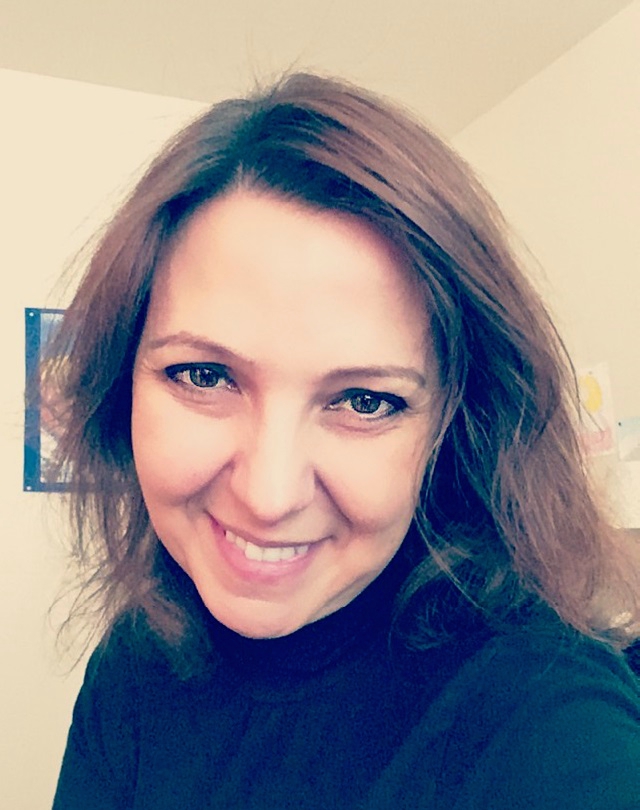}}]{Novella Bartolini (SM '16)}
graduated with honors in 1997 and received her PhD in
computer engineering in 2001 from the University of Rome, Italy. She is now
associate professor at Sapienza University of Rome. She was visiting professor at Penn State University for three years from 2014 to 2017.
Previously, she was  visiting scholar at the
 University of Texas at Dallas for one year in 2000 and research assistant at the University of Rome 
'Tor Vergata' in 2001-2002. She was program chair and program committee member of several international 
conferences. She
has served on the editorial board of Elsevier Computer Networks and ACM/Springer
Wireless Networks.
Her research interests lie in the area of wireless networks and network management. 
\end{IEEEbiography}

\vspace{-1.cm}
\begin{IEEEbiography}[{\includegraphics[width=1in,height=1.25in,
clip,keepaspectratio]{./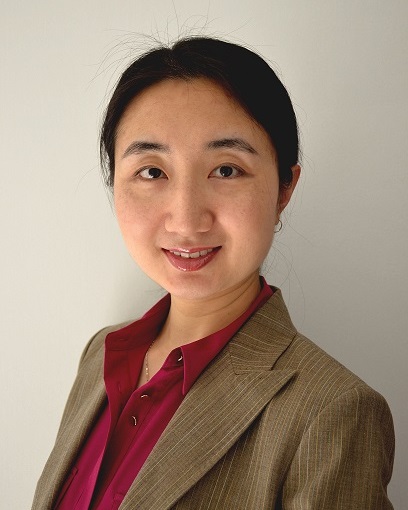}}]{Ting He (SM '13)} received the B.S. degree in computer science from Peking University, China, in 2003 and the Ph.D. degree in electrical and computer engineering from Cornell University, Ithaca, NY, in 2007.
Ting is an Associate Professor in the School of Electrical Engineering and Computer Science at Pennsylvania State University, University Park, PA. Between 2007 and 2016, she was a Research Staff Member in the Network Analytics Research Group at the IBM T.J. Watson Research Center, Yorktown Heights, NY. Her work is in the broad areas of network modeling and optimization, statistical inference, and information theory.
\end{IEEEbiography}

\vspace{-1.cm}
\begin{IEEEbiography}[{
\includegraphics[width=1in,height=1.25in,
clip,keepaspectratio]{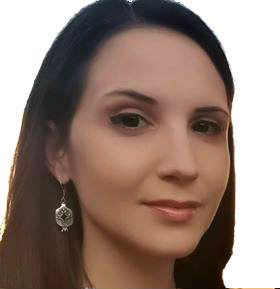}}]{Viviana Arrigoni}
 received the B.Sc. degree in Mathematics and the M.Sc. degree in
Computer Science from Sapienza, University of Rome, Italy. She is a Ph.D. student at the Department of Computer Science of the same university. Her research interests comprise computational Linear Algebra, Network topologies and Information Theory. 
\end{IEEEbiography}

\vspace{-1.cm}
\begin{IEEEbiography}[{\includegraphics[width=1in,height=1.25in,
clip,keepaspectratio]{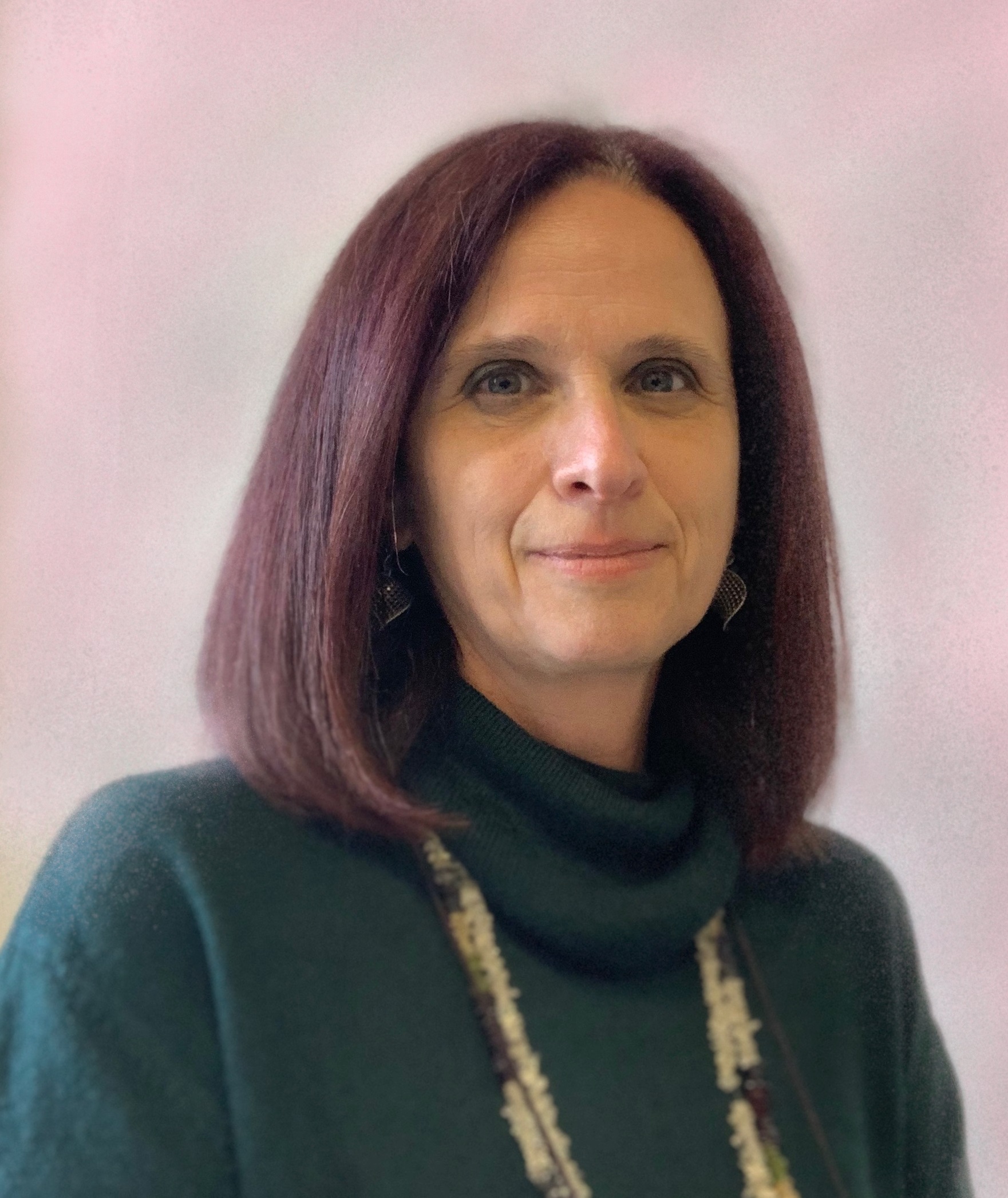}}]{Annalisa Massini}
received the degree in Mathematics and her Ph.D. in
Computer Science at Sapienza University of Rome, Italy, in 1989 and 1993 respectively.
Since 2001 she is associate professor at the Department of Computer Science of
Sapienza University of Rome.
Her research interests include hybrid systems, sensor
networks, networks topologies.
\end{IEEEbiography}
\vspace{-1.0cm}
\begin{IEEEbiography}[{\includegraphics[width=1in,height=1.25in,
clip,keepaspectratio]{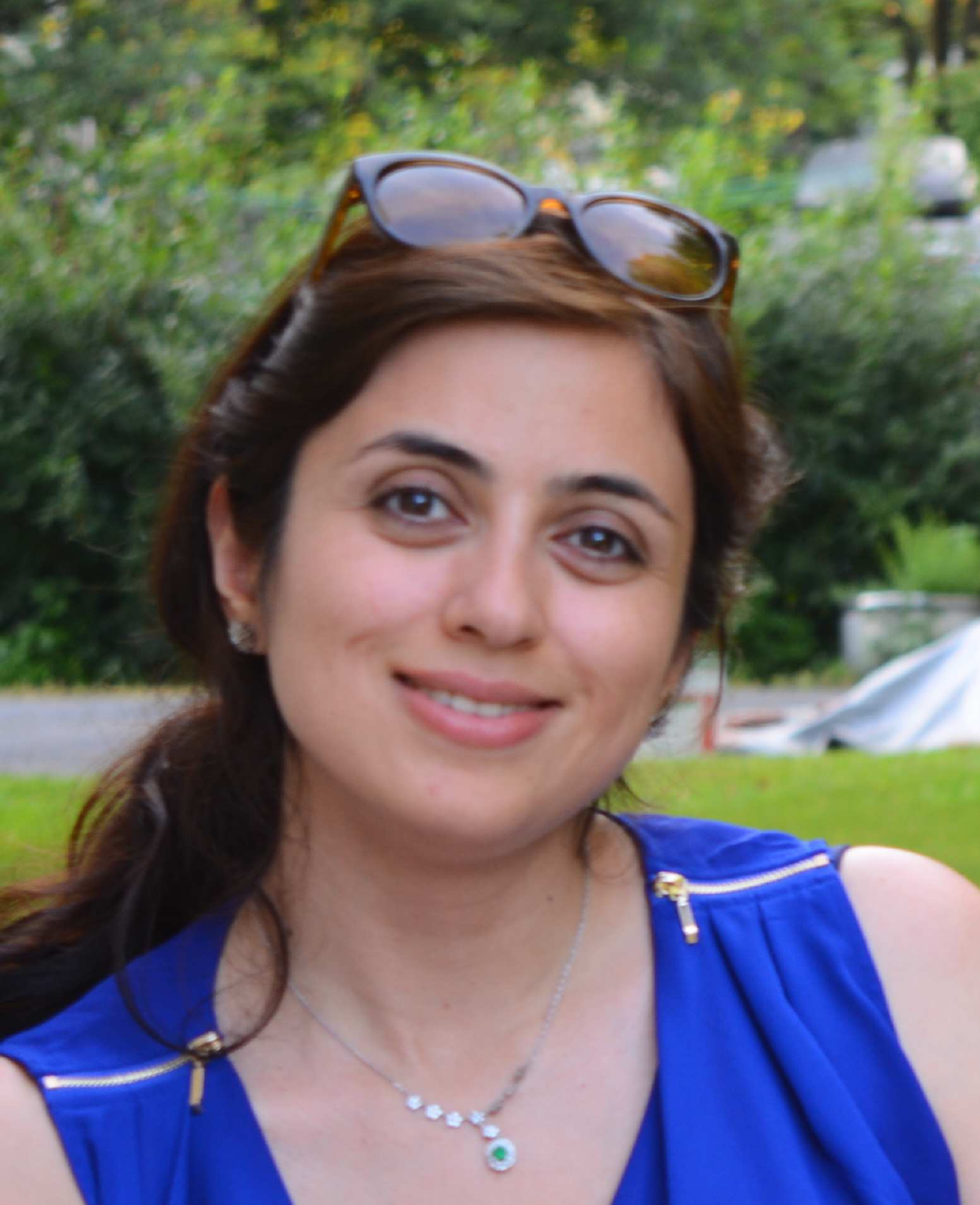}}]{Hana Khamfroush}
received her PhD with honors in 2014 in
telecommunications engineering from the University of Porto, Portugal. She then became 
a research associate at the computer science department of Penn State University. Since 2017 she is assistant professor at the computer science department of the College of Engineering at University of Kentucky.
She was named a rising star in EECS by MIT in 2015. Hana has served as TPC member and reviewer of several international 
conferences and Journals. She is currently the social media co-chair of IEEE N2Women community. Her research interests lie in the area of wireless networks and network management. 
\end{IEEEbiography}

\end{document}